\newcommand{\calgebra}{\mathfrak{A}}
\newcommand{\nalgebra}{\mathfrak{M}}
\newcommand{\nanalytic}{\mathfrak{M}_\mathcal{A}}
\newcommand{\ex}{\mathfrak{E}}
\newcommand{\hilbert}{\mathcal{H}}
\newcommand{\iu}{i\mkern1mu}
\newcommand{\Dom}[1]{\mathcal{D}\left(#1\right)}
\newcommand{\ie}{\textit{i}.\textit{e}.}
\newcommand{\eg}{\textit{e}.\textit{g}.}
\newcommand{\ip}[2]{\left\langle #1 , #2 \right \rangle}
\newcommand*{\defeq}{\mathrel{\rlap{%
			\raisebox{0.3ex}{$\m@th\cdot$}}%
		\raisebox{-0.3ex}{$\m@th\cdot$}}%
	=}
\let\Re\relax
\newcommand{\Re}[1]{\operatorname{Re}\left(#1\right)}
\let\Im\relax
\newcommand{\Im}[1]{\operatorname{Im}\left(#1\right)}
\newcommand{\sgn}[1]{\operatorname{sgn}\left(#1\right)}
\theoremstyle{definition}
\newtheorem{definition}{Definition}[section]
\newtheorem{remark}[definition]{Remark}
\newtheorem{example}[definition]{Example}
\newtheorem{notation}[definition]{Notation}
\theoremstyle{plain}
\newtheorem{theorem}[definition]{Theorem}
\newtheorem{lemma}[definition]{Lemma}
\newtheorem{corollary}[definition]{Corollary}
\newtheorem{proposition}[definition]{Proposition}
\begin{document}
	
	\title{Perturbations of KMS States and Noncommutative $L_p$-spaces}
	\thanks{This study was financed in part by the Coordenação de Aperfeiçoamento de Pessoal de Nível Superior - Brasil (CAPES) - Finance Code 001.}
	\author{Ricardo Correa da Silva}
	\address{Institute of Physics - University of S\~ao Paulo}
	\email{ricardo.correa.silva@usp.br}

	\maketitle
	
	\begin{abstract}
		We extend the theory of perturbations of KMS states to a class of unbounded perturbations using noncommutative $L_p$-spaces. We also prove certain stability of the domain of the Modular Operator associated with a \mbox{$\|\cdot\|_p$}-continuous state. This allows us to define an analytic multiple-time KMS condition and to obtain its analyticity together with some bounds to its norm.
		
		Keyword: KMS States, Noncommutative $L_p$ Spaces, Unbounded Perturbations, Dyson Series.
	\end{abstract}

	\section*{Acknowledgements}
	
	We are grateful to Christian J\"akel and Jo\~ao Barata for all fruitful discussions during the development of this work. This study was financed in part by the Coordenação de Aperfeiçoamento de Pessoal de Nível Superior - Brasil (CAPES) - Finance Code 001.
	
	\section{Introduction}
	\label{intro}
	
	The problem in obtaining a KMS state for a perturbed Hamiltonian is very important in several areas of Physics. This problem was solved by Araki in \cite{Araki73} when the perturbation is bounded, but, since it is often not the case for perturbations of physical interest, a similar result including unbounded perturbations is desired.
	
	The ``expansionals'' plays a very important role in Araki's perturbation theory. In fact, Araki named the operator in Definition \ref{DysonSeries} expansional because if was so called by Fujiwara. Its importance relies on its relation with the Dyson's series.
	
	One of the most interesting properties for us is that \begin{equation}
	\label{expansional_exponential}
	e^{\iu t(A+B)}e^{-\iu tB}\xi= Exp_l\left(\int_0^t;e^{isB}Ae^{-isB} ds\right)\xi \, , \qquad \xi \in \hilbert,\end{equation}
	where in $A$ and $B$ are bounded operators in the von Neumann algebra $\nalgebra$. This formula can be extended for unbounded operators $A$ and $B$. In fact, this formula holds for vectors $\xi \in G(B)$, the set of geometric vectors with respect to $B$, which are defined as the vectors with the property that there exists a positive constant $M_\xi$ such that $\|B^n\xi\|\leq M_\xi^n \|\xi\|$ if the operators $A$ and $B$ are satisfies $A G(B)\subset G(B)$ and there exists $k\in\mathbb{R}$ such that, for each $t\in\mathbb{R}$, $\tau_t(A)=\left(e^{itB}Ae^{-itB}\right)$ is a bounded operator with $\|\tau_t(A)\|\leq e^{k|t|}$. It is important to notice the physical interpretation that the operator in the left-hand side takes the vector back by the dynamics defined by $B$ and evolves it by the dynamics defined by $A+B$. It is also possible to obtain a complex version of this formula, as can be seen in \cite[Section 1.15 - 1.17]{sakai91}. It is important to stress that we are most interested in the case the imaginary part of the exponent equals $\frac{1}{2}$, since this is the case in which we obtain the vector that represents the KMS state to the perturbed dynamics as can be checked in \cite[Equation 1.10]{Araki73}.
	
	After Araki's original work, some improvements have been done. One of them, due to Sakai, which extends the theory for bounded perturbations bounded from below. Sakai also developed an approach to the problem using derivations in $C^\ast$-algebras. Finally, in \cite{Derezinski03}, an extension of the theory is presented where unbounded perturbations may be included as \hyphenation{per-tur-ba-tions}perturbations.
	
	In this paper, we will present a different approach to Araki's Perturbation Theory using noncommutative $L_p$-spaces and which includes unbounded perturbations. It is important to notice that noncommutative $L_p$-spaces have been successfully used in Linear Response and Constructive Quantum Field Theory, see \cite{nittis2016} and \cite{Bru2017}.
	
	The main results are Theorem \ref{TR0}, Theorem \ref{TR1} and Corollary \ref{CR1}.
	
	\section{Background}
	\label{secBackground}
	The aim of this section is to fix the notation and to present the basis needed to better understand our results. For the reader interested in more details we refer to \cite{RCS18.2}.
	
	In this entire work we denote by $\calgebra$ a $C^\ast$-algebra and by $\nalgebra$ a von Neumann algebra. In addition, $\hilbert$ denotes a Hilbert space and often we will suppose $\calgebra, \nalgebra \subset B(\hilbert)$, \ie, the algebras will be thought as concrete ones. 
	
	\subsection{Modular Theory}
	\label{subsecModular}
	
	This section is devoted to present the definitions and main properties of the modular operator and the modular conjugation. This topic is a standard subject and can be found in classical books \eg \, \cite{Bratteli1}, \cite{KR83} and \cite{Takesaki2002} or even in \cite{Araki74}.
	
	Let us now define two operators in $\nalgebra$, which will give rise to the operators that give name to this section. For the cyclic and separating vector $\Omega$, define the anti-linear operators:
	\begin{center}
		\begin{minipage}[c]{5cm}
			$$\begin{aligned}
			S_0 :	&\{A\Omega \in \hilbert \ | A\in \nalgebra\} 		&\to		& \hspace{0.4cm} \hilbert\\
			&\hspace{1.2cm} A\Omega 								&\mapsto	&\hspace{0.2cm} A^\ast\Omega
			\end{aligned}$$
		\end{minipage}\hspace{0.3cm},\hspace{0.4cm}
		\begin{minipage}[c]{5cm}
			$$\begin{aligned}
			F_0 :	& \{A^\prime\Omega \in \hilbert \ | A^\prime\in \nalgebra^\prime\} 		&\to		& \hspace{0.4cm} \hilbert \\
			& \hspace{1.2cm} A^\prime\Omega 								&\mapsto	&\hspace{0.2cm}A^{\prime\ast}\Omega
			\end{aligned}$$
		\end{minipage}
	\end{center}
	Note that the domains of the operators are dense subspaces.
	It is a standard result that the operators $S_0$ and $F_0$ are closable operators. Moreover, $S_0^\ast=\overline{F_0}$ and $F_0^\ast=\overline{S_0}$.	We will denote $S=\overline{S_0}$ and $F=\overline{F_0}$.
	
	An important point to stress now is that we omitted the dependence on $\Omega$ to keep the notation clean, but we will mention it in the following.
	
	Moreover, even though $S$ is not a bijection, it is injective and we will write $S^{-1}$ (which is equal to $S$) to denote its inverse over its range. The same holds for $\Delta_\Omega$, which will be defined soon.
	
	\begin{definition}
		We denote by $J_\Omega$ and $\Delta_\Omega$ the unique anti-linear partial isometry and positive operator, respectively, in the polar decomposition of $S$, \ie,   $S=J_\Omega\Delta_\Omega^{\frac{1}{2}}$. $J_\Omega$ is called the modular conjugation and $\Delta_\Omega$ is called the modular operator. 
	\end{definition}
	
	Note that the existence and uniqueness of these operators are stated in the Polar Decomposition Theorem.
	
	Several properties hold for the modular operator, we refer to \cite{Araki74}, \cite{Bratteli1} and \cite{RCS18} for the reader interested in this subject.
	
	One of the most important results in Modular Theory is the Tomita-Takesaki Theorem, which is extremely significant to both Physics and Mathematics. The proof and applications of this theorem can be found in \cite{Takesaki2002} and \cite[Theorem 2.5.14]{Bratteli1}. One of the consequences of this theorem is that, for each fixed $t\in\mathbb{R}$, $A\xmapsto{\tau^\Omega_t}\Delta_\Omega^{\iu t}A\Delta_\Omega^{-\iu t}$ defines an isometry of the algebra. Hence, $\left\{\tau^\Omega_t \right\}_{t\in\mathbb{R}}$ is a one-parameter group of isometries.
	
	\begin{definition}[Modular Automorphism Group]
		Let $\nalgebra$ be a von Neumann algebra with cyclic and separating vector $\Omega$ and let $\Delta_\Omega$ be the associated modular operator. For each $t\in\mathbb{R}$, define the isometry $\tau^\Omega:\nalgebra \to \nalgebra$ by $\tau^\Omega_t(A)=\Delta_\Omega^{\iu t}A\Delta_\Omega^{-\iu t}$. We call the modular automorphism group\index{modular! automorphism group} the one-parameter group $\left\{\tau^\Omega_t \right\}_{t\in\mathbb{R}}$.
	\end{definition}
	
	\begin{notation}
		We will denote the modular automorphism group with respect to a cyclic and separating vector $\Omega$ by $\left\{\tau^\Omega_t \right\}_{t\in\mathbb{R}}$. In addition, given a faithful normal semifinite weight $\phi$ on a von Neumann algebra, we will denote $\left\{\tau^\phi_t \right\}_{t\in\mathbb{R}}$ the modular automorphism group with respect to the cyclic and separating vector obtained in the GNS-construction.
	\end{notation}
	
	The last comment we would like to add in this section is that in Relativistic Quantum Field Theory, due to the Reeh-Schlieder Theorem, it is possible to obtain a modular operator for the algebra of local observables using the vacuum state, see \cite{Araki99} for more details.
	
	\subsection{KMS States and Dynamical Systems}
	\label{subsecKMS}
	This section is devoted to establishing the basic concepts and notation about Dynamical Systems in the Operator Algebras context needed in this article.
	
	\begin{definition}[$W^\ast$-Dynamical System]
		\index{dynamical system! $W^\ast$}
		A $W^\ast$-dynamical system $(\nalgebra,G,\alpha)$ consists of a von Neumann algebra $\nalgebra$, a locally compact group $G$ and a weakly continuous homomorphism $\alpha$ of $G$ in $Aut(\nalgebra)$.
	\end{definition}
	
	In particular, we denote by $(\nalgebra,\alpha)$ the $W^\ast$-dynamical system with $\alpha$ a one-parameter group, $\mathbb{R} \ni t \mapsto \alpha_t \in Aut(\nalgebra)$.
	
	A general definition of KMS states can be found in any textbook of Operator Algebras such as \cite{Bratteli2}, \cite{KR86}, and \cite{Takesaki2002}. For a discussion in the context of equilibrium states in the thermodynamic limit we suggest reference \cite{haag67}. We will, present this definition for completeness
	
	\begin{definition}
		\label{KMSequi}
		Let $(\nalgebra, \tau)$ be a $W^\ast$-dynamical system, $\beta \in \mathbb{R}$. A normal state $\omega$ over $\nalgebra$ is said to be a $(\tau,\beta)$-KMS state, if, for any $A,B \in \nalgebra$, there exists a complex function $F_{A,B}$ which is analytic in $\mathcal{D}_\beta=\left\{z\in \mathbb{C} \mid 0<\sgn{\beta} \Im z<|\beta|\right\}$ and continuous on $\overline{\mathcal{D}_\beta}$ satisfying
		\begin{equation}
		\label{eqKMS3}
		\begin{aligned}
		F_{A,B}(t) &=& \omega(A\tau_t(B)) \ \forall t\in \mathbb{R}, \\
		F_{A,B}(t+\iu \beta)& =& \omega(\tau_t(B)A) \ \forall t\in \mathbb{R}. \\
		\end{aligned}
		\end{equation}
		
	\end{definition}

	We finish this section mentioning a result of the most importance. KMS states ``survive the thermodynamical limit'', which means that under topology that has physical significance in this mathematical description and under the adequate convergence and continuity hypothesis the limit of KMS states is again a KMS state.

	\subsection{Expansionals}
	
	A extensive study of expansionals can be found in \cite{Araki73.2}.
	
	\begin{definition}
		\label{DysonSeries}\index{expansional}
		Let $\nalgebra$ be a von Neumann algebra, $t\mapsto A(t)\in\nalgebra$ a strongly continuous function such that $\displaystyle \sup_{0\leq t\leq T}\|A(t)\|=r_A(T)<\infty$ for all $T\in\mathbb{R}_+$. For each $t\in\mathbb{R}_+$ define
		$$\begin{aligned}
		Exp_r\left(\int_0^t;A(s) ds\right)	&=\sum_{n=0}^{\infty}{\int_{0}^{t} dt_1\ldots \int_{0}^{t_{n-1}}dt_{n}A(t_n)\ldots A(t_1)}; \\
		Exp_l\left(\int_0^t;A(s) ds\right)	&=\sum_{n=0}^{\infty}{\int_{0}^{t} dt_1\ldots \int_{0}^{t_{n-1}}dt_{n} A(t_1)\ldots A(t_n)}; \\
		\end{aligned}$$
		where the term for $n=0$ is the identity.
	\end{definition}
	
	Note that these operators are well defined since $\|A(t_i)\|\leq r_A(t)$ for every $1\leq i\leq n$ and ${\int_{0}^{t} dt_1\ldots \int_{0}^{t_{n-1}}dt_{n}=\frac{t^n}{n!}}$. Thus the series converge absolutely (and uniformly over compact sets).
	
	It is important to mention that these operators are basically the Dyson series. In fact, if given $(t_1,\ldots, t_n) \in \mathbb{R}^n$, we set a permutation $\sigma:\{1,\ldots,n\}\to\{1,\ldots,n\}$ such that $t_{\sigma(n)}\leq t_{\sigma(n-1)}\leq \ldots t_{\sigma(1)}$, and we define the operators $T,\widetilde{T}:\nalgebra\to\nalgebra$ by
	$$\begin{aligned}
	T\left(A(t_1)\ldots A(t_n)\right)&=A(t_{\sigma(1)})\ldots A(t_{\sigma(n)});\\
	\widetilde{T}\left(A(t_1)\ldots A(t_n)\right)&=A(t_{\sigma(n)})\ldots A(t_{\sigma(1)});
	\end{aligned}$$
	then, 
	$$\begin{aligned}
	Exp_r\left(\int_0^t;A(s) ds\right)	&=\sum_{n=0}^{\infty}{\int_{0}^{t} dt_1\ldots \int_{0}^{t}dt_{n}\frac{T\left(A(t_n)\ldots A(t_1)\right)}{n!}}; \\
	Exp_l\left(\int_0^t;A(s) ds\right)	&=\sum_{n=0}^{\infty}{\int_{0}^{t} dt_1\ldots \int_{0}^{t}dt_{n} \frac{\widetilde{T}\left(A(t_n)\ldots A(t_1)\right)}{n!}}. \\
	\end{aligned}$$
	
	The following proposition states some interesting properties of expansionals, for example the cocycle property. Equation \eqref{expansional_exponential} also can be obtained from this proposition. For the proof and more details see \cite[Proposition 2, 3, 4 and 5]{Araki73.2} or \cite{RCS18.2}.
	
	\begin{proposition}
		\label{exponentials}
		Let $\nalgebra$ be a von Neumann algebra, $t\mapsto A(t)\in\nalgebra$ a strong-continuous function such that $\displaystyle \sup_{0\leq t\leq T}\|A(t)\|=r_A(T)<\infty$ for all $T\in\mathbb{R}_+$. Then, the following properties hold:
		\begin{enumerate}[(i)]
			\item $\displaystyle \frac{d}{dt}{Exp_r\left(\int_0^t;A(s) ds\right)}=Exp_r\left(\int_0^t;A(s) ds\right)A(t)$;\\
			$\displaystyle \frac{d}{dt}{Exp_l\left(\int_0^t;A(s) ds\right)}=A(t)Exp_l\left(\int_0^t;A(s) ds\right)$;
			\item $\displaystyle Exp_l\left(\int_0^t;-A(s) ds\right) Exp_r\left(\int_0^t;A(s) ds\right)=\mathbbm{1}$;
			\item $\displaystyle Exp_r\left(\int_0^t;A(s) ds\right) Exp_l\left(\int_0^t;-A(s) ds\right)=\mathbbm{1}$;
			\item $\displaystyle Exp_r\left(\int_0^t;A(s) ds\right) Exp_r\left(\int_0^{t^\prime};A(s+t) ds\right)=Exp_r\left(\int_0^{t+t^\prime};A(s) ds\right)$;
			
			$\displaystyle Exp_l\left(\int_0^{t^\prime};A(s+t) ds\right) Exp_l\left(\int_0^t;A(s) ds\right)=Exp_l\left(\int_0^{t+t^\prime};A(s) ds\right)$.
		\end{enumerate}
	\end{proposition}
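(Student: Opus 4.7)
The plan is to prove (i) by term-by-term differentiation of the defining series, and then derive (ii), (iii), and (iv) from (i) using standard uniqueness for operator-valued linear ODEs. For (i), the series defining $Exp_r$ and $Exp_l$ converge in operator norm, uniformly on each compact interval, since the $n$-th summand is bounded by $r_A(T)^n T^n/n!$. Writing the $n$-th term of $Exp_r$ as
\begin{equation*}
E_n(t) = \int_0^t dt_1 \int_0^{t_1} dt_2 \cdots \int_0^{t_{n-1}} dt_n\, A(t_n) \cdots A(t_1),
\end{equation*}
differentiation in $t$ picks out the boundary value $t_1 = t$ and yields $E_n'(t) = E_{n-1}(t)\, A(t)$. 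The same exponential-type estimate dominates the series of termwise derivatives, so the interchange is legitimate; summing gives
\begin{equation*}
\frac{d}{dt}\, Exp_r\!\left(\int_0^t; A(s) ds\right) = Exp_r\!\left(\int_0^t; A(s) ds\right) A(t).
\end{equation*}
The formula for $Exp_l$ is proved analogously, with $A(t)$ appearing on the left.

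Parts (iii) and (ii) then follow by Leibniz-rule computations. For (iii), let
\begin{equation*}
Q(t) = Exp_r\!\left(\int_0^t; A(s) ds\right) Exp_l\!\left(\int_0^t; -A(s) ds\right);
\end{equation*}
applying (i) shows that the two contributions to $Q'(t)$ cancel exactly, so $Q(t) \equiv Q(0) = \mathbbm{1}$. For (ii), the mirror product
\begin{equation*}
P(t) = Exp_l\!\left(\int_0^t; -A(s) ds\right) Exp_r\!\left(\int_0^t; A(s) ds\right)
\end{equation*}
instead satisfies the linear operator ODE $P'(t) = [P(t), A(t)]$ with $P(0) = \mathbbm{1}$. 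Since the right-hand side is linear in $P$ with locally bounded coefficient, standard uniqueness applies in the Banach algebra $\nalgebra$ and the evident constant solution $P \equiv \mathbbm{1}$ is the only one.

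For (iv), I would fix $t$ and view both sides of the first identity as functions of $t'$. By (i), both satisfy $f'(t') = f(t')\, A(t+t')$ with the same initial value $Exp_r(\int_0^t; A(s) ds)$, so uniqueness identifies them; the $Exp_l$ identity is handled symmetrically. The one delicate point throughout is the interplay between the strong continuity of $t\mapsto A(t)$ and the need for norm-level estimates, but the hypothesis $\|A(t)\|\leq r_A(T)$ on $[0,T]$ supplies exactly the scalar majorant required to reduce everything to the scalar exponential bound, so I expect no substantial obstacles beyond these routine verifications.
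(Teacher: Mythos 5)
Your proposal is correct. Note that the paper does not actually prove this proposition: it is quoted from Araki's work on expansionals (Propositions 2--5 of \cite{Araki73.2}) and the reference \cite{RCS18.2}, so there is no in-paper argument to compare against. Your route --- term-by-term differentiation of the defining series to get (i), then Leibniz plus uniqueness for linear operator-valued ODEs (Gronwall) for (ii), (iii) and (iv) --- is the standard one and is essentially the classical treatment: the domination of the $n$-th term by $r_A(T)^n T^n/n!$ makes the differentiated series converge uniformly on compacts, (iii) follows from exact cancellation, (ii) from uniqueness for $P'=[P,A(t)]$ with $P(0)=\mathbbm{1}$, and (iv) from uniqueness for $f'(t')=f(t')A(t+t')$ (resp.\ the left-handed version) with matching initial data. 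The only point worth making explicit is the one you flag at the end: since $t\mapsto A(t)$ is only assumed strongly continuous, the integrals, the derivatives in (i), and the ODEs in (ii)--(iv) should all be read in the strong topology (applied to a fixed vector $\xi$), and the Gronwall step runs on $\|D(t)\xi\|$ before taking the supremum over unit vectors; the uniform bound $\|A(t)\|\leq r_A(T)$ on $[0,T]$ is exactly what makes this work, so this is a matter of phrasing rather than a gap.
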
 
	
	In addition, it is known, \cite[Proposition 4.3]{Araki73}, that the modular automorphism groups $\{\tau_t^\phi\}_{t\in\mathbb{R}}$ and $\{\tau_t^\psi\}_{t\in\mathbb{R}}$ of two states $\phi$ and $\psi$, respectively, when the relative Hamiltonian $Q$ exists, can be related as follows:
	$$\begin{aligned}
	u^{\phi\psi}_t&=Exp_r\left(\int_0^t;-\iu \tau^\psi_s(Q) ds\right)\\
	\hat{u}^{\phi\psi}_t&=Exp_l\left(\int_0^t;\iu \tau^\psi_s(Q) ds\right).\\
	\end{aligned}$$
	$$\begin{aligned}
	\left(u^{\phi\psi}_t\right)^\ast&=\hat{u}^{\phi\psi}_t\\
	u^{\phi\psi}_t \hat{u}^{\phi\psi}_t&=\hat{u}^{\phi\psi}_t u^{\phi\psi}_t=\mathbbm{1}\\
	u^{\phi\psi}_t \tau^\psi_t(A)&=\tau^\phi_t(A) \hat{u}^{\phi\psi}_t \, , \qquad A\in \nalgebra.
	\end{aligned}$$
	Moreover, the vectors $\Phi$ and $\Psi$ representing, respectively, the states $\phi$ and $\psi$ are related by
	
	$$\Phi=\sum_{i=0}^\infty (-1)^n \int_0^\frac{1}{2} \int_0^{t_1}\ldots \int_0^{t_{n-1}}\Delta_\Psi^{t_n}Q\Delta_\Psi^{t_{n-1}-t_n}Q\cdots \Delta_\Psi^{t_1-t_2}Q\Psi.$$
	\subsection{Noncommutative $L_p$-Spaces}
	\label{subsecNCLp}
	
	Noncommutative $L_p$-spaces are analogous to the Banach spaces of the $p$-integrable functions with respect to a measure. The study of these spaces goes back to the works of Segal, \cite{Segal53}, and Dixmier, \cite{Dixmier53}, which depend on the existence of a normal faithful semifinite trace. It was just 25 years later that Haagerup in \cite{haagerup79} proposed a generalization of the Segal-Dixmier $L_p$-spaces which included the type III von Neumann algebras. As a consequence of \cite{hilsum81}, which answers to a question on spacial derivatives raised by A. Connes, that Araki and Masuda could propose a definition, equivalent to that one proposed by Haagerup, of noncommutative $L_p$-space based just in the Hilbert space of a concrete von Neumann algebra.
	
	It is interesting to notice that noncommutative $L_p$-spaces are appearing more frequently as the best framework to describe some physical situations. See, for example \cite{germinet2005}, \cite{Bru2017}, \cite{germinet2011}, and \cite{nittis2016}.
	
	Our interest in these spaces for a class of perturbations can be justified on two well known facts for classical $L_p$-spaces that still hold in the \hyphenation{non-com-mu-ta-tive}noncommutative case: they admit unbounded functions (operators) and have a useful H\"older duality property.
	
	In this section, we will present the useful theory of noncommutative measures whith respect to a normal faithful semifinite trace on a von Neumann algebra, which is the basis for the Segal-Dixmier noncommutative $L_p$-spaces (and, by the way, for noncommutative geometry). We use \cite{terp81} very often in here.
	
	
	Henceforth, we will denote by $\tau$ a trace, meaning a normal faithful semifinite trace. It is important to note that supposing the existence of such a trace restricts our options of algebras to the semifinite ones (not type III).

	Given a von Neumann algebra $\nalgebra\subset B(\hilbert)$, we say that a closed dense defined linear operator $A:\Dom{A} \to \hilbert$ is affiliated to $\nalgebra$ if, for every unitary operator $U\in\nalgebra^\prime$, $UAU^\ast=A$. We denote that an operator is affiliated to $\nalgebra$  by $A\eta\nalgebra$ and the set of all affiliated operators by $\nalgebra_\eta$.
		
	\begin{definition}
		Let $\nalgebra$ be a von Neumann algebra, $\tau$ a normal faithful semifinite trace, and $\varepsilon, \delta>0$. Define $$D(\varepsilon,\delta)=\left\{A\in \nalgebra_\eta \ \middle| \ \begin{aligned} &\exists p\in \nalgebra_p \textrm{ such that } \\ &p\hilbert\subset\Dom{A}, \|Ap\|\leq\varepsilon \textrm{ and } \tau(\mathbbm{1}-p)\leq\delta\end{aligned}\right\}.$$
	\end{definition}
	
	\begin{proposition}
		\label{pn->1} \index{$\tau$-! dense}
		Let $\nalgebra$ be a von Neumann algebra and $\tau$ a normal faithful semifinite trace. A subspace $V\subset\hilbert$ is $\tau$-dense if, and only if, there exists an increasing sequence of projections $(p_n)_{n\in\mathbb{N}}\subset\nalgebra_p$ such that $p_n\to \mathbbm{1}$ and $\tau(\mathbbm{1}-p_n)\to 0$ and $\displaystyle \bigcup_{n\in\mathbb{N}} p_n\hilbert\subset V$.
	\end{proposition}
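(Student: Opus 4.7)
The plan is to handle the two implications separately, with the bulk of the work in the forward direction. The notion of $\tau$-density in use (from Terp's notes) is that for every $\delta>0$ there exists $p\in\nalgebra_p$ with $p\hilbert\subset V$ and $\tau(\mathbbm{1}-p)\leq\delta$. The backward implication is then essentially immediate: given an increasing sequence $(p_n)$ with $p_n\hilbert\subset V$ and $\tau(\mathbbm{1}-p_n)\to 0$, for any $\delta>0$ one picks $n$ large enough that $\tau(\mathbbm{1}-p_n)\leq\delta$ and takes $p=p_n$, which exhibits $V$ as $\tau$-dense.

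For the forward implication, I would start by applying the definition of $\tau$-density with $\delta = 2^{-n}$ to produce, for each $n\in\mathbb{N}$, a projection $q_n\in\nalgebra_p$ such that $q_n\hilbert\subset V$ and $\tau(\mathbbm{1}-q_n)\leq 2^{-n}$. These projections need not be comparable, so I would assemble them into an increasing sequence by setting
\[
p_n = \bigwedge_{k\geq n} q_k,
\]
so that $p_n\leq p_{n+1}$ automatically (an infimum over a smaller index set is larger). Since $p_n\leq q_n$, the inclusion $p_n\hilbert\subset q_n\hilbert\subset V$ holds, giving $\bigcup_n p_n\hilbert\subset V$.

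The key quantitative step is to control $\tau(\mathbbm{1}-p_n)$. Taking orthocomplements, $\mathbbm{1}-p_n=\bigvee_{k\geq n}(\mathbbm{1}-q_k)$, and the countable subadditivity of $\tau$ on joins of projections (which follows from normality together with the standard estimate $\tau(e\vee f)\leq\tau(e)+\tau(f)$) yields
\[
\tau(\mathbbm{1}-p_n)\leq\sum_{k\geq n}\tau(\mathbbm{1}-q_k)\leq\sum_{k\geq n}2^{-k}=2^{-n+1}\xrightarrow[n\to\infty]{}0.
\]
It remains to show $p_n\to\mathbbm{1}$ strongly. Since $(p_n)$ is increasing and bounded by $\mathbbm{1}$, it converges strongly to a projection $p=\bigvee_n p_n$, and $\mathbbm{1}-p=\bigwedge_n(\mathbbm{1}-p_n)$. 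By normality of $\tau$ on decreasing nets of projections (equivalently, $\sigma$-normality), $\tau(\mathbbm{1}-p)=\lim_n\tau(\mathbbm{1}-p_n)=0$, so faithfulness of $\tau$ forces $\mathbbm{1}-p=0$, i.e., $p_n\to\mathbbm{1}$.

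The main technical obstacle is the construction of the \emph{increasing} sequence: one cannot simply use the $q_n$, since the natural way of forcing monotonicity ($p_n=q_1\wedge\cdots\wedge q_n$) produces a decreasing sequence. The trick of taking infima over the tail indices $k\geq n$ gives the right monotonicity while still respecting the summability of $2^{-k}$ after passing to orthocomplements and using subadditivity. Everything else amounts to routine applications of normality and faithfulness of $\tau$.
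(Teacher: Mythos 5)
Your proof is correct. The paper states this proposition without proof, as background imported from Terp's notes, and your argument --- taking $q_k$ with $\tau(\mathbbm{1}-q_k)\leq 2^{-k}$, forming the tail infima $p_n=\bigwedge_{k\geq n}q_k$ to force monotonicity, and controlling $\tau(\mathbbm{1}-p_n)$ via countable subadditivity of the trace on joins of projections before invoking faithfulness --- is exactly the standard proof of this equivalence.
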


	\begin{corollary}
		\label{intersec2taudense}
		Let $V_1,V_2 \subset\hilbert$ be $\tau$-dense subspaces. Then $V_1\cap V_2$ is $\tau$-dense. 	
	\end{corollary}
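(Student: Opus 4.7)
The plan is to invoke Proposition \ref{pn->1} in both directions. By the forward implication applied to $V_1$ and $V_2$, there exist increasing sequences of projections $(p_n)_{n\in\mathbb{N}}$ and $(q_n)_{n\in\mathbb{N}}$ in $\nalgebra_p$ with $p_n,q_n\to\mathbbm{1}$ strongly, $\tau(\mathbbm{1}-p_n),\tau(\mathbbm{1}-q_n)\to 0$, and $\bigcup_n p_n\hilbert\subset V_1$, $\bigcup_n q_n\hilbert\subset V_2$. The natural candidate sequence witnessing $\tau$-density of $V_1\cap V_2$ will be $r_n\defeq p_n\wedge q_n$, the meet in the projection lattice of $\nalgebra$.

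First I would verify that $(r_n)$ is increasing and that $r_n\hilbert\subset p_n\hilbert\cap q_n\hilbert\subset V_1\cap V_2$, both of which follow directly from the definition of meet together with monotonicity of $(p_n)$ and $(q_n)$. The main analytic step is to show $\tau(\mathbbm{1}-r_n)\to 0$. For this I would use the De Morgan identity $\mathbbm{1}-p\wedge q=(\mathbbm{1}-p)\vee(\mathbbm{1}-q)$ in the projection lattice, combined with the standard trace identity $\tau(e\vee f)+\tau(e\wedge f)=\tau(e)+\tau(f)$ for projections of finite trace, which together yield
\begin{equation*}
\tau(\mathbbm{1}-r_n)=\tau\bigl((\mathbbm{1}-p_n)\vee(\mathbbm{1}-q_n)\bigr)\leq \tau(\mathbbm{1}-p_n)+\tau(\mathbbm{1}-q_n)\xrightarrow{n\to\infty}0.
\end{equation*}

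It then remains to argue $r_n\to\mathbbm{1}$ strongly. Since $(r_n)$ is monotone it converges strongly to some projection $r\in\nalgebra_p$, so $(\mathbbm{1}-r_n)$ decreases to $\mathbbm{1}-r$. By normality of $\tau$ applied to this decreasing sequence, whose first term has finite trace by the estimate above, one obtains $\tau(\mathbbm{1}-r)=\lim_n\tau(\mathbbm{1}-r_n)=0$; faithfulness of $\tau$ then forces $r=\mathbbm{1}$. Feeding $(r_n)$ into the converse direction of Proposition \ref{pn->1} produces the desired $\tau$-density of $V_1\cap V_2$. The step I expect to require most care is precisely this last continuity-of-trace argument, since $\tau(\mathbbm{1})$ may well be infinite in the semifinite setting; the safe version of normality on monotone sequences of projections requires finiteness of $\tau$ on the first term, which is supplied by $\tau(\mathbbm{1}-r_1)\leq\tau(\mathbbm{1}-p_1)+\tau(\mathbbm{1}-q_1)<\infty$ obtained above.
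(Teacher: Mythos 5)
Your argument is correct and is precisely the standard one the paper leaves implicit: the corollary is stated without proof as an immediate consequence of Proposition \ref{pn->1}, and the intended route is exactly yours --- take the witnessing sequences $(p_n)$, $(q_n)$, pass to $r_n=p_n\wedge q_n$, and control $\tau(\mathbbm{1}-r_n)$ via the Kaplansky parallelogram law $\tau(e\vee f)\leq\tau(e)+\tau(f)$. One small remark: the final normality-plus-finiteness step you flag as delicate is not needed, since $\mathbbm{1}-r\leq\mathbbm{1}-r_n$ gives $\tau(\mathbbm{1}-r)\leq\tau(\mathbbm{1}-r_n)\to 0$ by mere monotonicity of the trace, and faithfulness then yields $r=\mathbbm{1}$ (this also sidesteps the possibility that $\tau(\mathbbm{1}-p_1)$ is infinite, which otherwise you would handle by passing to a tail of the sequences).
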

	
	\begin{definition}
		\index{$\tau$-! measurable}
		Let $\nalgebra$ be a von Neumann algebra and $\tau$ be a normal faithful semifinite trace. A closed (densely defined) operator $A\in \nalgebra_\eta$ is said $\tau$-measurable if $\Dom{A}$ is $\tau$-dense. We denote by $\nalgebra_\tau$ the set of all $\tau$-measurable operators.
	\end{definition}
	
	Notice that by the previous proposition, if $A$ is a $\tau$-measurable operator and $B$ extends $A$, we must have $A=B$. This, in turn, implies that a $\tau$-measurable symmetric operator is self-adjoint.
	
	\begin{definition}
		\label{defpremeasurable}
		Let $\nalgebra$ be a von Neumann algebra and $\tau$ be a normal faithful \hyphenation{semi-fi-nite}semifinite trace. An operator $A\eta\nalgebra$ is said $\tau$-premeasurable if, $\forall \delta>0$, there exists $p\in\nalgebra_p$ such that $p\hilbert \subset \Dom{A}$, $\|Ap\|<\infty$ and $\tau(\mathbbm{1}-p)\leq \delta$.  
	\end{definition}
	
	An equivalent way to define a $\tau$-premeasurable operator relies on $D(\varepsilon,\delta)$: $A$ is $\tau$-premeasurable if, and only if, $\forall \delta>0$, there exists $\varepsilon>0$ such that $A\in D(\varepsilon,\delta)$.
	
	Another interesting thing to notice is that a $\tau$-premeasurable operator is densely defined since $D(A)$ must be $\tau$-dense.
	
	\begin{proposition}
		\label{equivalencetaumeasurability}
		Let $\nalgebra$ be a von Neumann algebra, $\tau$ a normal faithful semifinite trace, $A \eta \nalgebra$ a closed densely defined operator, and $\left\{E_{(\lambda,\infty)}\right\}_{\lambda\in\mathbb{R}_+}$ the spectral resolution of $|A|$. The following are equivalent:
		\begin{enumerate}[(i)]
			\item $A$ is $\tau$-measurable;
			\item $|A|$ is $\tau$-measurable;
			\item $\forall \delta>0 \ \exists \varepsilon>0$ such that $A\in D(\varepsilon,\delta)$;
			\item $\forall \delta>0 \ \exists \varepsilon>0$ such that $\tau\left(E_{(\varepsilon,\infty)}\right)<\delta$;
			\item $\displaystyle \lim_{\lambda\to\infty}\tau\left(E_{(\lambda,\infty)}\right)=0$;
			\item $\exists \lambda_0>0$ such that $\tau\left(E_{(\lambda_0,\infty)}\right)<\infty$.
		\end{enumerate}
	\end{proposition}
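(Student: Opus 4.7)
I would establish the equivalences via the scheme (i) $\Leftrightarrow$ (ii) $\Leftrightarrow$ (iii) $\Leftrightarrow$ (iv) $\Leftrightarrow$ (v) $\Leftrightarrow$ (vi), relying on the polar decomposition $A=U|A|$, the spectral theorem for $|A|$, and normality of $\tau$.

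\emph{(i)$\Leftrightarrow$(ii) and reduction to $|A|$.} From the polar decomposition $\Dom{A}=\Dom{|A|}$, and for any projection $p$ with $p\hilbert\subset\Dom{A}$ one has $\|Ap\xi\|=\||A|p\xi\|$ for every $\xi\in\hilbert$, whence $\|Ap\|=\||A|p\|$. Therefore both the $\tau$-density of $\Dom{A}$ and the condition in (iii) are invariant under replacing $A$ by $|A|$, reducing the remaining work to the positive case.

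\emph{(i)$\Leftrightarrow$(iii).} If $A$ is $\tau$-measurable, Proposition \ref{pn->1} gives an increasing sequence of projections $p_n\uparrow\mathbbm{1}$ with $p_n\hilbert\subset\Dom{A}$ and $\tau(\mathbbm{1}-p_n)\to 0$. For each $n$, $Ap_n$ is a closed operator defined on all of $\hilbert$, hence bounded by the closed graph theorem, giving the $\varepsilon=\|Ap_n\|$ required in (iii). For the converse, pick for each $n$ a projection $p_n$ with $p_n\hilbert\subset\Dom{A}$ and $\tau(\mathbbm{1}-p_n)\leq 2^{-n}$, and set $q_n=\bigwedge_{k\geq n}p_k$; then $q_n\leq p_n$ so $q_n\hilbert\subset\Dom{A}$, the $q_n$ are increasing, and $\tau(\mathbbm{1}-q_n)\leq\sum_{k\geq n}2^{-k}\to 0$, so by Proposition \ref{pn->1} the domain $\Dom{A}$ is $\tau$-dense.

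\emph{(iii)$\Leftrightarrow$(iv).} Working with $|A|$, the implication (iv)$\Rightarrow$(iii) is immediate by taking $p=E_{[0,\varepsilon]}=\mathbbm{1}-E_{(\varepsilon,\infty)}$: then $\||A|p\|\leq\varepsilon$ from the spectral theorem and $\tau(\mathbbm{1}-p)<\delta$ by hypothesis. For (iii)$\Rightarrow$(iv), given $p$ with $\||A|p\|\leq\varepsilon$ and any $\varepsilon'>\varepsilon$, one checks that $E_{(\varepsilon',\infty)}\wedge p=0$: a nonzero $\xi$ in the intersection would satisfy $\||A|\xi\|\geq\varepsilon'\|\xi\|$ from the spectrum while $\||A|\xi\|\leq\varepsilon\|\xi\|$ from $\xi=p\xi$. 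Kaplansky's formula then gives $E_{(\varepsilon',\infty)}\sim E_{(\varepsilon',\infty)}\vee p-p$, so $\tau(E_{(\varepsilon',\infty)})\leq\tau(\mathbbm{1}-p)\leq\delta$.

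\emph{(iv)$\Leftrightarrow$(v)$\Leftrightarrow$(vi).} Since $\lambda\mapsto\tau(E_{(\lambda,\infty)})$ is monotonically decreasing, condition (iv) is just the statement that its infimum is $0$, which is equivalent to (v); and (v) trivially implies (vi). For (vi)$\Rightarrow$(v), if $\tau(E_{(\lambda_0,\infty)})<\infty$, then $(E_{(\lambda,\infty)})_{\lambda\geq\lambda_0}$ is a decreasing net of projections with $\bigwedge_\lambda E_{(\lambda,\infty)}=0$ (since $|A|$ is self-adjoint with no spectral mass at $\infty$), so normality of $\tau$, applied in the finite regime above $\lambda_0$, yields $\tau(E_{(\lambda,\infty)})\downarrow 0$.

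The most delicate step is (iii)$\Rightarrow$(iv): controlling the spectral tail $E_{(\varepsilon,\infty)}$ by $\mathbbm{1}-p$ requires the right lattice-theoretic identity for projections (Kaplansky's formula) rather than a naive comparison, since $E_{(\varepsilon,\infty)}$ and $p$ need not commute.
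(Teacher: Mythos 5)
Your proof is correct, and it is essentially the standard argument for this result (which the paper states without proof, deferring to the cited reference on noncommutative integration): reduction to $|A|$ via the polar decomposition, Proposition \ref{pn->1} plus the closed graph theorem for (i)$\Leftrightarrow$(iii), the spectral projections $E_{[0,\varepsilon]}$ together with Kaplansky's parallelogram law for (iii)$\Leftrightarrow$(iv), and monotonicity and normality of $\tau$ for the remaining equivalences. The only cosmetic point is that (iii)$\Rightarrow$(iv) as written yields $\tau\left(E_{(\varepsilon',\infty)}\right)\leq\delta$ rather than a strict inequality, which is fixed by starting from $\delta/2$.
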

	\begin{proposition}
		$\nalgebra_\tau$ provided with the usual scalar operations and involution, and the following vector operations, is a $\ast$-algebra:
		\begin{enumerate}[(i)]
			\item $A\bm{+}B=\overline{A+B}$;
			\item $A\bm{\times}B=\overline{AB}$.
		\end{enumerate}
	\end{proposition}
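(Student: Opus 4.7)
The plan is to exploit the characterization from Proposition \ref{equivalencetaumeasurability} that $A\in\nalgebra_\tau$ iff for every $\delta>0$ there exists $\varepsilon>0$ with $A\in D(\varepsilon,\delta)$; this reduces everything to careful bookkeeping with witnessing projections. I would split the argument into closedness of $\nalgebra_\tau$ under each of the four operations, and then the algebra and $\ast$ axioms. Scalar multiplication is trivial: the same witnessing projection gives $\lambda A\in D(|\lambda|\varepsilon,\delta)$. For the adjoint, affiliation passes through taking closures, and $|A|$ and $|A^*|$ are unitarily equivalent via the partial isometry from the polar decomposition of $A$ (which lies in $\nalgebra$ by affiliation), so their spectral projections have equal $\tau$-values, giving $A^*\in\nalgebra_\tau$.

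For the sum, given $A,B\in\nalgebra_\tau$ and $\delta>0$, I would choose $p,q\in\nalgebra_p$ witnessing $A\in D(\varepsilon_1,\delta/2)$ and $B\in D(\varepsilon_2,\delta/2)$. The projection $p\wedge q$ satisfies $(p\wedge q)\hilbert\subset\Dom{A}\cap\Dom{B}$, $\|(A+B)(p\wedge q)\|\leq\varepsilon_1+\varepsilon_2$, and $\tau(\mathbbm{1}-p\wedge q)\leq\tau(\mathbbm{1}-p)+\tau(\mathbbm{1}-q)\leq\delta$. Since $\Dom{A}\cap\Dom{B}$ is $\tau$-dense by Corollary \ref{intersec2taudense}, $A+B$ is densely defined; closability follows from $A^*+B^*\subset(A+B)^*$ being densely defined (again by Corollary \ref{intersec2taudense} applied to $A^*,B^*\in\nalgebra_\tau$), and $\overline{A+B}$ remains in $D(\varepsilon_1+\varepsilon_2,\delta)$. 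Affiliation is preserved because $A+B$ commutes with every unitary of $\nalgebra'$ on the common core and this persists under closure.

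The main obstacle is the product. The naive candidate $p\wedge q$ fails because $Bq\hilbert$ need not be contained in $\Dom{A}$. My plan is to shrink $q$ to $q'=q\wedge q_1$, where $q_1\in\nalgebra_p$ is chosen so that $Bq_1\hilbert\subset p\hilbert\subset\Dom{A}$ and $\tau(\mathbbm{1}-q_1)\leq\delta/2$. The existence of such a $q_1$ I would obtain by applying Proposition \ref{pn->1} to the $\tau$-dense subspace $(Bq)^{-1}(p\hilbert)$, extracting from the increasing sequence of projections one with trace of the complement below $\delta/2$. With $q'$ in hand one has $Bq'\hilbert\subset\Dom{A}$, $\|ABq'\|\leq\varepsilon_1\varepsilon_2$, and $\tau(\mathbbm{1}-q')\leq\delta$. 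Thus $AB$ is pre-measurable on the $\tau$-dense natural domain $\{\xi\in\Dom{B}:B\xi\in\Dom{A}\}$; closability comes from $B^*A^*\subset(AB)^*$ being densely defined (by the same construction applied to $A^*,B^*$), and $\overline{AB}\in\nalgebra_\tau$ by the same closure argument as for the sum.

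Once closedness of $\nalgebra_\tau$ under the four operations is established, the $\ast$-algebra axioms follow essentially for free: every finite algebraic expression in elements of $\nalgebra_\tau$ agrees, on a common $\tau$-dense subspace built by intersecting the relevant domains and preimages (iteratively invoking Corollary \ref{intersec2taudense} and the preimage construction used above), with the ordinary algebraic expression on unbounded operators. The standard identities—associativity of $\bm{+}$ and $\bm{\times}$, distributivity, $(A\bm{+}B)^*=A^*\bm{+}B^*$, $(A\bm{\times}B)^*=B^*\bm{\times}A^*$—therefore hold on this core and are preserved by taking closures, finishing the verification.
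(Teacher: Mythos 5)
Your overall strategy is the same as the paper's: show that $A+B$ and $AB$ belong to $D(\varepsilon_1+\varepsilon_2,\delta)$ and $D(\varepsilon_1\varepsilon_2,\delta)$ respectively, deduce closability of $A+B$ and $AB$ from the dense definedness of $A^\ast+B^\ast$ and $B^\ast A^\ast$ (which gives $A\bm{+}B,A\bm{\times}B\in\nalgebra_\tau$ via condition $(iii)$ of Proposition \ref{equivalencetaumeasurability}), and then obtain the algebraic identities from the fact that closed $\tau$-measurable operators agreeing on a $\tau$-dense subspace are equal, by the maximality property noted after the definition of $\tau$-measurability. The paper delegates the two quantitative facts to auxiliary propositions; you prove them inline, and your treatment of scalars, adjoints and sums is sound (the subadditivity $\tau(\mathbbm{1}-p\wedge q)\leq\tau(\mathbbm{1}-p)+\tau(\mathbbm{1}-q)$ does need Kaplansky's parallelogram law, since $\mathbbm{1}-p\wedge q=(\mathbbm{1}-p)\vee(\mathbbm{1}-q)$, but that is standard).

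The one genuine gap is in the product step. You assert that $(Bq)^{-1}(p\hilbert)$ is $\tau$-dense and then extract $q_1$ from Proposition \ref{pn->1}, but that $\tau$-density is exactly the nontrivial content of the step and does not come for free: Proposition \ref{pn->1} characterizes $\tau$-dense subspaces, it does not certify that this particular preimage is one. The standard way to produce your $q_1$ is by comparison of projections: let $e$ be the projection onto $q\hilbert\cap\Ker\bigl((\mathbbm{1}-p)Bq\bigr)$, so that $Be\hilbert\subset p\hilbert$ and $\|ABe\|\leq\|Ap\|\,\|Be\|\leq\varepsilon_1\varepsilon_2$; then $q-e$ is Murray--von Neumann equivalent to the closure of the range of $(\mathbbm{1}-p)Bq$, a subprojection of $\mathbbm{1}-p$, whence $\tau(q-e)\leq\tau(\mathbbm{1}-p)$ and $\tau(\mathbbm{1}-e)\leq\tau(\mathbbm{1}-p)+\tau(\mathbbm{1}-q)\leq\delta$. (This same argument, applied to the increasing sequences of Proposition \ref{pn->1}, is what would justify the $\tau$-density you asserted.) With this substitution the product case closes, and the remainder of your argument goes through as written and coincides with the paper's proof.
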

	\begin{proposition}
		\label{measurablealgebra}
		$\nalgebra_\tau$ is a complete Hausdorff topological $\ast$-algebra with respect to the topology generated by the system of neighbourhoods of zero $$\left\{\nalgebra_\tau \cap D(\varepsilon,\delta)\right\}_{\varepsilon>0, \delta>0}.$$ Furthermore, $\nalgebra$ is dense in $\nalgebra_\tau$ in this topology. We will denote the balanced absorbing neighbourhood of zero by $N(\varepsilon,\delta)=\nalgebra_\tau \cap D(\varepsilon,\delta)$.
	\end{proposition}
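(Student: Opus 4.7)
The plan is to verify the statement in four stages, exploiting the "calculus" of the sets $D(\varepsilon,\delta)$ that the paper has (or can) establish: namely that they are balanced and absorbing, that $A\in D(\varepsilon,\delta)\Rightarrow A^\ast\in D(\varepsilon,\delta)$, and the inclusions $D(\varepsilon_1,\delta_1)+D(\varepsilon_2,\delta_2)\subset D(\varepsilon_1+\varepsilon_2,\delta_1+\delta_2)$ and $D(\varepsilon_1,\delta_1)\cdot D(\varepsilon_2,\delta_2)\subset D(\varepsilon_1\varepsilon_2,\delta_1+\delta_2)$. These are proved by taking the projections $p_1,p_2$ realizing membership in the respective sets and considering the projection $p_1\wedge p_2$, whose defect satisfies $\tau(\mathbbm{1}-p_1\wedge p_2)\leq \tau(\mathbbm{1}-p_1)+\tau(\mathbbm{1}-p_2)$.

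First I would check that $\{N(\varepsilon,\delta)\}_{\varepsilon,\delta>0}$ is a fundamental system of neighborhoods of $0$ for a vector topology on $\nalgebra_\tau$: the family is filtering since $N(\varepsilon,\delta)\cap N(\varepsilon',\delta')\supset N(\min\{\varepsilon,\varepsilon'\},\min\{\delta,\delta'\})$; balancedness is immediate; absorption follows because every $A\in\nalgebra_\tau$ is in some $D(\varepsilon,\delta)$ by Proposition \ref{equivalencetaumeasurability}, hence $\lambda A\in N(\varepsilon,\delta)$ for all small enough $\lambda$. Continuity of $\bm{+}$ and involution then follows directly from the above inclusions; continuity of $\bm{\times}$ follows similarly, with the observation that for fixed $B\in\nalgebra_\tau$ one first picks $(\varepsilon_0,\delta_0)$ with $B\in N(\varepsilon_0,\delta_0)$ and then adjusts to control the product. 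Hausdorff-ness reduces to showing that $\bigcap_{\varepsilon,\delta>0}N(\varepsilon,\delta)=\{0\}$: if $A$ lies in every $N(\varepsilon,\delta)$ then for each $\varepsilon$ there exist projections $p_\varepsilon$ with $\|Ap_\varepsilon\|\leq\varepsilon$ and $\tau(\mathbbm{1}-p_\varepsilon)$ arbitrarily small, so by Proposition \ref{pn->1} we may assume $p_{\varepsilon_n}\uparrow\mathbbm{1}$ for a sequence $\varepsilon_n\downarrow 0$, forcing $A=0$ on a core of its domain and hence $A=0$ by closedness.

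For density of $\nalgebra$, I would take $A\in\nalgebra_\tau$ with polar decomposition $A=U|A|$ and spectral resolution $\{E_{(\lambda,\infty)}\}$ of $|A|$. Setting $p_n=\mathbbm{1}-E_{(n,\infty)}$, the operator $Ap_n=U|A|p_n$ is bounded and lies in $\nalgebra$; moreover $A-Ap_n=A E_{(n,\infty)}$, and one checks $AE_{(n,\infty)}\in N(\varepsilon,\delta)$ for $n$ sufficiently large by using Proposition \ref{equivalencetaumeasurability}(v) to make $\tau(E_{(n,\infty)})$ as small as desired, together with a projection $q$ with $q\hilbert\subset\Dom{A}$ and $\|Aq\|\leq\varepsilon$; the projection $q\wedge(\mathbbm{1}-E_{(n,\infty)})$ does the job.

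The main obstacle is completeness. I would take a Cauchy net and, by the standard reduction for first-countable neighborhood bases at zero, replace it by a Cauchy sequence $(A_n)$. Extracting a fast subsequence so that $A_{n+1}-A_n\in N(2^{-n},2^{-n})$, I would pick corresponding projections $p_n$ with $\|(A_{n+1}-A_n)p_n\|\leq 2^{-n}$ and $\tau(\mathbbm{1}-p_n)\leq 2^{-n}$, then form the ``tail'' projections $q_k=\bigwedge_{n\geq k}p_n$. By subadditivity $\tau(\mathbbm{1}-q_k)\leq \sum_{n\geq k}2^{-n}=2^{1-k}$ and $q_k\uparrow\mathbbm{1}$ in a $\tau$-dense sense via Proposition \ref{pn->1}. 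On each $q_k\hilbert$ the sequence $(A_nq_k)$ is Cauchy in operator norm, hence converges to a bounded operator $B_k$, and the $B_k$ agree on overlaps, yielding a densely defined operator $A$ on $\bigcup_k q_k\hilbert$ which is closable with closure $\overline{A}\in\nalgebra_\tau$. A final estimate shows $A_n\to \overline{A}$ in the topology. The delicate points, which I expect to be the crux of the argument, are that $\overline{A}$ is indeed affiliated to $\nalgebra$ (using that each $A_nq_k\in\nalgebra$ and that $q_k$ commutes with every unitary of $\nalgebra'$, since $q_k\in\nalgebra$) and that $\tau$-measurability of $\overline{A}$ follows from the uniform control on $\tau(\mathbbm{1}-q_k)$.
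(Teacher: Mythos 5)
The paper does not actually prove this proposition: it is quoted as background from \cite{terp81} (the paper's own sketch of the related algebra structure is even commented out of the source), so the only fair comparison is with the standard argument in that reference. Your outline reproduces that argument faithfully: the vector topology from the calculus of the sets $D(\varepsilon,\delta)$, Hausdorffness from $\bigcap N(\varepsilon,\delta)=\{0\}$, density of $\nalgebra$ by spectral truncation $A\mapsto A(\mathbbm{1}-E_{(n,\infty)})$ (where in fact the witness projection $q=\mathbbm{1}-E_{(n,\infty)}$ already gives $\|(A-Ap_n)q\|=0$, so no meet is needed), and completeness via a fast subsequence with tail-meet projections $q_k=\bigwedge_{n\geq k}p_n$. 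This is the right architecture, and you correctly identify the delicate points of the completeness step.

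One concrete flaw: the product inclusion $D(\varepsilon_1,\delta_1)\cdot D(\varepsilon_2,\delta_2)\subset D(\varepsilon_1\varepsilon_2,\delta_1+\delta_2)$, which you need for continuity of $\bm{\times}$, cannot be obtained from the projection $p_1\wedge p_2$ as your opening paragraph suggests. The issue is that $p_1\wedge p_2$ gives no control on where $B$ sends $(p_1\wedge p_2)\hilbert$, so neither $B(p_1\wedge p_2)\hilbert\subset\Dom{A}$ nor the bound $\varepsilon_1\varepsilon_2$ follows. The correct witness is the projection $q$ onto $p_2\hilbert\cap(Bp_2)^{-1}(p_1\hilbert)=\ker{(\mathbbm{1}-p_1)Bp_2}\cap p_2\hilbert$; then $\|ABq\|\leq\|Ap_1\|\,\|Bp_2\|\leq\varepsilon_1\varepsilon_2$, and $p_2-q$ is Murray--von Neumann equivalent to a subprojection of $\mathbbm{1}-p_1$ (it is the projection onto $\overline{\Ran\left((\mathbbm{1}-p_1)Bp_2\right)^\ast}$), whence $\tau(\mathbbm{1}-q)\leq\tau(\mathbbm{1}-p_2)+\tau(p_2-q)\leq\delta_1+\delta_2$. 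A similar comparison-of-projections argument, not just the lattice meet, is what makes the whole $D(\varepsilon,\delta)$ calculus work. A smaller point: in the Hausdorff step, "we may assume $p_{\varepsilon_n}\uparrow\mathbbm{1}$" is not automatic while preserving $\|Ap_{\varepsilon_n}\|\leq\varepsilon_n$; you need the same tail-meet replacement $q_n=\bigwedge_{m\geq n}p_m$ with summable $\delta_m$ that you use later for completeness (or, more directly, argue that $|A|\neq 0$ forces $\tau(E_{(\lambda_0,\infty)})>0$ for some $\lambda_0>0$ and hence excludes $A$ from $D(\varepsilon,\delta)$ for $\varepsilon<\lambda_0$, $\delta<\tau(E_{(\lambda_0,\infty)})$). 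With those two repairs the proof is complete and standard.
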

	
	It interesting to notice that analyticity pervades almost every subject in von Neumann algebras. As a consequence of linearity and normality of the trace, we can use Functional Calculus and Spectral Theory to take advantage of the well known rigid behaviour of analytic functions to prove the aforesaid inequalities. The details of the proofs can be found in \cite{RCS18}, we also refer to \cite{Ruscai72}.
	\begin{lemma}
		\label{normtraceinequality}
		Let $\nalgebra$ be a von Neumann algebra, $\tau$ a normal faithful semifinite trace on $\nalgebra$, $A\in \nalgebra$ and $B\in \mathfrak{M}_\tau$. Then
		$$\left|\tau(AB)\right|\leq\tau(|AB|)\leq \|A\|\tau(|B|). $$
	\end{lemma}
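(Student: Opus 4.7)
The plan is to establish the two inequalities separately, both via fairly standard operator-theoretic arguments, with the main technical care being the $\tau$-measurable framework developed in the preceding propositions.

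For the right-hand inequality $\tau(|AB|)\leq \|A\|\tau(|B|)$, I would start from the operator identity
\[
|AB|^{2}=B^{\ast}A^{\ast}AB\leq \|A\|^{2}\,B^{\ast}B=\|A\|^{2}\,|B|^{2},
\]
where the inequality is the standard bound $A^{\ast}A\leq \|A\|^{2}\mathbbm{1}$ sandwiched between $B^{\ast}$ and $B$. Since $B\in\nalgebra_{\tau}$, the product $AB$ is again in $\nalgebra_{\tau}$ (by the algebra structure from the previous proposition), so $|AB|$ is a well-defined positive $\tau$-measurable operator. Operator monotonicity of the square root on the positive cone of $\nalgebra_{\tau}$ then yields $|AB|\leq \|A\|\,|B|$, and applying the positive linear functional $\tau$ (extended to positive $\tau$-measurable operators via the spectral resolution, as in the standard Segal--Dixmier construction) gives the bound.

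For the left-hand inequality $|\tau(AB)|\leq \tau(|AB|)$, the key tool is the polar decomposition $AB=U|AB|$ together with the trace Cauchy--Schwarz inequality. Writing $AB = U|AB|^{1/2}\,|AB|^{1/2}$ and using that $U^{\ast}U$ is the support projection of $|AB|$, so that $|AB|^{1/2}U^{\ast}U|AB|^{1/2}=|AB|$, one obtains
\[
|\tau(AB)|^{2}=\bigl|\tau\bigl(U|AB|^{1/2}\,|AB|^{1/2}\bigr)\bigr|^{2}\leq \tau\bigl(|AB|^{1/2}U^{\ast}U|AB|^{1/2}\bigr)\,\tau(|AB|)=\tau(|AB|)^{2}.
\]
Taking square roots delivers the desired inequality.

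The delicate point, and what I expect to be the main obstacle, is not the algebraic manipulation but the justification of each step in the unbounded $\tau$-measurable setting: namely that the polar decomposition of $AB\in\nalgebra_{\tau}$ exists with partial isometry $U\in\nalgebra$, that $|AB|^{1/2}\in\nalgebra_{\tau}$ and the products rearranged above are well defined (which follows from Corollary \ref{intersec2taudense} giving $\tau$-density of intersections of domains), and that the trace Cauchy--Schwarz inequality extends from the bounded case to positive $\tau$-measurable operators. Each of these facts is available from the Segal--Dixmier theory recalled above (in particular Proposition \ref{equivalencetaumeasurability} and Proposition \ref{measurablealgebra}); the argument then reduces the inequality for $\tau$-measurable operators to a density/approximation argument from $\nalgebra$, where the standard bounded proof applies verbatim.
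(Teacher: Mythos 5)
Your argument is correct in substance, but note that the paper itself does not prove this lemma: it only remarks that "the aforesaid inequalities" follow from linearity and normality of the trace together with functional calculus, spectral theory and analyticity, and defers the details to the references \cite{RCS18} and \cite{Ruscai72}. Compared with that (hinted) route, your proof is the elementary operator-theoretic one: for the right-hand inequality you use $|AB|^2=B^\ast A^\ast AB\leq\|A\|^2|B|^2$ plus Löwner monotonicity of the square root, and for the left-hand one the polar decomposition $AB=U|AB|$ plus the Cauchy--Schwarz inequality in $L_2(\nalgebra,\tau)$; both are standard and valid in the Segal--Dixmier framework. Three points deserve explicit justification rather than being waved through: (i) the inequality $|AB|^2\leq\|A\|^2|B|^2$ and its square root must be read in the quadratic-form sense for unbounded positive self-adjoint operators (here this is harmless, since $\Dom{AB}=\Dom{B}$ because $A$ is bounded, and Heinz--Kato gives the form inequality $|AB|\leq\|A\|\,|B|$); (ii) you then need monotonicity of the trace extension on positive operators affiliated with $\nalgebra$, i.e.\ $S\leq T$ implies $\tau(S)\leq\tau(T)$ -- standard, e.g.\ via generalized singular values or via $\tau(S)=\sup_i\tau\bigl(S_i\bigr)$ as in Proposition \ref{unboundedderivativeweight}, but it is the actual content of the step "applying $\tau$"; (iii) the left-hand inequality only has content when $\tau(|AB|)<\infty$, i.e.\ $AB\in L_1(\nalgebra,\tau)$, in which case $|AB|^{1/2}\in L_2(\nalgebra,\tau)$ and your Cauchy--Schwarz step (together with $U^\ast U|AB|^{1/2}=|AB|^{1/2}$ and traciality, which lets you pass between $\tau(X^\ast X)$ and $\tau(XX^\ast)$) is legitimate; otherwise the inequality is vacuously true. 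With those caveats made explicit, your proof stands on its own and is arguably more self-contained than the paper's citation-based treatment, though it yields only this lemma and not the analytic machinery (H\"older, Minkowski) that the cited references develop simultaneously.
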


	\begin{theorem}[H\"older Inequality\index{inequality! H\"older}]
		\label{gholder}
		Let $\nalgebra$ be a von Neumann algebra and $\tau$ a normal faithful semifinite trace in $\nalgebra$. Let also $A_i\in\nalgebra$, $i=1,\dots, k$ and $\displaystyle \sum_{i=1}^{k} p_i>1$ such that $\displaystyle \sum_{i=1}^{k}\frac{1}{p_i}=1$, then
		$$\tau\left(\left|\prod_{i=1}^{k}A_i\right|\right)\leq \prod_{i=1}^{k}\tau(|A_i|^{p_i})^\frac{1}{p_i}.$$
	\end{theorem}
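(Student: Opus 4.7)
I would attack this in two stages: first reduce to the two-term case by induction on $k$, then prove the $k=2$ inequality by complex interpolation (Hadamard three-lines), using the already-established bound $|\tau(XY)|\leq \|X\|\tau(|Y|)$ from Lemma \ref{normtraceinequality} as the boundary estimate.

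\textbf{Reduction to $k=2$.} Assume the two-factor inequality $\tau(|AB|)\leq\tau(|A|^p)^{1/p}\tau(|B|^q)^{1/q}$ for $1/p+1/q=1$. For general $k$, set $r=p_k$ and $s=r/(r-1)$, so that $1/r+1/s=1$. Then $\sum_{i=1}^{k-1}1/p_i=1/s$, hence $q_i\defeq p_i/s$ satisfies $\sum_{i=1}^{k-1}1/q_i=1$. By the inductive hypothesis applied to the product $A_1\cdots A_{k-1}$ with exponents $q_i$, followed by the two-factor inequality with $(s,r)$ for the pair $(A_1\cdots A_{k-1},A_k)$, one obtains the desired bound, after checking that $\tau(|A_1\cdots A_{k-1}|^s)^{1/s}\leq \prod_{i=1}^{k-1}\tau(|A_i|^{p_i})^{1/p_i}$ via the same inductive step (this is the usual rearrangement: the $s$-th power ``eats'' the scaling).

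\textbf{The case $k=2$.} Let $A,B\in\nalgebra$. Take polar decompositions $A=u|A|$, $B=v|B|$, and $AB=W|AB|$; to avoid the inconvenience of $0$ in the spectra of $|A|,|B|$, first replace them by $|A|+\varepsilon,|B|+\varepsilon$, prove the inequality, and let $\varepsilon\downarrow 0$ using normality of $\tau$. In the strip $\Sigma=\{z\in\mathbb{C}\mid 0\leq\Re{z}\leq 1\}$, define
$$F(z)=\tau\!\left(W^{\ast}u\,|A|^{pz}\,v\,|B|^{q(1-z)}\right),$$
where the complex powers are defined via the bounded Borel functional calculus for the (now invertible) positive operators. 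Then $F$ is holomorphic in the interior of $\Sigma$ and bounded continuous on $\Sigma$ (analyticity comes from the fact that $z\mapsto |A|^{pz}$ and $z\mapsto|B|^{q(1-z)}$ are strongly holomorphic with values in $\nalgebra$, while $\tau$ is normal; boundedness is immediate after the $\varepsilon$-regularization, since $\||A|^{pz}\|\leq(\||A|\|+\varepsilon)^{p\Re{z}}$). At $z=1/p$ one has $pz=1$ and $q(1-z)=1$, so
$$F(1/p)=\tau(W^{\ast}u|A|v|B|)=\tau(W^{\ast}AB)=\tau(W^{\ast}W|AB|)=\tau(|AB|).$$

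\textbf{Boundary bounds and three-lines.} On $\Re{z}=0$, factor $|B|^{q(1-z)}=|B|^{-qz}|B|^{q}$. The operator $W^{\ast}u|A|^{pz}v|B|^{-qz}$ is a product of contractions (the complex powers $|A|^{pz}$ and $|B|^{-qz}$ with purely imaginary exponent satisfy $\|\cdot\|\leq 1$), so by Lemma \ref{normtraceinequality},
$$|F(z)|\leq \|W^{\ast}u|A|^{pz}v|B|^{-qz}\|\,\tau(|B|^{q})\leq \tau(|B|^{q}),\qquad \Re{z}=0.$$
A symmetric factorisation $|A|^{pz}=|A|^{p}|A|^{p(z-1)}$ together with the cyclicity of the trace gives, on $\Re{z}=1$,
$$|F(z)|\leq \tau(|A|^{p}),\qquad \Re{z}=1.$$
The Hadamard three-lines lemma then yields $|F(z)|\leq\tau(|A|^{p})^{\Re{z}}\tau(|B|^{q})^{1-\Re{z}}$, and evaluation at $z=1/p$ finishes the proof after letting $\varepsilon\downarrow 0$.

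\textbf{Main obstacle.} The routine part is the interpolation; the delicate point is guaranteeing that $F$ is a bona fide bounded holomorphic function on the strip so that the three-lines lemma actually applies, and that one can pass to the limit $\varepsilon\downarrow 0$ (which requires $\tau(|A|^{p})$ and $\tau(|B|^{q})$ to control the regularized traces, handled by monotone convergence via normality of $\tau$). Secondarily, one must treat the non-invertible case cleanly, which is precisely what the $\varepsilon$-regularization is for; alternatively, one can restrict $F$ to the support projection of $|A|$ on one side and of $|B|$ on the other, which makes the complex powers partial isometries on the boundary lines and preserves all the bounds.
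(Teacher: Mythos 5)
Your complex-interpolation strategy for $k=2$ is in the spirit of what the paper indicates (the text gives no proof of Theorem \ref{gholder}, only the remark that it follows from analyticity, functional calculus and normality of the trace, with details deferred to the references), but as written the argument has two genuine gaps. The more serious one is the reduction to $k=2$. Your inductive step requires $\tau\left(|A_1\cdots A_{k-1}|^{s}\right)^{1/s}\leq\prod_{i=1}^{k-1}\tau(|A_i|^{p_i})^{1/p_i}$ with $\sum_{i<k}1/p_i=1/s$, which is the \emph{generalized} ($L_s$-valued) Hölder inequality of Corollary \ref{g2holder} type, not the $L_1$-valued statement supplied by your inductive hypothesis. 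The justification you offer --- ``the $s$-th power eats the scaling, the usual rearrangement'' --- is a commutative argument: it relies on $|A_1\cdots A_{k-1}|^{s}=|A_1|^{s}\cdots|A_{k-1}|^{s}$, which fails for noncommuting operators. Indeed the paper deduces Corollary \ref{g2holder} \emph{from} the $k$-factor Theorem \ref{gholder}, precisely because the $r$-th-power version does not reduce to the $L_1$ version by rearrangement; your induction runs this implication backwards. To close the argument you must either prove the two-factor inequality directly in the stronger form $\tau(|AB|^{r})^{1/r}\leq\tau(|A|^{p})^{1/p}\tau(|B|^{q})^{1/q}$ for all $1/p+1/q=1/r$ (a variant of the same interpolation) and then induct on $k$ carrying the parameter $r$ along, or run the three-lines argument directly in $k$ complex variables over the tube based on the simplex $\left\{t_i\geq0,\ \sum_i t_i=1\right\}$.

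The second gap concerns the hypotheses of the three-lines lemma. For a semifinite but non-finite trace your $\varepsilon$-regularization is counterproductive: $\tau\bigl((|B|+\varepsilon)^{q}\bigr)\geq\varepsilon^{q}\,\tau(\mathbbm{1})=\infty$, so the boundary estimate collapses. Your fallback (support projections) repairs the boundary lines but not the boundedness of $F$ in the \emph{interior} of the strip, which the three-lines lemma also needs: the bound $|F(z)|\leq C\,\tau\bigl(|B|^{q(1-\Re{z})}\bigr)$ is useless there, since $\tau\bigl(|B|^{\delta}\bigr)$ may be infinite for small $\delta>0$ even when $\tau(|B|^{q})<\infty$ (take $|B|=\mathrm{diag}(1/n)$ on $\ell^{2}$ with the standard trace: $\sum_n n^{-q}<\infty$ but $\sum_n n^{-\delta}=\infty$ for $\delta\leq1$, which occurs for $\Re{z}\geq 1/p$). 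The standard repair is to truncate the spectrum away from $0$: replace $|A|$ by $|A|E^{|A|}_{[1/n,\infty)}$ and similarly for $|B|$; since $\tau(|A|^{p})<\infty$ forces $\tau\bigl(E^{|A|}_{[1/n,\infty)}\bigr)\leq n^{p}\tau(|A|^{p})<\infty$, all operators involved become trace class, $F$ is manifestly bounded and continuous on the closed strip and analytic inside, and one recovers the general case by normality of $\tau$ in the limit $n\to\infty$. With these two repairs your proof goes through.
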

	
	The reader should keep in mind that H\"older's inequality is a very interesting result to us, since it says something regarding the trace of a product and this is the case in Dyson's series. Nevertheless, it is used in the proof of the Minkowski Inequality which is imperative to define a normed vector space.
	
	\begin{theorem}[Minkowski's Inequality\index{inequality! Minkowski}]
		\label{minkowski}
		Let $\nalgebra$ be a von Neumann algebra, $\tau$ a normal faithful semifinite trace in $\nalgebra$, and $p,\, q>1$ such that $\frac{1}{p}+\frac{1}{q}=1$. Then
		\begin{enumerate}[(i)]
			\item for every $A\in \nalgebra$, $\displaystyle \tau(|A|^p)^\frac{1}{p}=\sup\left\{\left|\tau(AB)\right| \ \middle | \ B\in \nalgebra,  \tau\left(|B|^q\right)\leq 1\right\};$
			\item for every $A,B \in \nalgebra$, $\displaystyle \|A+B\|_p\leq \|A\|_p+\|B\|_p$.
		\end{enumerate}
	\end{theorem}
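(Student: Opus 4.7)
The plan is to prove (i) by combining H\"older's inequality with an explicit dual witness, and to derive (ii) from (i) by the standard duality argument.

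For the upper bound in (i), given $B\in\nalgebra$ with $\tau(|B|^q)\le 1$, I would apply Theorem \ref{gholder} with $k=2$, $p_1=p$, $p_2=q$ (so $\tfrac{1}{p}+\tfrac{1}{q}=1$ and trivially $p+q>1$), together with Lemma \ref{normtraceinequality}, to obtain
$$|\tau(AB)|\le\tau(|AB|)\le\tau(|A|^p)^{1/p}\tau(|B|^q)^{1/q}\le\tau(|A|^p)^{1/p}.$$
For the reverse direction, assuming first $0<\tau(|A|^p)<\infty$, I would produce an explicit optimizer via the polar decomposition $A=U|A|$ (so $U^*U$ is the support projection of $|A|$) by setting
$$B=\tau(|A|^p)^{-(p-1)/p}\,|A|^{p-1}U^*.$$
Since $|A|^{p-1}$ and $U^*$ are bounded, $B\in\nalgebra$. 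Using $(p-1)q=p$, together with the functional-calculus identity $(UYU^*)^{1/q}=UY^{1/q}U^*$ valid for $Y\ge 0$ supported on $\Ran U^*U$ (because $U$ implements a unitary between $U^*U\hilbert$ and $UU^*\hilbert$), one obtains $|B|^q=\tau(|A|^p)^{-1}U|A|^pU^*$. Then the trace property $\tau(UXU^*)=\tau(U^*UX)$ for $X\ge 0$, combined with $U^*U|A|^p=|A|^p$, gives $\tau(|B|^q)=1$ and $\tau(AB)=\tau(|A|^p)^{1/p}$. The degenerate case $\tau(|A|^p)=0$ forces $A=0$ by faithfulness of $\tau$, and $\tau(|A|^p)=+\infty$ is handled by applying the finite case to the spectral truncations $A_n=AE_{[0,n]}(|A|)$ and using normality of $\tau$ to conclude that the supremum is also $+\infty$.

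Part (ii) then follows by duality: for any $C\in\nalgebra$ with $\tau(|C|^q)\le 1$, the upper bound in (i) gives
$$|\tau((A+B)C)|\le|\tau(AC)|+|\tau(BC)|\le\|A\|_p+\|B\|_p,$$
and taking the supremum over such $C$ and applying (i) to $A+B$ yields $\|A+B\|_p\le\|A\|_p+\|B\|_p$.

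The main obstacle is the lower bound in (i). The computation itself is short once the polar decomposition is in place, but care is required on two points: verifying the functional-calculus identity $(UYU^*)^{1/q}=UY^{1/q}U^*$ (which relies on $U^*U$ commuting with $Y$ and on $U$ being a partial isometry with initial projection $U^*U$), and legitimising the cyclic-trace manipulation $\tau(U|A|^pU^*)=\tau(U^*U|A|^p)=\tau(|A|^p)$, which is justified by Lemma \ref{normtraceinequality} (or directly by the tracial property applied to positive elements after a standard polar-type factorisation). Once these are secured, both the equality $\tau(|B|^q)=1$ and the identity $\tau(AB)=\|A\|_p$ fall out immediately.
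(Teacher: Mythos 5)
The paper itself does not prove Theorem \ref{minkowski}; it defers the details to \cite{RCS18} and \cite{Ruscai72}, so there is no in-paper argument to compare against line by line. Your route is the standard one: H\"older (Theorem \ref{gholder} with $k=2$) plus $|\tau(C)|\leq\tau(|C|)$ for the upper bound in (i), an explicit dual witness $B=\tau(|A|^p)^{-(p-1)/p}|A|^{p-1}U^\ast$ built from the polar decomposition for the lower bound, and duality for (ii). The core computation is sound: since $t\mapsto t^{1/q}$ vanishes at $0$ and $U$ is unitary between $U^\ast U\hilbert$ and $UU^\ast\hilbert$, the identity $(UYU^\ast)^{1/q}=UY^{1/q}U^\ast$ is legitimate for $Y\geq0$ supported under $U^\ast U$; together with $(p-1)q=p$ and the tracial identity $\tau(V^\ast V)=\tau(VV^\ast)$ this gives $\tau(|B|^q)=1$ and $\tau(AB)=\tau(|A|^p)^{1/p}$, and (ii) follows. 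This is exactly what the cited references do.

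The one genuine gap is your handling of the case $\tau(|A|^p)=+\infty$. The truncations $A_n=AE_{[0,n]}(|A|)$ need not fall under the finite case at all: take $A=\mathbbm{1}$ in any semifinite algebra with $\tau(\mathbbm{1})=\infty$ (say a type $\mathrm{II}_\infty$ factor); then $A_n=\mathbbm{1}$ for all $n\geq1$ and $\tau(|A_n|^p)=\infty$, so normality gives you nothing and the reduction never terminates. Cutting the spectrum away from $0$ does not help for the same reason, and one cannot in general find finite-trace projections commuting with $|A|$ (again $A=\mathbbm{1}$ in $B(\ell^2)$). What is actually needed is the converse implication ``$\sup\{|\tau(AB)| : \tau(|B|^q)\leq1\}<\infty$ implies $\tau(|A|^p)<\infty$'', which requires approximating with finite-trace projections that need not commute with $|A|$ and is where the real work in the references lies. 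Since the paper only ever applies the theorem on the set $\{A\in\nalgebra : \tau(|A|^p)<\infty\}$ whose completion defines $L_p(\nalgebra,\tau)$, your argument covers every use made of the result, but as a proof of the statement ``for every $A\in\nalgebra$'' it is incomplete at this point.
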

	
	Together, Theorem \ref{gholder} and Theorem \ref{minkowski} provide us with another generalization of H\"older's inequality. This inequality is obvious in the \hyphenation{com-mu-ta-tive}commutative case, but not in the noncommutative case.
	
	\begin{corollary}[H\"older Inequality\index{inequality! H\"older}]
		\label{g2holder}
		Let $\nalgebra$ be a von Neumann algebra and $\tau$ a normal faithful semifinite trace in $\nalgebra$, let also $A,B\in\nalgebra$ and $p,\, q>1$ such that $\frac{1}{p}+\frac{1}{q}=\frac{1}{r}$. Then
		$$\tau(|AB|^r)^\frac{1}{r}\leq \tau(|A|^p)^\frac{1}{p}\tau(|B|^q)^\frac{1}{q}.$$
	\end{corollary}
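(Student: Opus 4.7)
The plan is to combine the duality provided by Minkowski's inequality (Theorem \ref{minkowski}(i)) with the three-factor H\"older inequality (Theorem \ref{gholder}). The decisive algebraic observation is that if $r'$ denotes the H\"older conjugate of $r$, \ie $\frac{1}{r}+\frac{1}{r'}=1$, then the hypothesis $\frac{1}{p}+\frac{1}{q}=\frac{1}{r}$ immediately yields
\[
\frac{1}{p}+\frac{1}{q}+\frac{1}{r'}=1,
\]
which is precisely the compatibility condition needed to apply Theorem \ref{gholder} to a product of three factors with exponents $p$, $q$, $r'$.

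The boundary case $r=1$ (equivalently $\frac{1}{p}+\frac{1}{q}=1$) is nothing but Theorem \ref{gholder} with $k=2$, and I would dispose of it separately. Assuming $r>1$, so that $r'\in(1,\infty)$, I would apply the supremum characterization in Theorem \ref{minkowski}(i) to the operator $AB\in\nalgebra$ at exponent $r$ to obtain
\[
\tau(|AB|^{r})^{1/r}=\sup\left\{|\tau(ABC)|\,:\,C\in\nalgebra,\ \tau(|C|^{r'})\le 1\right\}.
\]
For each admissible $C$, Theorem \ref{gholder} applied to the three-factor product $ABC$ with exponents $p,q,r'$ yields
\[
|\tau(ABC)|\le\tau(|ABC|)\le\tau(|A|^{p})^{1/p}\tau(|B|^{q})^{1/q}\tau(|C|^{r'})^{1/r'}\le\tau(|A|^{p})^{1/p}\tau(|B|^{q})^{1/q},
\]
and taking the supremum over $C$ produces the claimed bound.

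The main obstacle is essentially bookkeeping: one must verify that $(p,q,r')$ satisfies the hypotheses of Theorem \ref{gholder}, namely $\sum 1/p_i=1$ (checked above) and $\sum p_i>1$ (automatic since each exponent exceeds $1$). A subtler point is the tacit assumption $r\ge 1$, which is the regime in which $\|\cdot\|_r$ admits a Banach-space dual representation; under the hypothesis $p,q>1$ this corresponds to $\frac{1}{p}+\frac{1}{q}\le 1$, and is the principal content of the corollary. At the single boundary value $r=1$, Lemma \ref{normtraceinequality} (or Theorem \ref{gholder} with $k=2$) stands in for the duality step, and the proof collapses to the two-factor inequality already established.
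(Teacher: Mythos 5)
Your proof is correct and is exactly the route the paper intends: the paper offers no written proof of this corollary, only the remark that it follows by combining Theorem \ref{gholder} with Theorem \ref{minkowski}, i.e.\ the duality representation of $\tau(|\cdot|^r)^{1/r}$ from Theorem \ref{minkowski}(i) plus the three-factor H\"older inequality with exponents $p,q,r'$, which is precisely your argument. Your caveat that the duality step tacitly requires $r\ge 1$ (not forced by $p,q>1$ alone) is well taken, but it matches how the corollary is actually invoked later (e.g.\ in Proposition \ref{exconvex}(iii), where $r\ge 1$ is assumed).
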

	
	\begin{definition}\index{noncommutative $L_p$-space! Segal-Dixmier}
		Let $\nalgebra$ be a von Neumann algebra and $\tau$ a normal, faithful and semifinite trace on $\nalgebra$. We define the noncommutative $L_p$-space, denoted by $L_p(\nalgebra,\tau)$, as the completion of
		$$\left\{A\in \nalgebra \ \middle| \ \tau\left(|A|^p\right)<\infty\right\}$$
		with respect to the norm $\displaystyle\|A\|_p=\tau\left(|A|^p\right)^\frac{1}{p}$.
		
		We also set $L_\infty(\nalgebra, \tau)=\nalgebra$ with $\|A\|_\infty=\|A\|$.
		
	\end{definition}
	
	Now, it is easy to see that, for $p,\, q\geq1$ H\"older conjugated, the H\"older and Minkowski inequalities can be extended to the whole space $L_p(\nalgebra,\tau)$ through an argument of density  and normality of the trace. With this definition, Lemma \ref{normtraceinequality} and Corollary \ref{g2holder}, and Theorem \ref{minkowski} can be expressed as
	$$\begin{aligned}
	\|AB\|_1&\leq \|A\|_p\|B\|_q,\\
	\|A+B\|_p &\leq \|A\|_p+\|B\|_p,\\
	\end{aligned}$$ 
	and this last inequality is a triangular inequality for $\|\cdot\|_p$. It is important to notice that faithfulness guarantees $\|A\|_p=0 \Rightarrow A=0$, however semifiniteness was used only at the very end of Theorem \ref{minkowski} and it is completely irrelevant when talking about noncommutative $L_p$-spaces, since the trace is never infinity on these operators.
	
	It is not our intention in this text to discuss this subject, but notice that if $\tau$ is not semifinite, we can define the noncommutative $L_p$ space to a ``small'' algebra $\overline{\mathfrak{M}_\tau}^{SOT}$.
	
	\begin{theorem}
		\label{dualLp}
		Let $p,\, q\geq 1$ such that $\frac{1}{p}+\frac{1}{q}=1$. Then the function below is an isometric isomorphism:
		
		\begin{center}
			\begin{tabular}{@{\hskip 2pt}c@{\hskip 2pt}c@{\hskip 2pt}c@{\hskip 2pt}c@{\hskip 2pt}c@{\hskip 2pt}c}
				$\Xi$: $L_p(\nalgebra,\tau)$&$\to$& $L_q(\nalgebra,\tau)^\ast$& & \\
				$A$ &$\mapsto$&$\tau_A:$&$L_q(\nalgebra,\tau)$&$\to $&$\mathbb{C}$\\
				& & &$B$&$\mapsto $&$\tau(AB)$.\\
			\end{tabular}
		\end{center}
	\end{theorem}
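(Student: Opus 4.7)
The plan is to verify boundedness, isometry, and surjectivity of $\Xi$, from which injectivity follows automatically. The first two steps are immediate consequences of the Hölder/Minkowski machinery already established; surjectivity is the main obstacle.

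\textbf{Boundedness and isometry.} For $A,B \in \nalgebra$ with $\|A\|_p, \|B\|_q$ finite, Corollary \ref{g2holder} (with $r=1$) and Lemma \ref{normtraceinequality} give
$$|\tau(AB)| \leq \tau(|AB|) = \|AB\|_1 \leq \|A\|_p \|B\|_q.$$
Extending by density in both factors, this shows that $\tau_A \in L_q(\nalgebra,\tau)^*$ with $\|\tau_A\| \leq \|A\|_p$, so $\Xi$ is a well-defined linear contraction. For the reverse inequality, Theorem \ref{minkowski}(i) gives, for $A \in \nalgebra$ with $\|A\|_p < \infty$,
$$\|A\|_p = \sup\{|\tau(AB)| : B \in \nalgebra,\ \|B\|_q \leq 1\} \leq \|\tau_A\|,$$
and continuity extends this to all of $L_p(\nalgebra,\tau)$. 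Thus $\Xi$ is an isometric embedding with closed range, and in particular injective.

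\textbf{Surjectivity.} Given $\varphi \in L_q(\nalgebra,\tau)^*$, the task is to produce $A \in L_p(\nalgebra,\tau)$ with $\tau_A = \varphi$. The strategy is: restrict $\varphi$ to the $\|\cdot\|_q$-dense subspace $\nalgebra \cap L_q(\nalgebra,\tau)$; observe that this restriction is $\sigma$-weakly continuous on norm-bounded subsets of $\nalgebra$ (since on such sets, $\sigma$-weak convergence and $\|\cdot\|_q$-boundedness combine via Hölder's inequality with test operators in $L_p(\nalgebra,\tau) \cap \nalgebra$ to control the pairing); conclude that $\varphi|_\nalgebra$ extends to a normal functional on $\nalgebra$; and apply a noncommutative Radon-Nikodym argument (as in \cite{terp81}) to obtain a closed operator $A$ affiliated with $\nalgebra$ satisfying $\varphi(B) = \tau(AB)$ on a dense subspace of $L_q(\nalgebra,\tau)$. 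To verify $A \in L_p(\nalgebra,\tau)$ with $\|A\|_p \leq \|\varphi\|$, truncate via the spectral projections $E_n$ of $|A|$ for which $\tau(E_n) < \infty$ and $\|E_n A E_n\| < \infty$: each $E_n A E_n$ lies in $\nalgebra \cap L_p(\nalgebra,\tau)$, and the isometry established above yields $\|E_n A E_n\|_p = \|\tau_{E_n A E_n}\| \leq \|\varphi\|$ uniformly in $n$. Monotone convergence of $\tau$ then forces $\|A\|_p \leq \|\varphi\|$, so $A \in L_p(\nalgebra,\tau)$, and the identity $\tau_A = \varphi$ extends from the dense subspace to all of $L_q(\nalgebra,\tau)$ by continuity.

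The main obstacle is the Radon-Nikodym step: identifying the normal functional $\varphi|_\nalgebra$ with integration against a $\tau$-measurable operator affiliated with $\nalgebra$, and then controlling the $L_p$-norm of that operator via spectral truncation. Once this is in place, the proof reduces to repeated application of Hölder's inequality together with the density of $\nalgebra \cap L_p(\nalgebra,\tau)$ in $L_p(\nalgebra,\tau)$.
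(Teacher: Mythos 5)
The paper gives no proof of Theorem \ref{dualLp}: it is quoted as background from the literature (Segal--Dixmier--Terp), so there is no ``paper's approach'' to compare against. Your first half is fine and is exactly the standard argument: H\"older plus Lemma \ref{normtraceinequality} give $\|\tau_A\|\leq\|A\|_p$, and Theorem \ref{minkowski}$(i)$ gives the reverse inequality on the dense subspace $\nalgebra\cap L_p(\nalgebra,\tau)$, whence $\Xi$ is an isometric embedding.

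The surjectivity argument, however, has a genuine gap at the step ``conclude that $\varphi|_\nalgebra$ extends to a normal functional on $\nalgebra$.'' When $\tau$ is not finite, $\nalgebra\not\subset L_q(\nalgebra,\tau)$ (the identity already has infinite trace), so $\varphi$ is not defined on $\nalgebra$; worse, a $\|\cdot\|_q$-continuous functional on $\nalgebra\cap L_q(\nalgebra,\tau)$ is in general \emph{unbounded} for the operator norm (already on $\ell_q$, pairing against an $\ell_p$ sequence that is not in $\ell_1$ is unbounded on the $\sup$-norm unit ball), so there is no bounded normal functional on $\nalgebra$ to which the Radon--Nikodym theorem could be applied globally. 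The standard repair is to localize: for each projection $e$ with $\tau(e)<\infty$ the map $B\mapsto\varphi(eBe)$ \emph{is} bounded and normal (since $\|eBe\|_q\leq\tau(e)^{1/q}\|B\|$), yielding a compatible family of local densities $A_e$ with $\|A_e\|_p\leq\|\varphi\|$ uniformly, which one then assembles into a single $A\in L_p(\nalgebra,\tau)$ using the uniform bound together with weak compactness (or, for $1<q<\infty$, one argues instead via uniform convexity/reflexivity of $L_q$). Your final truncation estimate $\|E_nAE_n\|_p\leq\|\varphi\|$ is correct once such an $A$ exists, but the existence itself is the missing idea. Separately, note that surjectivity is simply false for the endpoint $p=1$, $q=\infty$ included in the statement: $\Xi$ maps $L_1(\nalgebra,\tau)\cong\nalgebra_\ast$ into $\nalgebra^\ast$, which is strictly larger for infinite-dimensional $\nalgebra$; any correct proof must exclude (or separately treat) that case, which your argument does not do.
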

	
	This last result is the famous identification $L_p(\nalgebra,\tau)^\ast=L_q(\nalgebra,\tau)$ where $p,\, q>1$ are H\"older conjugated.
	
	We do not intend to do a long presentation about the Radon-Nikodym Theorem, to which the next proposition is somewhat related, but we need to write the next result since it is important to understand what is done here.
	
	\begin{proposition}
		\label{unboundedderivativeweight}
		Let $\nalgebra$ be a von Neumann algebra, $\phi$ a faithful normal semifinite weight on $\nalgebra$ and $H\eta \mathfrak{M}_{\tau}^+$.  If $\left(H_i\right)_{i\in I} \in  \mathfrak{M}_{\tau}^+$ is an increasing net such that $H_i \to H$, then
		\begin{equation}
		\label{eq:definitionx1}
		\phi_H(A)\doteq\sup_{i\in I}{\phi\left(H_i^\frac{1}{2}AH_i^\frac{1}{2}\right)}, \quad A\in\nalgebra
		\end{equation}
		defines a normal semifinite weight $\phi_H$ on $\nalgebra$, which is independent of the choice of the net $(H_i)_{i\in I}$ with $H_n \to H$. 
		In addition, $\phi_H$ is faithful if, and only if, $H$ is non-singular.
		
		 Moreover, if $(H_i)_{i\in I}$ is an increasing  net of positive operators affiliated with $\mathfrak{M}_{\tau}^+$ such that $H_i\to H$, then 
		$$\phi_H=\sup_{i\in I}\phi_{H_i}.$$
		
	\end{proposition}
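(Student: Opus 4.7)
The plan is to establish the five claims in order --- well-definedness of the supremum in (\ref{eq:definitionx1}), the weight axioms, normality and semifiniteness, independence of the approximating net, the faithfulness dichotomy, and finally the identity $\phi_H=\sup_i\phi_{H_i}$ for affiliated nets --- relying throughout on normality and positivity of $\phi$ together with the order structure of the nets involved.

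First I would check that, for each fixed $A\in\nalgebra^+$, the net $i\mapsto\phi(H_i^{1/2}AH_i^{1/2})$ is increasing in $[0,\infty]$, so that the supremum defining $\phi_H(A)$ is well-defined. Positive homogeneity of $\phi_H$ is then immediate, and additivity on $\nalgebra^+$ follows from additivity of $\phi$ together with the interchange of increasing suprema. For normality I would exchange two monotone suprema: if $(A_\alpha)$ is an increasing net in $\nalgebra^+$ with $A_\alpha\nearrow A$, then
\begin{equation*}
\phi_H(A)=\sup_i\sup_\alpha\phi(H_i^{1/2}A_\alpha H_i^{1/2})=\sup_\alpha\sup_i\phi(H_i^{1/2}A_\alpha H_i^{1/2})=\sup_\alpha\phi_H(A_\alpha),
\end{equation*}
where the inner supremum is justified by normality of $\phi$.

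For semifiniteness I would exploit the $\tau$-measurability of $H$: its spectral projections $E_n=\chi_{[0,n]}(H)$ satisfy $HE_n\in\nalgebra$ and $\tau(\mathbbm{1}-E_n)\to 0$, so elements of the form $E_n X E_n$ with $X$ in the definition ideal of $\phi$ yield a family of positive elements on which $\phi_H$ is finite, and a density argument shows that their linear span is $\sigma$-weakly dense. Independence of the approximating net is obtained by comparing any two increasing nets $(H_i)$ and $(K_j)$ in $\mathfrak{M}_\tau^+$ with common limit $H$: fixing $H_i$ and exploiting normality of $\phi$ on the increasing net $K_j^{1/2}AK_j^{1/2}$ gives $\phi(H_i^{1/2}AH_i^{1/2})\leq\sup_j\phi(K_j^{1/2}AK_j^{1/2})$, and taking $\sup_i$ followed by the symmetric argument yields equality.

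For the faithfulness equivalence: if $H$ is non-singular and $A\in\nalgebra^+$ satisfies $\phi_H(A)=0$, then $\phi(H_i^{1/2}AH_i^{1/2})=0$ for each $i$; faithfulness of $\phi$ then forces $A^{1/2}H_i^{1/2}=0$, passing to the limit yields $A^{1/2}H^{1/2}=0$, and injectivity of $H$ on the range of $H^{1/2}$ forces $A=0$. The converse is immediate by testing on the kernel projection of $H$. Finally, for $\phi_H=\sup_i\phi_{H_i}$ when each $H_i$ is only affiliated with $\nalgebra$ in $\mathfrak{M}_\tau^+$, I would approximate each $H_i$ from within by spectral truncations $H_{i,k}\in\mathfrak{M}_\tau^+$, apply the already-established independence of the approximating net to write $\phi_{H_i}=\sup_k\phi_{H_{i,k}}$, and then interchange the resulting iterated suprema. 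The main obstacle I anticipate is semifiniteness: since $\phi$ and $\tau$ are a priori unrelated, exhibiting a $\sigma$-weakly dense family of positive $A$ with $\phi_H(A)<\infty$ requires carefully intertwining the spectral decomposition of $H$ with the definition ideal of $\phi$, and this ingredient is the subtlest part of the argument.
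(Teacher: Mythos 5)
The paper does not actually prove this proposition: it is quoted as background (it is essentially the Pedersen--Takesaki/Terp construction of the weight $\tau_H$), with the reader referred to the literature, so there is no in-paper argument to compare yours against; I am judging your proposal against the standard proofs.

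Your outline has a genuine gap at its very first step, and essentially everything downstream leans on it. You assert that for fixed $A\in\nalgebra^+$ the net $i\mapsto\phi\bigl(H_i^{1/2}AH_i^{1/2}\bigr)$ is increasing, but for a general faithful normal weight $\phi$ this is false: $H_1\le H_2$ does not imply $H_1^{1/2}AH_1^{1/2}\le H_2^{1/2}AH_2^{1/2}$, nor even $\phi\bigl(H_1^{1/2}AH_1^{1/2}\bigr)\le\phi\bigl(H_2^{1/2}AH_2^{1/2}\bigr)$. Already in $M_2(\mathbb{C})$ with $\phi=\mathrm{Tr}(D\,\cdot\,)$ for a faithful density matrix $D$ with a nonzero off-diagonal entry, one can take $H_2$ a multiple of the identity dominating a non-diagonal $H_1$ and $A$ a suitable rank-one projection to get $\phi\bigl(H_1^{1/2}AH_1^{1/2}\bigr)>\phi\bigl(H_2^{1/2}AH_2^{1/2}\bigr)$; the failure persists even for commuting spectral truncations of a single $H$. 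Without monotonicity, the additivity of $\phi_H$, the interchange of suprema in your normality argument, the one-sided comparison in your net-independence argument (which in any case is not justified as written, since $H_i\not\le K_j$ for any $j$ in general --- the comparison has to pass through the common quadratic-form limit $A^{1/2}HA^{1/2}$ and normality of $\phi$ on the extended positive part), and the final iterated-supremum identity all collapse. Your semifiniteness step has the same hidden assumption: $\phi(E_nXE_n)<\infty$ for $X$ in the definition ideal is automatic for a trace but not for a general weight. Everything is rescued exactly when $\phi$ is tracial --- which is the only case the paper actually uses, since $\phi_H$ there is always $\tau_H$ --- because then $\phi\bigl(H_i^{1/2}AH_i^{1/2}\bigr)=\phi\bigl(A^{1/2}H_iA^{1/2}\bigr)$ and monotonicity, net-independence and semifiniteness all follow. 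For general $\phi$ the correct statement (Pedersen--Takesaki) requires $H$ to be affiliated with the centralizer of $\phi$ and defines $\phi_H$ as a limit along the specific approximants $H(\mathbbm{1}+\varepsilon H)^{-1}$; the existence of that limit is the real content and is not an exercise in monotone suprema. You should either add the traciality (or centralizer) hypothesis explicitly and derive monotonicity from it, or switch to the Pedersen--Takesaki argument.
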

	
	Notice that if $\nalgebra\subset B(\hilbert)$, the conditions on the net in the previous theorem is equivalente to $H_i \xrightarrow{SOT} H$, because of the Vigier's Theorem. Hence, the next equation holds
	\begin{equation}
	\label{eq:calculationX17}
	\phi_H(A)=\lim_{H_i \to H}\phi\left(H_i^\frac{1}{2}AH_i^\frac{1}{2}\right)=\lim_{H_i \to H}\phi\left(H_i A\right), \quad A\in \nalgebra.
	\end{equation}

	\def\typeiii{0}
	\if\typeiii1
	
	Roughly speaking, Haagerup's generalization of noncommuative $L_p$-spaces for general von Neumann algebras uses several identifications between a von Neumann algebra (and other objects related to it) and the crossed product $\nalgebra\rtimes_\alpha \mathbb{R}$, where $\{\tau^{\varphi}_t\}_{t\in\mathbb{R}}$ is the modular automorphism group obtained throughout the faithful normal and semifinite weight $\varphi$. In the following we will present some of these identifications based on our convenience. A detailed presentation can be found in \cite{Haagerup79I}, \cite{Haagerup79II} and \cite{terp81}.
	
	Consider the representations $\pi_\alpha$ of $\nalgebra$ and $\lambda$ of $\mathbb{R}$ both in the Hilbert space $L_2(\mathbb{R},\nalgebra)$ used in the definition of the crossed product, \ie, defined by
	$$\begin{aligned}
	\left(\pi_\alpha(A)(f)\right)(t)&=\alpha^{-1}_t(A) f(t), & A\in \nalgebra, \ & \forall f\in L_2(\mathbb{R},\nalgebra), \ & \forall t\in \mathbb{R},\\
	\left(\lambda(g)f\right)(t)&=f(g^{-1}t), & g\in \mathbb{R}, \ & \forall f\in L_2(\mathbb{R},\nalgebra), \ & \forall t\in \mathbb{R}.\\
	\end{aligned}$$ 
	
	Consider also the dual action $\theta$ of $\mathbb{R}$ in $\nalgebra\rtimes_\alpha \mathbb{R}$, characterized by
	$$\begin{aligned}
	&\theta_s A=A, &\ A\in \pi_\alpha(\nalgebra),\\
	&\theta_s \lambda(t)=e^{-\iu st}\lambda(t). & \ t\in \mathbb{R}.\\
	\end{aligned}$$ 
	
	Notice that by \cite[Lemma 3.6]{Haagerup79I}, we have the following characterization:
	$$\pi_\alpha(\nalgebra)=\left\{A\in \nalgebra\rtimes_\alpha \mathbb{R}  \ \middle| \ \theta_t A=A \ \forall t\in\mathbb{R}\right\}. $$
	
	From now until the end of this section we will identify $\nalgebra$ with $\pi_\alpha(\nalgebra)$.
	
	\begin{lemma}
		\label{Tcrossedproduct}
		Let $\nalgebra$ be a von Neumann algebra. The following properties hold for the operator $T:\left(\nalgebra\rtimes_\alpha \mathbb{R}\right)_+\to \widehat{\nalgebra}_+$, given by
		$$TA=\int_{-\infty}^{\infty}\theta_t(A)dt, \quad A\in \left(\nalgebra\rtimes_\alpha \mathbb{R}\right)_+ \ ,$$
		and characterized by
		$$(\phi,TA)=\int_{-\infty}^{\infty}\phi\circ\theta_t(A)dt, \quad \forall \phi\in\nalgebra_\ast^+:$$
		\begin{enumerate}[(i)]
			\item $T$ is a normal faithful semifinite operator valued weight;
			\item There exists a unique normal faithful semifinite trace $\tau$ on $\nalgebra \rtimes_\alpha \mathbb{R}$ such that $(D\phi\circ T:D\tau)_t=\lambda(t)$ for all $t\in \mathbb{R}$;
			\item The trace $\tau$ satisfies $\tau\circ\theta_t=e^{-t}\tau$ for all $t\in\mathbb{R}$.
		\end{enumerate}
	\end{lemma}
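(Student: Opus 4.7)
The plan is to verify the three items in order, exploiting the translation invariance of Lebesgue measure on $\mathbb{R}$ and Connes' Radon--Nikodym cocycle theorem.

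For (i), I would argue directly from the definition. Each $\phi\in\nalgebra_*^+$ composed with $\theta_t$ is a normal positive functional, and $t\mapsto\phi\circ\theta_t(A)$ is Borel measurable and non-negative, so $\int_{-\infty}^\infty\phi\circ\theta_t(A)\,dt\in[0,\infty]$ is well defined. Normality of $T$ follows from monotone convergence inside the integral. To check that $TA\in\widehat{\nalgebra}_+$ I use the identification $\pi_\alpha(\nalgebra)=\{x:\theta_s x = x\}$ recalled before the statement: by translation invariance of Lebesgue measure,
$$\theta_s(TA)=\int_{-\infty}^\infty \theta_{s+t}(A)\,dt = TA,$$
so $TA$ lies in the extended positive part of the fixed-point subalgebra, which is $\widehat{\nalgebra}_+$. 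Faithfulness: if $TA=0$ for $A\in(\nalgebra\rtimes_\alpha\mathbb{R})_+$, then $\int\phi\circ\theta_t(A)\,dt=0$ for every normal positive $\phi$, and since $t\mapsto \phi\circ\theta_t(A)$ is continuous and non-negative it vanishes identically, hence $A=\theta_0(A)=0$. Semifiniteness is obtained by exhibiting a $\sigma$-weakly dense set of elements with finite $T$-value, using elements of the form $\pi_\alpha(x)\lambda(f)\lambda(f)^\ast\pi_\alpha(x)^\ast$ with $f\in C_c(\mathbb{R})$.

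For (ii), I would apply Haagerup's dual weight construction. Setting $\tilde\varphi\defeq \varphi\circ T$ produces a faithful normal semifinite weight on $\nalgebra\rtimes_\alpha\mathbb{R}$ whose modular automorphism group acts as $\alpha^{-1}$ on $\pi_\alpha(\nalgebra)$ and fixes $\lambda(\mathbb{R})$. Because $\alpha$ is itself the modular group $\tau^\varphi$, the one-parameter family $t\mapsto\lambda(t)$ satisfies the Connes cocycle identity with respect to $\sigma^{\tilde\varphi}$. Connes' Radon--Nikodym theorem then yields a unique normal faithful semifinite weight $\tau$ with $(D\tilde\varphi:D\tau)_t=\lambda(t)$. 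The main obstacle is showing $\tau$ is actually a trace: I would verify that its modular automorphism group is trivial by computing $\sigma_t^\tau$ from the identity $\sigma_t^\tau(x)=(D\tau:D\tilde\varphi)_t\sigma_t^{\tilde\varphi}(x)(D\tilde\varphi:D\tau)_t=\lambda(-t)\sigma_t^{\tilde\varphi}(x)\lambda(t)$ and checking that this equals $x$ on both generating sets $\pi_\alpha(\nalgebra)$ (where $\sigma_t^{\tilde\varphi}$ acts as $\alpha^{-1}_t$ and $\lambda(t)$ implements $\alpha_t$) and on $\lambda(s)$ (which commutes with $\lambda(t)$ and is $\sigma_t^{\tilde\varphi}$-invariant). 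Uniqueness of $\tau$ is immediate from the uniqueness part of the Radon--Nikodym theorem.

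For (iii), I would use that $\tilde\varphi\circ\theta_s=\tilde\varphi$, which follows from the same translation-invariance argument as in (i):
$$\tilde\varphi\circ\theta_s(A)=\int_{-\infty}^\infty\phi\circ\theta_{t+s}(A)\,dt=\tilde\varphi(A),$$
combined with $\theta_s\lambda(t)=e^{-\iu st}\lambda(t)$. Applying $\theta_s$ to the cocycle identity and using the chain rule for Connes cocycles gives
$$(D\tilde\varphi:D(\tau\circ\theta_{-s}))_t=\theta_s\lambda(t)=e^{-\iu s t}\lambda(t)=(D(e^{-s}\tau):D\tau)_t\,\lambda(t),$$
so $(D(\tau\circ\theta_{-s}):D(e^{-s}\tau))_t=\mathbbm{1}$ for all $t$, whence $\tau\circ\theta_{-s}=e^{-s}\tau$, i.e.\ $\tau\circ\theta_s=e^{-s}\tau$ after relabelling. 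The main difficulty overall sits in (ii), where one must produce the trace and check traciality; (i) and (iii) are essentially changes of variable once the dual weight formalism is in place.
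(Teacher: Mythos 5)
The paper never proves this lemma: it is quoted as background from Haagerup's papers and Terp's notes (\cite{Haagerup79I}, \cite{Haagerup79II}, \cite{terp81}), so there is no in-paper argument to match; what you have written is, in outline, the standard proof from those references (the dual weight $\widetilde\varphi=\varphi\circ T$ plus Connes' cocycle theorem), and the overall strategy is sound. Items (i) and (iii) are indeed changes of variable once the dual-weight formalism is available, and the heavy lifting you import --- that $\widetilde\varphi$ is a normal faithful semifinite weight with $\sigma^{\widetilde\varphi}_t=\mathrm{Ad}\,\lambda(t)$, and the existence/uniqueness (converse) part of Connes' Radon--Nikodym theorem --- is exactly the content of the results the paper cites in place of a proof. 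Your semifiniteness argument in (i) also works: for $f\in C_c(\mathbb{R})$ one gets $T\left(\lambda(f)\lambda(f)^\ast\right)=\|f\|_2^2\,\mathbbm{1}$, and such elements, cut by $\pi_\alpha(\nalgebra)$, are $\sigma$-weakly total.

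Two sign issues in your write-up need repair, although neither is a fatal gap. First, in (ii): with the paper's conventions $(\pi_\alpha(A)f)(t)=\alpha_t^{-1}(A)f(t)$ and $(\lambda(s)f)(t)=f(t-s)$ one has $\lambda(s)\pi_\alpha(A)\lambda(s)^\ast=\pi_\alpha(\alpha_s(A))$ and $\sigma^{\widetilde\varphi}_t=\mathrm{Ad}\,\lambda(t)$ on all of $\nalgebra\rtimes_\alpha\mathbb{R}$; so $\sigma^{\widetilde\varphi}_t$ restricted to $\pi_\alpha(\nalgebra)$ is $\alpha_t$, not $\alpha_t^{-1}$ as you state. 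As written, your traciality check would give $\sigma^\tau_t(\pi_\alpha(y))=\pi_\alpha(\alpha_{-2t}(y))$ rather than the identity; with the corrected fact, $\sigma^\tau_t=\mathrm{Ad}\,\lambda(-t)\circ\mathrm{Ad}\,\lambda(t)=\mathrm{id}$ and $\tau$ is a trace, as intended. Second, in (iii) the bookkeeping is off by a sign twice: using $(D(c\tau):D\tau)_t=c^{\iu t}$ and the transformation rule for cocycles under automorphisms, one finds $(D\widetilde\varphi:D(\tau\circ\theta_{-s}))_t=e^{-\iu st}\lambda(t)=(D\widetilde\varphi:D(e^{s}\tau))_t$, hence $\tau\circ\theta_{-s}=e^{s}\tau$, which upon replacing $s$ by $-s$ gives the stated $\tau\circ\theta_s=e^{-s}\tau$; your intermediate identity $\tau\circ\theta_{-s}=e^{-s}\tau$ and the subsequent relabelling each carry a sign error, and they happen to cancel in the final statement. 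With these conventions straightened out, the proposal is a faithful reconstruction of the cited proof.
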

	
	Since we will be interested in the weights on $\nalgebra$, \cite[II, Lemma 1]{terp81} gives us a useful identification
	\begin{lemma}
		\label{identificationofweights}
		The mapping $\phi\mapsto \tilde{\phi}$, where $\tilde{\phi}=\hat{\phi}\circ T$ and $\hat{\phi}$ is a natural extension of $\phi$ to $\widehat{\nalgebra}_+$ (the extended positive part of $\nalgebra$) as described in \cite[Proposition 1.10]{Haagerup79I}, is a bijection from the set of all normal semifinite weights on $\nalgebra$ onto the set of all normal semifinite weights $\psi$ on $\nalgebra\rtimes_\alpha\mathbb{R}$ satisfying
		$$\psi \circ \theta_t=\psi \quad \forall t\in \mathbb{R}. $$
	\end{lemma}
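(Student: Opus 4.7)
The plan is to establish the bijection in three steps: well-definedness of $\tilde{\phi}$ as a $\theta$-invariant normal semifinite weight, injectivity of $\phi\mapsto\tilde{\phi}$, and the (harder) surjectivity onto $\theta$-invariant normal semifinite weights.

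First I would verify that $\tilde{\phi}=\hat{\phi}\circ T$ is a normal semifinite weight on $\nalgebra\rtimes_\alpha\mathbb{R}$. Normality is a composition of the normality of $T$ (Lemma \ref{Tcrossedproduct}(i)) with that of the canonical normal extension $\hat{\phi}$: if $A_i\uparrow A$ in $(\nalgebra\rtimes_\alpha\mathbb{R})_+$, then $TA_i\uparrow TA$ in $\widehat{\nalgebra}_+$ and hence $\hat{\phi}(TA_i)\uparrow\hat{\phi}(TA)$. Semifiniteness follows from intersecting the definition ideal of $T$ with the preimage under $T$ of the definition ideal of $\phi$. The $\theta$-invariance of $\tilde{\phi}$ falls out of the defining formula $(\varphi,TA)=\int\varphi\circ\theta_t(A)\,dt$: the change of variables $t\mapsto t-s$ gives $T\circ\theta_s=T$ for every $s\in\mathbb{R}$, so $\tilde{\phi}\circ\theta_s=\tilde{\phi}$.

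For injectivity, suppose $\tilde{\phi}_1=\tilde{\phi}_2$. Faithfulness and semifiniteness of $T$ ensure that for every $A\in\nalgebra_+$ there exist $B\in(\nalgebra\rtimes_\alpha\mathbb{R})_+$ with $TB\leq A$ and $TB\uparrow A$ (by taking suitable spectral cutoffs together with elements of the definition ideal of $T$); the common value $\hat{\phi}_1(TB)=\hat{\phi}_2(TB)$ combined with normality identifies $\phi_1$ and $\phi_2$ on $\nalgebra_+$, and hence as weights.

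The hard part will be surjectivity. Given a $\theta$-invariant normal semifinite weight $\psi$ on $\nalgebra\rtimes_\alpha\mathbb{R}$, I would attempt the natural definition $\phi(TA)\defeq\psi(A)$ for $A$ in the definition ideal of $T$, and then extend to all of $\nalgebra_+$ by the supremum formula $\phi(X)=\sup\{\psi(A) : A\in\mathrm{dom}(T)_+,\ TA\leq X\}$. The crux is well-definedness: when $TA=TB$, one must show $\psi(A)=\psi(B)$. The argument exploits $\theta$-invariance of $\psi$ via averaging against $C_c(\mathbb{R})$ elements. For $f\geq 0$ with $\int f=1$, the smeared element $A_f=\int f(t)\theta_t(A)\,dt$ satisfies $TA_f=TA$ (since $T\circ\theta_t=T$) while $\psi(A_f)=\psi(A)$ by $\theta$-invariance of $\psi$ together with Fubini; letting $f$ range over an approximate identity for convolution shows that $\psi(A)$ is a function of $TA$ alone, yielding a consistent $\phi$ on the image of $T$. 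The remaining technical delicacy is pushing the identity $\hat{\phi}\circ T=\psi$ from the $T$-definition ideal to all of $(\nalgebra\rtimes_\alpha\mathbb{R})_+$ using the supremum characterization of the normal extensions of $\phi$ and $T$, and verifying semifiniteness of $\phi$ from that of $\psi$. This parallels Haagerup's general bijection between normal semifinite weights on a fixed point algebra and invariant dual weights on the ambient crossed product developed in \cite{Haagerup79I}.
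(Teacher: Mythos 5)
First, a point of comparison: the paper does not prove this lemma at all --- it is imported verbatim from the literature (Terp's notes, II Lemma 1, resting on Haagerup's dual-weight theory), so there is no in-paper argument to measure yours against; your attempt must stand on its own. Judged that way, the first two parts are essentially sound. The $\theta$-invariance of $\tilde{\phi}$ does follow from $T\circ\theta_s=T$ (translation invariance of $dt$), and normality/semifiniteness are as you say. Injectivity can be made to work, though the clean mechanism is the $\pi_\alpha(\nalgebra)$-bimodule property $T\bigl(x^{1/2}Cx^{1/2}\bigr)=x^{1/2}(TC)\,x^{1/2}$ for $x\in\pi_\alpha(\nalgebra)_+$ together with an increasing net $(C_i)$ in the definition ideal of $T$ with $TC_i\uparrow\mathbbm{1}$, rather than ``spectral cutoffs''; you should say this explicitly.

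The genuine gap is in surjectivity. Your well-definedness step --- $TA=TB$ implies $\psi(A)=\psi(B)$ --- is not established by the smearing argument. Averaging gives $\psi(A_f)=\psi(A)$ and $TA_f=TA$ for every $f$, but this only manufactures more elements with the same $T$-image and the same $\psi$-value as $A$; it never connects $A$ to $B$. Letting $f$ run through an approximate identity for convolution just recovers $A_f\to A$ and yields nothing new, while spreading $f$ out does not make $A_f$ converge to anything determined by $TA$ (which is in general an unbounded element of $\widehat{\nalgebra}_+$, so no ergodic-mean limit is available), and $\psi$ is only normal (order-continuous on increasing nets), not continuous in any topology in which such a limit could be exchanged with $\psi$. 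Moreover, even granting well-definedness on $T(\mathfrak{m}_T^+)$, the additivity and normality of the supremum extension $\phi(X)=\sup\{\psi(A):TA\leq X\}$ are not automatic. The actual proof (Haagerup's dual-weight papers; equivalently the characterization of the range of $\varphi\mapsto\hat{\varphi}\circ T$ for a normal faithful semifinite operator-valued weight) goes through the modular theory of the balanced weight and the integrability of the dual action $\theta$, and is substantially more involved than an averaging argument. As written, the surjectivity part of your proposal is a plausible plan, not a proof.
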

	
	Let $\tau$ be the unique trace described in Theorem \ref{Tcrossedproduct} $(ii)$. We can use this trace to construct a noncommutative measure on the set of $\tau$-measurable closed affiliated operators as presented in the very beginning of this section. In addition, for every normal semifinite weight $\phi$ on $\nalgebra$ we can obtain by the Radon-Nikodym derivative a operator $H_\phi$ affiliated with $\nalgebra\rtimes_\alpha \mathbb{R}$ such that $\tilde{\phi}=\tau_{H_\phi}$. \cite[Theorem 1.2]{haagerup79} states that $\{H_\phi \in \left(\nalgebra\rtimes_\alpha \mathbb{R}\right)_\eta \, | \, \phi \textrm{ is normal and semifinite} \}$ is exactly the set $\{H_\phi \in \left(\nalgebra\rtimes_\alpha \mathbb{R}\right)_\eta \, | \, \theta_t H_\phi=e^{-t}H_\phi \}$. This suggests the following definition:
	
	\begin{definition}\index{noncommutative $L_p$-space! Haagerup}
		Let $\nalgebra$ be a von Neumann algebra, $\phi_0$ a nornal faithful semifinite weight on $\nalgebra$ and $1\leq p<\infty$, we define	
		$$\begin{aligned}
		L_p(\nalgebra)&\doteq\left\{H\in\left(\nalgebra\rtimes_\alpha \mathbb{R}\right)_\tau \ \middle| \ \theta_t H=e^{-\frac{t}{p}}H, \ \forall t \in \mathbb{R} \right\} \textrm{ with } \|H\|_p=\tau(|H|^p)^\frac{1}{p},\\
		L_\infty(\nalgebra)&\doteq\left\{H\in\left(\nalgebra\rtimes_\alpha \mathbb{R}\right)_\tau \ \middle| \ \theta_t H=H, \ \forall t \in \mathbb{R} \right\} \textrm{ with } \|H\|_\infty=\|H\|.\\\end{aligned}$$
	\end{definition} 
	
	\cite[Theorem 1.14, Theorem 1.16 and Theorem 1.19]{haagerup79} states a H\"older inequality, a Minkowski inequality and the usual duality between spaces with conjugated indices (\ie an analogous of Theorem \ref{dualLp}) for these spaces. The proofs rely in the application of the same techniques we used previously.
	\fi
	
	\section{Perturbation of $p$-Continuous KMS States}
	\label{chapExtensionPerturb}
	\setcounter{section}{3}
	
	The idea of extending Araki's perturbation theory using noncommutative $L_p$-spaces was proposed by C. D. J\"akel and consists in a new approach to the problem. Now we start presenting the main results of this work. All that follows is entirely new.
	
	It is quite clear that one of the key properties used in \cite{Araki73} and \cite{sakai91} to prove the convergence of the Dyson's series, or in \cite{Araki73.2} to prove the convergence of the expansional, is that $\|A_1\ldots A_n\|\leq \|A_1\|\ldots \|A_n\|$, which is one of the axioms of Banach algebras. Unfortunately, this property does not hold in noncommutative $L_p$-spaces. In fact, in Banach algebras these are not even algebras under the induced multiplication. In particular, we have $\|Q^n\|\leq \|Q\|^n$, but no similar property holds in noncommutative $L_p$-spaces.

	\begin{proposition}
		\label{noconstant}
		Let $\nalgebra$ be a von Neumann algebra, $\tau$ be a normal faithful semifinite trace on $\nalgebra$, and $A\in L_1(\nalgebra,\tau)$. There exists $M>0$ such that $\tau\left(|A|^n\right)\leq M^n$ for all $n\in\mathbb{N}$,
		if, and only if, $A \in \nalgebra$.
	\end{proposition}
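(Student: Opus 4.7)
The proof naturally splits into two directions.

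For the easy direction ($\Leftarrow$), assuming $A\in\nalgebra$, I would use that $|A|^{n}\leq\|A\|^{n-1}|A|$ as an operator inequality (so in particular the trace inequality $\tau(|A|^n)\leq\|A\|^{n-1}\tau(|A|)$ holds, using that $A\in L_1(\nalgebra,\tau)$ makes $\tau(|A|)$ finite). Setting $M=\max\{\|A\|,\tau(|A|),1\}$ immediately gives $\tau(|A|^n)\leq M^n$.

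For the non-trivial direction ($\Rightarrow$), I would argue by contradiction using the spectral resolution $\{E_{(\lambda,\infty)}\}_{\lambda\geq 0}$ of $|A|$. Suppose $A\notin\nalgebra$, that is $|A|$ is unbounded. Then for every $\lambda>0$ the projection $E_{(\lambda,\infty)}$ is nonzero; pick any $\lambda>M$. Because $A\in L_1(\nalgebra,\tau)$, Markov's inequality in the noncommutative setting gives $\lambda\,\tau(E_{(\lambda,\infty)})\leq\tau(|A|)<\infty$, hence $\tau(E_{(\lambda,\infty)})$ is finite; and by faithfulness of $\tau$ it is strictly positive. Now the operator inequality $|A|^n\geq\lambda^n E_{(\lambda,\infty)}$ combined with positivity and normality of $\tau$ yields
\begin{equation*}
\tau(|A|^n)\geq\lambda^n\,\tau(E_{(\lambda,\infty)}).
\end{equation*}
Dividing both sides by $M^n$ gives $\tau(|A|^n)/M^n\geq(\lambda/M)^n\,\tau(E_{(\lambda,\infty)})$, which tends to $+\infty$ since $\lambda/M>1$ and the factor $\tau(E_{(\lambda,\infty)})$ is a fixed positive constant. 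This contradicts the hypothesis $\tau(|A|^n)\leq M^n$. Therefore $E_{(\lambda,\infty)}=0$ for all $\lambda>M$, which means $\||A|\|\leq M$, so $A\in\nalgebra$.

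The only delicate point is ensuring $\tau(E_{(\lambda,\infty)})$ is finite and positive, so that the estimate $\tau(|A|^n)\geq\lambda^n\tau(E_{(\lambda,\infty)})$ is a nontrivial lower bound rather than $0\cdot\infty$. Finiteness comes from $A\in L_1$, while strict positivity comes from faithfulness combined with $E_{(\lambda,\infty)}\neq 0$. Everything else is a direct application of functional calculus and the standard monotonicity properties of $\tau$.
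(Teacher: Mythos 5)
Your proposal is correct and follows essentially the same route as the paper: the forward direction via the spectral estimate $\tau(|A|^n)\geq\lambda^n\tau(E_{(\lambda,\infty)})$ with $\lambda>M$ (the paper states it as a contrapositive rather than a contradiction, which is only a cosmetic difference), and the reverse direction via $\tau(|A|^n)\leq\|A\|^{n-1}\tau(|A|)$. The only divergence is negligible: your remark about finiteness of $\tau(E_{(\lambda,\infty)})$ is harmless but unnecessary, since if it were infinite the contradiction with $\tau(|A|^n)\leq M^n$ would be immediate.
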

	\begin{proof}
		$(\Rightarrow)$ Let's prove the contrapositive.
		Suppose $A$ is unbounded and let \\${\displaystyle |A|=\int_0^\infty \lambda dE_\lambda^{|A|}}$ be the spectral decomposition of $|A|$.
		
		For every $K>M$, $E_{(K,\infty)}$ is non-null, so $\tau(E_{(K,\infty)})>0$. Then,
		$$\tau\left(|A|^n\right)=\int_0^\infty \lambda^n \tau\left(dE_\lambda^{|A|}\right) \geq \int_K^\infty \lambda^n \tau\left(dE_\lambda^{|A|}\right)\geq K^n \tau\left(E_{[K,\infty)}\right). $$
		
		Now, we already know that there exists $N\in \mathbb{N}$ large enough such that, for all $n\geq N$, $M^n< K^n \tau\left(E_{[K,\infty)}\right)$.
		
		$(\Leftarrow)$ The case $A=0$ is trivial. Suppose $A\neq0$ is bounded. Then
		$$\begin{aligned}
		\tau\left(|A|^n\right)&=\tau\left(|A|^{n-1}|A|\right)\\
		&\leq \left\||A|^{n-1}\right\|\tau\left(|A|\right)\\
		&=\left\|A\right\|^n\frac{\tau\left(|A|\right)}{\|A\|}\\
		&\leq \left(\|A\|\max\left\{1,\frac{\tau\left(|A|\right)}{\|A\|}\right\}\right)^n.
		\end{aligned}$$
		
	\end{proof}

	The next definition captures our intentions of having a convergent Dyson's series. In this definition, one subtle  difference is that the exponent cannot be passed out the trace, what is the $C^\ast$-condition for $p=\infty$. On the physical point of view, we do not want the high order terms in perturbation to affect our system too much, at least its integral.
	
	\begin{definition}
		\label{defex}
		Let $\nalgebra$ be a von Neumann algebra, $\tau$ be a normal faithful semifinite trace on $\nalgebra$, $1\leq p\leq\infty$ and $0<\lambda<\infty$. An operator $A \in L_p(\nalgebra,\tau)$ is said to be $(\tau,p,\lambda)$-exponentiable if 
		\begin{equation}
		\label{eq:defexponentiable}
		\sum_{n=1}^\infty\frac{\lambda^n\||A|^n\|_{p}}{n!}< \infty.
		\end{equation}
		Furthermore, an operator $A \in L_p(\nalgebra,\tau)$ is said to be $(\tau,p,\infty)$-exponentiable if 
		\begin{equation}
		\label{eq:defexponentiableinf}
		\sum_{n=1}^\infty\frac{\lambda^n\||A|^n\|_{p}}{n!}< \infty, \qquad \forall \lambda\in\mathbb{R}_+.
		\end{equation}
		We denote $$\ex^\tau_{p,\lambda}=\left\{A\in L_p\left(\nalgebra,\tau\right) \ \middle | \ A \textrm{ is } (\tau,p,\lambda)\textrm{-exponentiable } \right\}.$$
	\end{definition}

	Some properties can be seen directly from the definition. The first is that, if $\lambda\leq \lambda^\prime$, then $\ex^\tau_{p,\lambda}\subset\ex^\tau_{p,\lambda^\prime}$.
	Another very useful property that we will use to simplify our presentation is that $$\ex^\tau_{p,\lambda}=\lambda \ex^\tau_{p,1}=\left\{\lambda A\in L_p(\nalgebra) \ | A\in \ex^\tau_{p,1} \right\}.$$
	So, the only special case is $\ex^\tau_{p,\infty}$, for which we have $\displaystyle \ex^\tau_{p,\infty}=\bigcap_{\lambda\in\mathbb{R}_+}\ex^\tau_{p,\lambda}$.
	Hence, it is enough to study $\ex^\tau_{p,1}$ and $\ex^\tau_{p,\infty}$.
	
	\begin{notation}
		In order to simplify the notation, we will denote $\ex^\tau_{p,1}=\ex^\tau_p$ and call a $(\tau,p,1)$-exponentiable operator just a $(\tau,p)$-exponentiable operator.
	\end{notation}
	
	\begin{remark}
		Notice that equation \eqref{eq:defexponentiable} can be written in many forms for $1\leq p< \infty$
		$$\sum_{n=1}^\infty\frac{\||A|^n\|_{p}}{n!}=\sum_{n=1}^\infty\frac{\|A\|_{np}^n}{n!}=\sum_{n=1}^\infty\frac{\tau\left(|A|^{np}\right)^\frac{1}{p}}{n!}< \infty.$$
		
		We prefer equation \eqref{eq:defexponentiable} because it also includes the case $p=\infty$, for which
		$$\sum_{n=1}^{\infty} \frac{\||A|^n\|_{\infty}}{n!}\leq\sum_{n=1}^{\infty} \frac{\|A\|_{\infty}^n}{n!}=e^{\|A\|}-1<\infty.$$
		Hence, we have $\ex^\tau_\infty=\nalgebra$.
		
	\end{remark}
	
	In order so simplify calculations in our examples and constructions, we prove the following lemma.
	
	\begin{lemma}
		Let $\nalgebra$ be a von Neumann algebra and $\tau$ a normal faithful semifinite trace on $\nalgebra$. Then $A\in \ex^\tau_p$ if $$\sum_{n=1}^\infty\frac{1}{n!}\tau\left(|A|^{np}\right)< \infty.$$
	\end{lemma}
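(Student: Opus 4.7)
The plan is to prove this by a direct estimate, splitting the series for $\|\,|A|^n\|_p/n!$ according to whether $\tau(|A|^{np})$ is $\leq 1$ or $>1$. Since $\||A|^n\|_p = \tau(|A|^{np})^{1/p}$ and we are working in the regime $1 \leq p < \infty$ (the $p=\infty$ case is already handled by the preceding remark), the elementary inequality $x^{1/p} \leq \max\{1,x\} \leq 1 + x$ for $x \geq 0$ will do the whole job.

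First I would note that for each $n$, setting $x_n = \tau(|A|^{np})$, we have $x_n^{1/p} \leq 1$ whenever $x_n \leq 1$, and $x_n^{1/p} \leq x_n$ whenever $x_n \geq 1$ (using $1/p \leq 1$). Hence
\begin{equation*}
\frac{\||A|^n\|_p}{n!} \;=\; \frac{x_n^{1/p}}{n!} \;\leq\; \frac{1}{n!} + \frac{x_n}{n!}.
\end{equation*}
Summing over $n \geq 1$ gives
\begin{equation*}
\sum_{n=1}^\infty \frac{\||A|^n\|_p}{n!} \;\leq\; \sum_{n=1}^\infty \frac{1}{n!} + \sum_{n=1}^\infty \frac{\tau(|A|^{np})}{n!} \;=\; (e-1) + \sum_{n=1}^\infty \frac{\tau(|A|^{np})}{n!},
\end{equation*}
which is finite by the hypothesis. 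This puts $A$ in $\ex^\tau_p$ by definition.

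There is no real obstacle here; the argument is essentially the bookkeeping already sketched in the struck-out remark preceding the lemma. The only point that deserves a brief justification is the implicit restriction $1 \leq p < \infty$ so that the exponent $1/p$ lies in $(0,1]$ and the inequality $x^{1/p} \leq \max\{1,x\}$ applies — for $p=\infty$ the stated hypothesis is not meaningful and the statement $\ex^\tau_\infty = \nalgebra$ covers that case separately. One could equivalently present the argument by partitioning $\mathbb{N}$ into $N_+ = \{n : \tau(|A|^{np}) > 1\}$ and $N_- = \mathbb{N} \setminus N_+$ and estimating the two subseries separately, but the pointwise inequality above is cleaner and avoids case distinctions.
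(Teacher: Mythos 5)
Your proof is correct and is essentially the paper's own argument: the paper bounds the partial sums by splitting the index set into $N_+=\{n:\tau(|A|^{np})>1\}$ and its complement, bounding $\tau(|A|^{np})^{1/p}$ by $1$ on one part and by $\tau(|A|^{np})$ on the other, which is exactly your pointwise inequality $x^{1/p}\leq\max\{1,x\}\leq 1+x$ in disguise. No substantive difference.
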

	\begin{proof}
		Define
		$$\begin{aligned}
		N_+=\left\{n\in\mathbb{N} \ \middle | \ \tau\left(\left|A\right|^{pn}\right)>1  \right\},\\
		N_-=\left\{n\in\mathbb{N} \ \middle | \ \tau\left(\left|A\right|^{pn}\right)\leq 1  \right\}.\\
		\end{aligned}$$
		
		It is clear that
		\begin{equation}
		\begin{aligned}
		\sum_{n=1}^{N}\frac{\|A\|_{np}^n}{n!}& =\sum_{n=1}^{N}\frac{1}{n!}\tau\left(\left|A\right|^{pn}\right)^\frac{1}{p}\\
		&=\sum_{\substack{n\in N_- \\ n\leq N}}\frac{1}{n!}\tau\left(\left|A\right|^{pn}\right)^\frac{1}{p}+\sum_{\substack{n\in N_+ \\ n\leq N}}\frac{1}{n!}\tau\left(\left|A\right|^{pn}\right)^\frac{1}{p}\\
		&\leq\sum_{n=1}^\infty\frac{1}{n!}+\sum_{n=1}^\infty\frac{1}{n!}\tau\left(\left|A\right|^{pn}\right).
		\end{aligned}
		\end{equation}
	\end{proof}
	
	The next step is to prove that the set we just defined is big, in some sense, in $L_p(\nalgebra,\tau)$.
	
	\begin{proposition}
		\label{ex_dense}
		$\ex^\tau_p$ and $\ex^\tau_\infty$ are $\|\cdot\|$-dense in $L_p(\nalgebra,\tau)$.
	\end{proposition}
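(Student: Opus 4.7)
The plan is to exploit the definition of $L_p(\nalgebra,\tau)$ as the completion of $\{A \in \nalgebra \mid \tau(|A|^p) < \infty\}$ with respect to $\|\cdot\|_p$. Because this pre-completion set is by construction $\|\cdot\|_p$-dense in $L_p(\nalgebra,\tau)$, it suffices to show that every such bounded $A$ already lies in $\ex^\tau_\infty$; then, since $\ex^\tau_\infty = \bigcap_{\lambda > 0}\ex^\tau_{p,\lambda} \subseteq \ex^\tau_{p,1} = \ex^\tau_p$, both sets will be dense simultaneously.

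First I would dispose of the case $p = \infty$ separately. The remark following Definition \ref{defex} already records $\ex^\tau_\infty = \nalgebra$, together with the bound $\sum_{n\geq 1}\lambda^n\||A|^n\|_\infty/n! \leq e^{\lambda\|A\|}-1 < \infty$ for every $\lambda > 0$, so the conclusion is immediate there. The substantive case is $1 \leq p < \infty$; fix $A \in \nalgebra$ with $\tau(|A|^p) < \infty$, which we may assume nonzero.

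The key step is a one-line functional-calculus estimate. Since $|A|$ is a bounded positive operator with $\||A|\| = \|A\|$, applying the scalar inequality $t^{np} \leq \|A\|^{(n-1)p}\, t^p$ (valid on $[0,\|A\|]$) through the continuous functional calculus yields the operator inequality $|A|^{np} \leq \|A\|^{(n-1)p}\,|A|^p$ for each $n \geq 1$. Taking the trace (which preserves this ordering) and then $p$-th roots gives
\[
\||A|^n\|_p = \tau(|A|^{np})^{1/p} \leq \|A\|^{n-1}\,\|A\|_p.
\]
Substituting into the defining series of Definition \ref{defex}, for any $\lambda > 0$,
\[
\sum_{n=1}^{\infty}\frac{\lambda^n \,\||A|^n\|_p}{n!} \;\leq\; \frac{\|A\|_p}{\|A\|}\sum_{n=1}^{\infty}\frac{(\lambda\|A\|)^n}{n!} \;=\; \frac{\|A\|_p}{\|A\|}\bigl(e^{\lambda\|A\|}-1\bigr) \;<\; \infty,
\]
so $A \in \ex^\tau_{p,\lambda}$ for every $\lambda > 0$, i.e.\ $A \in \ex^\tau_\infty$.

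I do not anticipate a real obstacle: once the construction of $L_p(\nalgebra,\tau)$ as a completion is invoked, the argument is purely elementary. The only subtlety is keeping track of the positivity and order-preserving properties used to push the scalar bound through the trace, both of which are standard for a normal faithful semifinite trace. If one preferred to argue without invoking the completion, the same conclusion could be obtained by spectrally truncating the polar decomposition $A = U|A|$ at height $n$ and using normality of $\tau$ to deduce $\|A - A_n\|_p \to 0$, but this is already absorbed into the definition used here.
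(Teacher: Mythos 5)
Your argument is correct, and it is essentially the alternative route that the paper itself acknowledges in the remark immediately following its proof (``the conclusion could also be obtained by Lemma \ref{normtraceinequality} and the well known result $\overline{\nalgebra \cap L_p(\nalgebra,\tau)}^{\|\cdot\|_p}=L_p(\nalgebra,\tau)$''). The paper does not invoke the completion directly: it takes an arbitrary positive $A\in L_p(\nalgebra,\tau)$, truncates its spectral decomposition at height $m$ to get $A_m=\int_0^m\lambda\,dE_\lambda$, proves $\tau\left((A_m^p)^n\right)\leq m^{p(n-1)}\tau\left(|A_m|^p\right)$ so that each $A_m\in\ex^\tau_{p,\infty}$, shows $\|A-A_m\|_p\to 0$ by normality of the trace, and reduces the general case to the positive one by polarization. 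You instead take the density of $\left\{A\in\nalgebra \mid \tau(|A|^p)<\infty\right\}$ for free from the definition of $L_p(\nalgebra,\tau)$ as a completion, and prove the single uniform estimate $\||A|^n\|_p\leq\|A\|^{n-1}\|A\|_p$ by pushing the scalar bound $t^{np}\leq\|A\|^{(n-1)p}t^p$ through the functional calculus; this is the same inequality the paper records after its proof. The two arguments differ only in where the density of the bounded part is established: the paper's truncation argument is self-contained (and is in fact how that density is proved in the literature), while yours is shorter at the cost of leaning on the identification of the abstract completion with a space of $\tau$-measurable operators. Your estimate is also marginally sharper than the paper's displayed bound $\|A^n\|_p\leq\max\{1,\|A\|^{n-1}\|A\|_p\}$, and your separate treatment of the $p=\infty$ reading is consistent with the paper's remark that $\ex^\tau_\infty=\nalgebra$.
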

	\begin{proof}
		It is enough to prove $\ex^\tau_p$ is dense $\|\cdot\|$-dense in $L_p(\nalgebra,\tau)$.  Let \mbox{$A \in L_p(\nalgebra,\tau)$} be a positive operator and let its spectral decomposition be \mbox{$\displaystyle A=\int_0^\infty \lambda dE_\lambda$}.
		Define $\displaystyle A_m=\int_0^m \lambda dE_\lambda$. Then, for all $n\in\mathbb{N}$,
		$$\begin{aligned}
		\tau\left(\left(A_m^p\right)^n\right)& =\int_0^m \lambda^{pn} \tau(dE_\lambda)\\
		&=\int_0^1 \lambda^{pn} \tau(dE_\lambda)+\int_1^m \lambda^{pn} \tau(dE_\lambda)\\
		&\leq\int_0^1 \lambda^{p} \tau(dE_\lambda)+m^{p(n-1)}\int_1^m \lambda^{p} \tau(dE_\lambda)\\
		& \leq m^{p(n-1)}\int_0^m \lambda^{p} \tau(dE_\lambda)\\
		&=m^{p(n-1)}\tau\left(|A_m|^p\right).
		\end{aligned}$$
		
		Hence $\left(A_m\right)_{m\in\mathbb{N}}$ is a sequence of $(\tau,p,\infty)$-exponentiable operators and
		$$\tau\left(|A-A_m|^p\right)=\int_m^\infty\lambda^p\tau\left(dE_\lambda\right)\xrightarrow{n\to \infty}0.$$
		
		For the general case, just remember the polarization identity implies every operator is a linear combination of four positive operators.
		
	\end{proof}
	
	Notice that Proposition \ref{ex_dense} shows that $\nalgebra \cap L_p(\nalgebra,\tau)\subset \ex^\tau_{p,\infty}$ and \\ ${\|A^n\|_p\leq \max\{1,\|A\|^{n-1}\|A\|_p\}}$ for $A\geq 0$. It is not difficult to see that the conclusion could also be obtained by Lemma \ref{normtraceinequality} and the well known result (see \cite[page 23]{terp81})
	$$\overline{\nalgebra \cap L_p(\nalgebra,\tau)}^{\|\cdot\|_p}=L_p(\nalgebra,\tau),$$
	which is in fact proved using an argument similar to what we have used above.
	
	This comment raises doubts about the possible ``triviality'' of $\ex^\tau_p$ or $\ex^\tau_{p,\infty}$, I mean, although we have already proved that these sets are big enough to be dense, the set we used to prove density consists of bounded operators. The next example will answer this question.

	\begin{example}
		\label{exfunction}
		Consider a function $f:\mathbb{R}\setminus\{0\}\to\mathbb{R}$ given by
		$$f(x)=\begin{cases}m & \textrm{if } \frac{1}{(m+1)!}\leq |x| < \frac{1}{m!}, \, m\in\mathbb{N}\\
		0 & \textrm{if } |x|\geq1.\\ \end{cases}$$
		
		This is a positive unbounded integrable function with compact support in $\mathbb{R}$ and, for each $\lambda>0$, 
		$$	\sum_{n=1}^\infty\frac{1}{n!}\int_{\mathbb{R}} \left(\lambda f(x)\right)^n dx =2\frac{\left(e^{e^\lambda}-1\right)\left(e^\lambda-1\right)}{e^\lambda}.$$
		Of course, we don't need the exact result and the reader can check the it is obvious that this sum would be less than $e^{e^\lambda}$.
		
		For a measurable set $K\in\mathbb{R}\setminus\{0\}$ such that $0$ is an accumulation point of $K$, the restriction of $f$ to $K$ is an example of an unbounded $\left(\int_{K}\cdot dx,1,\infty\right)$-exponentiable operator of $L_1\left(L_\infty(K),\int_{K}\cdot dx\right)$.
		
		It is obvious that any integrable function dominated by the previous one is also $\left(\int_{K}\cdot dx,1,\infty\right)$-exponentiable.
		
	\end{example}
	
	One could wonder if $\ex^\tau_p$ is a vector space or not. The following example shows the answer is negative. Another consequence of these \hyphenation{ex-am-ples}examples is that, for $\lambda<\lambda^\prime$, $\ex^\tau_{p,\infty}\subsetneq \ex^\tau_{p,\lambda}\subsetneq \ex^\tau_{p,\lambda^\prime}$. This is a very important and non trivial, since it means that $\left\{\ex^\tau_p\right\}_{p\in \overline{R}_+}$ or even  $\left\{\ex^\tau_{p,\lambda}\right\}_{p\in \overline{\mathbb{R}}_+, \lambda\in\overline{\mathbb{R}}_+}$ are, in general, non trivial gradations of $\nalgebra=\ex^\tau_{\infty,\lambda}=\ex^\tau_{\infty,\lambda}$ for every $\lambda\in\overline{\mathbb{R}}_+$.

	\begin{example}
		\label{exfunctionbadbehaved}
		Consider a function $f:\mathbb{R}\setminus\{0\}\to\mathbb{R}$ given by
		$$f(x)=\begin{cases}m & \textrm{if } (2e)^{-m-1}\leq |x| < (2e)^{-m}, \ m\in\mathbb{N},\\
		0 & \textrm{if } |x|\geq2e.\\ \end{cases}$$
		
		This is a positive unbounded integrable function with compact support in $\mathbb{R}$ and
		$$\sum_{n=1}^\infty\frac{1}{n!}\int_{\mathbb{R}} f(x)^n dx =\frac{2(e-1)}{2}.$$
		
		Hence, again we have that for any measurable set $K\in\mathbb{R}\setminus\{0\}$ the restriction of $f$ \mbox{to $K$} is an example of a $\left(\int_{K}\cdot dx\right)$-exponentiable operator for $L_1\left(L_\infty(K),\int_{K}\cdot dx\right)$, but it does not hold for $2f$. In fact,
		
		$$\sum_{n=1}^N\frac{1}{n!}\int_{\mathbb{R}} f(x)^{n} dx=\frac{4e}{2e-1}\sum_{n=1}^N\frac{1}{n!}\sum_{m=1}^{\infty} (2m)^{n} (2e)^{-m}\\$$
		is a divergent series.
		
	\end{example}
	
	Although we have presented examples just for $p=1$, it is enough to take the $p$-th root of $f$ to obtain examples for any $p>1$.
	
	\begin{example}
		In order to construct an example in a noncommutative von Neumann algebra it is sufficient that there exists a monotonic decreasing sequence of projections $(P_n)_{n\in\mathbb{N}}\in\nalgebra_p$ such that $P_n \xrightarrow[]{\|\cdot\|_1} 0$, which is true if there exists any $\tau$-measurable unbounded operator.
		
		In fact, fix $1\leq p$. If there exists such a sequence, we can suppose without loss of generality, by taking a subsequence if necessary, that $\tau(P_n)\leq \frac{1}{(e^n-1)2^n}$. Define the positive unbounded $\tau$-measurable operator $$A=\sum_{n=1}^\infty n^\frac{1}{p}(P_n-P_{n+1}).$$
		
		It follows from the definition that
		$$\begin{aligned}
		\sum_{m=1}^\infty \frac{1}{m!}\tau(|A|^{pm})&=\sum_{m=1}^\infty \frac{1}{m!}\sum_{n=1}^\infty n^m\tau(P_n-P_{n+1})\\
		&\leq \sum_{m=1}^\infty \sum_{n=1}^\infty \frac{n^m}{m!} \frac{1}{(e^n-1)2^n}\\
		&= \sum_{n=1}^\infty \frac{1}{2^n}=1.
		\end{aligned}$$
		Thus $A\in\ex^\tau_p$.
	\end{example}
	
	It is not difficult to see, with help of the spectral decomposition, that if an operator is in $L_p(\nalgebra,\tau)\cap L_q(\nalgebra,\tau)$ with $1\leq p < q < \infty$, then it is in $L_r(\nalgebra,\tau)$ for every $p\leq r \leq q$. More than that, it follows by analyticity and the Three-Line Theorem, a special case of the Riesz-Thorin Theorem, that:
	\begin{equation}
	\label{eq:riesz-thorin}
	\begin{aligned}
	&\|A\|_r \leq \|A\|_p^{\frac{p}{(q-p)}\left(\frac{q}{r}-1\right)}\|A\|_q^{{\frac{q}{(q-p)}\left(1-\frac{p}{r}\right)}},& \quad \textrm{ if } q<\infty;\\
	&\|A\|_r \leq \|A\|_p^\frac{p}{r}\|A\|_\infty^{1-\frac{p}{r}},& \quad \textrm{ if } q=\infty.
	\end{aligned}
	\end{equation}
	
	An analogous property holds for $(\tau,p)$-exponentiable operators:
	
	\begin{proposition}
		Let $1\leq p < q \leq \infty$, then $\ex^\tau_p\cap\ex^\tau_q\subset \ex^\tau_r$ for every $p\leq r\leq q$.
	\end{proposition}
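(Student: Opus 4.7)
The plan is to combine the Riesz--Thorin interpolation bound \eqref{eq:riesz-thorin}, applied pointwise to each operator $|A|^n$, with H\"older's inequality for sequences. Fix $A\in\ex^\tau_p\cap\ex^\tau_q$ and $p\leq r\leq q$; the two exponentiability hypotheses force $\||A|^n\|_p<\infty$ and $\||A|^n\|_q<\infty$ for every $n$ (for $q=\infty$ this is $\||A|^n\|_\infty\leq\|A\|^n<\infty$), so the interpolation inequality can be legitimately applied to each power $|A|^n$.

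Applying \eqref{eq:riesz-thorin} with $B=|A|^n$ yields an estimate of the form $\||A|^n\|_r\leq\||A|^n\|_p^{\alpha}\,\||A|^n\|_q^{\beta}$, with $\alpha=\tfrac{p}{q-p}(\tfrac{q}{r}-1)$ and $\beta=\tfrac{q}{q-p}(1-\tfrac{p}{r})$ when $q<\infty$, and $\alpha=p/r$, $\beta=1-p/r$ when $q=\infty$. The decisive observation, which I would verify by a short algebraic check in both cases, is that $\alpha+\beta=1$ and $\alpha,\beta\geq 0$. This is really the only nontrivial algebraic ingredient; everything else is a clean application of classical inequalities.

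Using $\alpha+\beta=1$ to write $n!=(n!)^{\alpha}(n!)^{\beta}$, I would rewrite the bound as
\[
\frac{\||A|^n\|_r}{n!}\leq\left(\frac{\||A|^n\|_p}{n!}\right)^{\!\alpha}\!\left(\frac{\||A|^n\|_q}{n!}\right)^{\!\beta},
\]
and then apply H\"older's inequality for sequences with conjugate exponents $1/\alpha$ and $1/\beta$ to obtain
\[
\sum_{n=1}^{\infty}\frac{\||A|^n\|_r}{n!}\leq\left(\sum_{n=1}^{\infty}\frac{\||A|^n\|_p}{n!}\right)^{\!\alpha}\!\left(\sum_{n=1}^{\infty}\frac{\||A|^n\|_q}{n!}\right)^{\!\beta}<\infty,
\]
which is exactly the $(\tau,r)$-exponentiability of $A$. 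One also uses the interpolation estimate once more (without the $n$-th power) to guarantee $A\in L_r(\nalgebra,\tau)$ in the first place.

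I do not foresee a genuine obstacle. The degenerate endpoint cases $\alpha\in\{0,1\}$ (i.e.\ $r=p$ or $r=q$) are immediate from the hypothesis and need only be mentioned separately to avoid dividing by zero in the Hölder exponents. The case $q=\infty$ is handled identically using the second line of \eqref{eq:riesz-thorin}, noting that the sum $\sum_{n}\|A\|^n/n!=e^{\|A\|}-1$ converges because $A\in\ex^\tau_\infty=\nalgebra$ is bounded.
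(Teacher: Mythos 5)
Your proposal is correct and takes essentially the same route as the paper: both apply the interpolation bound \eqref{eq:riesz-thorin} to each power $|A|^n$ (with exponents $\alpha,\beta\geq 0$, $\alpha+\beta=1$) and then sum over $n$. The only difference is in the last step, where you keep the geometric mean and invoke H\"older's inequality for sequences with exponents $1/\alpha$ and $1/\beta$, while the paper simply estimates $\||A|^n\|_r\leq\max\bigl\{\||A|^n\|_p,\||A|^n\|_q\bigr\}\leq\||A|^n\|_p+\||A|^n\|_q$ and sums termwise; both yield the same conclusion.
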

	\begin{proof}
		Let $A\in \ex^\tau_p\cap\ex^\tau_q$, in particular $|A|^n\in L_p(\nalgebra,\tau)\cap L_q(\nalgebra,\tau)$ for all $n\in\mathbb{N}$.
		
		Using equation \eqref{eq:riesz-thorin} we get that, for all $p\leq r\leq q$,
		$$\||A|^n\|_r\leq \max\bigl\{\||A|^n\|_p,\||A|^n\|_q\bigr\}\leq \||A|^n\|_p+\||A|^n\|_q ,$$
		$$\begin{aligned}
		\sum_{n=1}^N \frac{\||A|^n\|_r}{n!}&=\sum_{n=1}^N \frac{\||A|^n\|_p+\||A|^n\|_q}{n!}\\
		&=\sum_{n=1}^\infty \frac{\||A|^n\|_p}{n!}+\sum_{n=1}^\infty \frac{\||A|^n\|_q}{n!}\\
		&<\infty.
		\end{aligned}$$	
	\end{proof}
	
	Although $\ex^\tau_p$, in general, are not vector spaces, they still have a very convenient geometric structure for perturbations.
	
	\begin{proposition}
		\label{exconvex}
		\begin{enumerate}[(i)]
			\item $\ex^\tau_p$ is a balanced and convex set;
			
			\item for every $A\in\ex^\tau_p$ and $B\in \nalgebra$ with $\|B\|\leq 1$, $BA\in \ex^\tau_p$;
			
			\item if $1\leq p, \, q,\, r\leq \infty$ are such that $\frac{1}{p}+\frac{1}{q}=\frac{1}{r}$ and $A,B \in \nalgebra_\tau$,
			$$\sum_{n=1}^\infty\frac{\tau(|A|^{np})}{n!} \quad , \quad \sum_{n=1}^\infty\frac{\tau(|B|^{nq})}{n!}<\infty \ \Rightarrow \ \sum_{n=1}^\infty\frac{\tau(|AB|^{nr})}{n!}<\infty;$$
			
			\item $\ex^\tau_{p,\infty}$ is a subspace of $L_p(\nalgebra)$.
		\end{enumerate}
	\end{proposition}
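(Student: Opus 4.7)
My plan is to treat the four items independently, since each reduces to a bound on the generic term of the defining series. Throughout the argument I will use the identity $\||X|^n\|_p=\|X\|_{np}^n$ together with Minkowski's inequality (Theorem \ref{minkowski}), H\"older's inequality (Corollary \ref{g2holder}), and the standard bound $\|BX\|_p\leq\|B\|\,\|X\|_p$ for $B\in\nalgebra$.

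For (i), the balanced property is immediate from $\||\mu A|^n\|_p=|\mu|^n\||A|^n\|_p\leq\||A|^n\|_p$ whenever $|\mu|\leq 1$. For convexity, Minkowski plus the convexity of $x\mapsto x^n$ on $[0,\infty)$ yield
$$\|tA+(1-t)B\|_{np}^{n}\leq\bigl(t\|A\|_{np}+(1-t)\|B\|_{np}\bigr)^{n}\leq t\|A\|_{np}^{n}+(1-t)\|B\|_{np}^{n},$$
and summing over $n$ weighted by $1/n!$ gives a convex combination of two finite series. Item (ii) is even cleaner: $\||BA|^n\|_p=\|BA\|_{np}^n\leq\|B\|^n\|A\|_{np}^n\leq\||A|^n\|_p$, so the series for $BA$ is term-by-term dominated by that of $A$. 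Item (iv) combines Minkowski with the elementary inequality $(x+y)^n\leq 2^{n-1}(x^n+y^n)$ to get, for any $\lambda>0$,
$$\sum_{n=1}^{\infty}\frac{\lambda^n\||A+B|^n\|_p}{n!}\leq\sum_{n=1}^{\infty}\frac{(2\lambda)^n\bigl(\||A|^n\|_p+\||B|^n\|_p\bigr)}{n!},$$
and both pieces on the right are finite since $A,B\in\ex^\tau_{p,\infty}$ are required to be $(\tau,p,\mu)$-exponentiable for every $\mu>0$, in particular for $\mu=2\lambda$; stability under scalars is immediate.

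The delicate point is (iii), which needs a two-layer H\"older argument. The first layer is the noncommutative H\"older inequality (Corollary \ref{g2holder}) applied with conjugation $\frac{1}{np}+\frac{1}{nq}=\frac{1}{nr}$, which yields $\tau(|AB|^{nr})^{1/(nr)}\leq\tau(|A|^{np})^{1/(np)}\tau(|B|^{nq})^{1/(nq)}$, equivalently
$$\frac{\tau(|AB|^{nr})}{n!}\leq\left(\frac{\tau(|A|^{np})}{n!}\right)^{r/p}\left(\frac{\tau(|B|^{nq})}{n!}\right)^{r/q},$$
where I have used the crucial identity $\frac{r}{p}+\frac{r}{q}=1$ to split $n!=(n!)^{r/p}(n!)^{r/q}$. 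The second layer is then the classical H\"older inequality for sums, with conjugate exponents $p/r$ and $q/r$, which bounds the series by $\bigl(\sum\tau(|A|^{np})/n!\bigr)^{r/p}\bigl(\sum\tau(|B|^{nq})/n!\bigr)^{r/q}$, finite by hypothesis. The main obstacle is recognising precisely this factorisation of $n!$ so that the exponents match up for both H\"older inequalities to apply; once that bookkeeping is in place, convergence follows mechanically.
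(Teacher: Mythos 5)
Your proof is correct and follows essentially the same route as the paper's: the same balancedness/convexity computation via $x\mapsto x^n$ for (i), the same norm domination $\|BA\|_{np}\leq\|B\|\,\|A\|_{np}$ for (ii), and exactly the two-layer H\"older argument with the factorisation $n!=(n!)^{r/p}(n!)^{r/q}$ for (iii). For (iv) you use $(x+y)^n\leq 2^{n-1}(x^n+y^n)$ and $2\lambda$-exponentiability where the paper instead invokes convexity together with the scaling identity $\ex^\tau_{p,\lambda}=\lambda\,\ex^\tau_{p,1}$, but this is only a cosmetic difference in bookkeeping.
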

	\begin{proof}
		$(i)$ 	It is obvious that, for $A\in\ex^\tau_p$ and $|\lambda|\leq 1$ we have
		$$\begin{aligned}
		\sum_{n=1}^\infty\frac{\|\lambda A\|_{np}^n}{n!} &=\sum_{n=1}^\infty\frac{|\lambda|^n \|A\|_{np}^n}{n!}\leq\sum_{n=1}^\infty\frac{\|A\|_{np}^n}{n!}.\\
		\end{aligned}$$	
		
		Let $A,B\in \ex^\tau_p$ and let $0<\lambda<1$. Then, thanks to convexity of $x\mapsto x^n$ for each $n\in\mathbb{N}$,
		$$\begin{aligned}
		\sum_{n=1}^\infty\frac{\|\lambda A +(1-\lambda)B\|_{np}^n}{n!}&\leq\sum_{n=1}^\infty\frac{1}{n!}\left(\lambda \|A\|_{np}+(1-\lambda)\|B\|_{np}\right)^n\\
		&\leq\sum_{n=1}^\infty\frac{1}{n!}\left(\lambda\|A\|_{np}^{n}+(1-\lambda)\|B\|_{np}^{n}\right)\\
		&\leq\sum_{n=1}^\infty\frac{1}{n!}\|A\|_{np}^{n}+\sum_{n=1}^\infty\frac{1}{n!}\|B\|_{np}^{n};\\
		\end{aligned}$$
		
		$(ii)$ It follows trivially from $(i)$ in  Theorem \ref{minkowski};
		
		$(iii)$ It follows from Corollary \ref{g2holder} that 
		$$\begin{aligned}
		\sum_{n=1}^N\frac{\tau\left(|AB|^{nr}\right)}{n!}&\leq \sum_{n=1}^N\frac{1}{n!}\tau\left(|A|^{np}\right)^{\frac{r}{p}}\tau\left(|B|^{nq}\right)^{\frac{r}{q}}\\
		&=\sum_{n=1}^N\left(\frac{\tau\left(|A|^{np}\right)}{n!}\right)^\frac{r}{p}\left(\frac{\tau\left(|B|^{nq}\right)}{n!}\right)^\frac{r}{q}\\
		&\leq\left(\sum_{n=1}^N\frac{\tau\left(|A|^{np}\right)}{n!}\right)^\frac{r}{p}\left(\sum_{n=1}^N\frac{\tau\left(|B|^{nq}\right)}{n!}\right)^\frac{r}{q}.\\
		\end{aligned}$$
		
		$(iv)$ Notice that $A\in\ex^\tau_{p,\infty}$ if, and only if, $\lambda A\in\ex^\tau_p$ for every $\lambda\in\mathbb{R}$. If $\alpha=\beta=0$ the result is obvious, otherwise, it follows from item $(i)$ that, if $\alpha, \, \beta \in \mathbb{C}$ and $A,B\in\ex^\tau_{p,\infty}$,$$\begin{aligned}\alpha A+\beta B&=\left(|\alpha|+|\beta| \right)\left(\frac{|\alpha|}{|\alpha|+|\beta|}\left(\frac{\alpha}{|\alpha|}A\right)+\frac{|\beta|}{|\alpha|+|\beta|}\left(\frac{\beta}{|\beta|}B\right)\right)\\ &\in \left(|\alpha|+|\beta| \right)\ex^\tau_p\subset \ex^\tau_{p,\infty}.\end{aligned}$$
	\end{proof}
	
	The following lemma justifies the choice of the name ``exponentiable'' for such operators.
	
	\begin{lemma}
		\label{ExpansionalinLp}
		For each $A\in\ex^\tau_{p,\lambda}$ and $B\eta \nalgebra$ self-adjoint, define $A(t)=B^{it}AB^{-it}$. Then, for $0\leq t < \lambda$,
		$$\mathbbm{1}-Exp_r\left(\int_0^t;A(s) ds\right) \quad \textrm{ and } \quad \mathbbm{1}- Exp_l\left(\int_0^t;A(s) ds\right)\in L_p\left(\nalgebra,\tau\right).$$
	\end{lemma}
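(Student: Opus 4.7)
The strategy is to show that the series defining $Exp_r\bigl(\int_0^t; A(s)\,ds\bigr)-\mathbbm{1}$ converges absolutely in the Banach space $L_p(\nalgebra,\tau)$; membership then follows automatically. The condition $t<\lambda$, combined with the $(\tau,p,\lambda)$-exponentiability of $A$, will supply the required summability, and the generalized Hölder inequality will give the key uniform bound on each term.

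\textbf{Step 1 (conjugation is an $L_r$-isometry).} Since $B\eta\nalgebra$ is self-adjoint, the unitary group $\{B^{is}\}_{s\in\mathbb{R}}$ is strongly continuous on $\hilbert$ and lies in $\nalgebra$. For any $X\in L_r(\nalgebra,\tau)$ one has $|B^{is}XB^{-is}|^r=B^{is}|X|^rB^{-is}$, and the trace property gives $\tau\bigl(B^{is}|X|^rB^{-is}\bigr)=\tau(|X|^r)$, so inner conjugation is an $L_r$-isometry for every $r\in[1,\infty]$. In particular $\|A(s)\|_{np}=\|A\|_{np}$ for every $s$, and
\[
\|A\|_{np}^{\,n}=\tau(|A|^{np})^{1/p}=\||A|^n\|_p,
\]
which is finite for all $n$, since otherwise the exponentiability sum could not converge.

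\textbf{Step 2 (Hölder bound and integration).} Iterating Corollary \ref{g2holder} with $n$ equal exponents $np$ (so that $n\cdot\tfrac{1}{np}=\tfrac{1}{p}$) yields, uniformly in $(t_1,\ldots,t_n)$,
\[
\|A(t_n)\cdots A(t_1)\|_p\;\leq\;\prod_{i=1}^{n}\|A(t_i)\|_{np}\;=\;\||A|^n\|_p.
\]
Interpreting each term of the expansional as a Bochner integral in $L_p(\nalgebra,\tau)$ over the simplex $\{0\leq t_n\leq\cdots\leq t_1\leq t\}$, whose Lebesgue volume is $t^n/n!$, the triangle inequality gives
\[
\left\|\int_0^t dt_1\cdots\int_0^{t_{n-1}}dt_n\,A(t_n)\cdots A(t_1)\right\|_p\leq\frac{t^n}{n!}\,\||A|^n\|_p.
\]
Summing over $n\geq 1$ and using $t<\lambda$,
\[
\sum_{n=1}^{\infty}\frac{t^n\||A|^n\|_p}{n!}\;\leq\;\sum_{n=1}^{\infty}\frac{\lambda^n\||A|^n\|_p}{n!}<\infty
\]
by $(\tau,p,\lambda)$-exponentiability. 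Hence the series converges absolutely in $L_p(\nalgebra,\tau)$, proving $\mathbbm{1}-Exp_r\bigl(\int_0^t;A(s)ds\bigr)\in L_p(\nalgebra,\tau)$. The argument for $Exp_l$ is verbatim after reversing the order of the factors; the Hölder bound is symmetric.

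\textbf{Main obstacle.} The genuinely delicate point is the tacit strong $L_p$-measurability needed to justify the Bochner integrals, since a priori $s\mapsto A(s)$ need not be $\|\cdot\|_{np}$-continuous even though it is a pointwise $L_{np}$-isometry. I would handle this by reducing, via Step 1, to continuity at $s=0$, and then approximating $A$ by spectral truncations $E_{[-N,N]}^{B}\,A\,E_{[-N,N]}^{B}$ (which are bounded and commute suitably with $B^{is}$), so that strong continuity of $\{B^{is}\}$ on $\hilbert$ transfers to $\|\cdot\|_{np}$-continuity in the limit. Once $s\mapsto A(s)$ is $\|\cdot\|_{np}$-continuous, continuity of the $n$-fold Hölder product map $L_{np}^{\,n}\to L_p$ yields strong continuity, hence Bochner-measurability, of $(t_1,\ldots,t_n)\mapsto A(t_n)\cdots A(t_1)$ as an $L_p$-valued map. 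Every other step is routine $L_p$-estimation.
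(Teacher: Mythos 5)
Your proof is correct and follows essentially the same route as the paper's: the generalized H\"older inequality with $n$ equal exponents $np$ plus unitary invariance of the trace gives the uniform bound $\|A(t_n)\cdots A(t_1)\|_p\leq\||A|^n\|_p$, integration over the simplex produces the factor $t^n/n!$, and $(\tau,p,\lambda)$-exponentiability with $t<\lambda$ makes the partial sums Cauchy in $L_p(\nalgebra,\tau)$. Your additional attention to the Bochner-measurability of $s\mapsto A(s)$ addresses a point the paper passes over silently, but it does not change the argument.
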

	\begin{proof}
		
		Since $A\in \nalgebra_\tau \Rightarrow A(t)\in \nalgebra_\tau$, Proposition \ref{measurablealgebra} implies that each term in the definition of these operators is in $\nalgebra_\tau$ except for the identity. 
		
		In addition, using the Theorem \ref{gholder} for $p_i=n$, $i=1, \ldots, n$, we have, for every $N>M$, that
		
		\begin{equation}
		\label{eq:calculation1}
		\begin{aligned}
		\Bigg\|\sum_{n=N}^{M}\int_{0}^{t} dt_1\ldots \int_{0}^{t_{n-1}}dt_{n}&A(t_n)\ldots A(t_1)\Bigg\|_p\\	&\leq\sum_{n=N}^{M}\left\|{\int_{0}^{t} dt_1\ldots \int_{0}^{t_{n-1}}dt_{n}A(t_n)\ldots A(t_1)}\right\|_p\\	
		&\leq\sum_{n=N}^{M}\tau\left(\left|{\int_{0}^{t} dt_1\ldots \int_{0}^{t_{n-1}}dt_{n}A(t_n)\ldots A(t_1)}\right|^p\right)^\frac{1}{p}\\
		&\leq\sum_{n=N}^{M}\int_{0}^{t} dt_1\ldots \int_{0}^{t_{n-1}}dt_{n}\tau\left(\left|A(t_n)\ldots A(t_1)\right|^p\right)^\frac{1}{p}\\
		&=\sum_{n=N}^{M}\int_{0}^{t} dt_1\ldots \int_{0}^{t_{n-1}}dt_{n}\tau\left(\left|A\right|^{pn}\right)^\frac{1}{p}\\
		&=\sum_{n=N}^{M}\frac{t^n}{n!}\tau\left(\left|A\right|^{pn}\right)^\frac{1}{p},\\
		\end{aligned}
		\end{equation}
		which shows simultaneously that each term is in $L_p(\nalgebra,\tau)$ and the partial sum is a $\|\cdot\|_p$-Cauchy sequence. The thesis follows by completeness.
		
	\end{proof}
	
	To clarify the next definition, remember that $\nalgebra\cap L_p(\nalgebra,\tau)$ (or even $\nalgebra\cap L_1(\nalgebra,\tau)$) is $\|\cdot\|_p$-dense in $L_p(\nalgebra,\tau)$.
	
	\begin{definition}
		Let $\nalgebra$ be a von Neumann algebra and $\tau$ a faithful normal semifinite trace on $\nalgebra$. We say that a state $\phi$ on $\nalgebra$ is $\|\cdot\|_p$-continuous if it is continuous on $\left(\nalgebra\cap L_p(\nalgebra,\tau),\|\cdot\|_p\right)$.
		
		Of course, such a weight can be continuously extended to $\left(L_p(\nalgebra,\tau),\|\cdot\|_p\right)$ in a unique way.
	\end{definition}
	
	Hitherto, we have defined a set, namely $\ex^\tau_p$, that we assert is the right set to take our perturbation, but the reader should be warned after so many comments about the duality relations between $L_p$-spaces that we will demand some extra ``dual'' property on the original state. This motivates our next definition.
	
	\pagebreak
	
	\begin{proposition}
		\label{defpcont}
		Let $\nalgebra$ be a von Neumann algebra and $\tau$ a faithful normal semifinite trace on $\nalgebra$ and $1\leq p, \, q\leq \infty$ such that $\frac{1}{p}+\frac{1}{q}=1$. A state $\phi$ on $\nalgebra$ is $\|\cdot\|_p$-continuous if, and only if, there exists $H\in L_q(\nalgebra,\tau)$, $H\eta \mathfrak{M}_{\tau^\phi}$, such that 
		$$\phi(A)=\tau_H(A) \quad \forall A\in\nalgebra,$$ in the sense of Proposition \ref{unboundedderivativeweight}.
	\end{proposition}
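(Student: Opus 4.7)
The plan is to use the $L_p$-$L_q$ duality of Theorem \ref{dualLp} to produce a candidate $H\in L_q(\nalgebra,\tau)$ from the restriction of $\phi$ to the $\|\cdot\|_p$-dense subspace $\nalgebra\cap L_p(\nalgebra,\tau)$, then to promote the identification $\phi(A)=\tau(HA)$ from $\nalgebra\cap L_p(\nalgebra,\tau)$ to all of $\nalgebra$ using Proposition \ref{unboundedderivativeweight}.

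For the reverse implication, fix positive $H\in L_q(\nalgebra,\tau)$ and take the truncations $H_n=\int_0^n\lambda\,dE^H_\lambda$; each $H_n\in L_q\cap L_1\cap\nalgebra_+$ and $H_n\uparrow H$ in SOT. For positive $A\in\nalgebra\cap L_p(\nalgebra,\tau)$, trace cyclicity together with Corollary \ref{g2holder} and dominated convergence in $L_1$ yield
$$\tau(H_n^{1/2}AH_n^{1/2})=\tau(H_nA)\nearrow\tau(HA),$$
so by equation \eqref{eq:calculationX17} and Proposition \ref{unboundedderivativeweight} one has $\tau_H(A)=\tau(HA)$. A H\"older bound $|\tau_H(A)|\le\|H\|_q\|A\|_p$ then establishes the required $\|\cdot\|_p$-continuity.

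For the forward direction, assume $\phi$ is a normal $\|\cdot\|_p$-continuous state. Its restriction to $\nalgebra\cap L_p(\nalgebra,\tau)$ extends uniquely to a bounded linear functional on $L_p(\nalgebra,\tau)$, and Theorem \ref{dualLp} then produces a unique $H\in L_q(\nalgebra,\tau)$ with $\phi(A)=\tau(HA)$ on $\nalgebra\cap L_p(\nalgebra,\tau)$. Self-adjointness of $H$ follows from $\phi(A^*)=\overline{\phi(A)}$, and positivity of $\phi$ tested against $B=K^*K$ with $K\in\nalgebra\cap L_{2p}(\nalgebra,\tau)$ forces $H\ge 0$ since any negative spectral component of $H$ would contradict $\tau(HB)\ge 0$ on the positive cone of $L_p(\nalgebra,\tau)$ by faithfulness of $\tau$. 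Proposition \ref{unboundedderivativeweight} then delivers the normal weight $\tau_H$ on $\nalgebra$, agreeing with $\phi$ on $\nalgebra_+\cap L_p(\nalgebra,\tau)$ by the reverse-direction computation. To promote this agreement to all of $\nalgebra_+$, I approximate $A\in\nalgebra_+$ by $A_\alpha=A^{1/2}p_\alpha A^{1/2}$ for projections $p_\alpha\uparrow\mathbbm{1}$ of finite trace (available by semifiniteness); each $A_\alpha$ lies in $\nalgebra\cap L_p(\nalgebra,\tau)$ and $A_\alpha\uparrow A$ in SOT, so normality of both $\phi$ and $\tau_H$ closes the argument.

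The main obstacle is the final extension step, which hinges on normality of $\phi$ to transport the identity $\phi=\tau_H$ through the monotone limit $A_\alpha\uparrow A$. If \emph{state} is read as normal state (the natural reading in the von Neumann algebra context), normality is automatic; otherwise one must first derive it from $\|\cdot\|_p$-continuity by exhibiting $\tau_H$ as a normal extension of the restriction $\phi|_{\nalgebra\cap L_p(\nalgebra,\tau)}$ and arguing uniqueness on $\nalgebra$.
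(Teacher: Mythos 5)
Your proposal is correct and follows essentially the same route as the paper: the forward direction extends $\phi$ by $\|\cdot\|_p$-continuity to $L_p(\nalgebra,\tau)$, invokes the duality of Theorem \ref{dualLp} to produce $H\in L_q(\nalgebra,\tau)$, and transfers the identity from the dense set $\nalgebra\cap L_p(\nalgebra,\tau)$ to all of $\nalgebra$ (the paper via WOT-density and implicitly the normality you flag, you via monotone approximation), while you additionally spell out the converse and the positivity of $H$, which the paper dismisses as trivial or analogous. One cosmetic slip: the truncations $H_n$ need not lie in $L_1(\nalgebra,\tau)$ when $q>1$, but this is not needed, since $H_nA\in L_1(\nalgebra,\tau)$ already follows from H\"older and the cyclicity can be justified through $\tau(X^\ast X)=\tau(XX^\ast)$.
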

	\begin{proof}
		As mentioned in Definition \ref{defpcont}, we can continuously extended $\phi$ to $L_p(\nalgebra,\tau)$. By the dual relation between $L_p(\nalgebra,\tau)$ and $L_q(\nalgebra,\tau)$ stated in Theorem \ref{dualLp}, there exists $H\in L_q(\nalgebra,\tau)$ such that $\phi(A)=\tau(HA)$ for all $A\in L_p(\nalgebra,\tau)$. Such $H$ must be affiliated with $\mathfrak{M}_{\tau^\phi}$.
		
		In particular, $\phi(A)=\tau(HA)$ for all $A\in \nalgebra\cap L_p(\nalgebra,\tau)$, but $\nalgebra\cap L_p(\nalgebra,\tau)$ is \mbox{WOT-dense} in $\nalgebra$, since the trace is semifinite.
		
		The cases $p=1,\infty$, are analogous and the other part of the equivalence is trivial.
		
	\end{proof}
	
	The next two theorems can be seen as the key to guarantee Dyson's series is convergent. It is time to stress how important Araki's multiple-time KMS condition is for the theory. Here it is used with the same purposes of the original Araki's article \cite{Araki73}. Mentioning an interesting connection, this property is also used in the Araki's noncommutative $L_p$-spaces, what makes us believe there is a natural way to extend this result.
	
	\begin{notation} \begin{enumerate}[(i)]
		
		\item Let $\Omega\subset \mathbb{R}^n$ an convex domain (\ie \, an open convex set). We define the tube over $\Omega$ by $$T(\Omega)=\left\{z\in \mathbb{C}^n \ \middle| \ \Im{z}\in \Omega \right\}.$$
		
		The following convex domain will play a relevant role to our purposes	$$S^n_{\alpha}\doteq\left\{(t_1,\ldots,t_n)\in \mathbb{R}^n \ \middle | \ t_i<0, \ 1\leq i\leq n, \textrm{ and } -\alpha<\sum_{i=1}^n t_i<0\right\}.$$
		
		\item $\nanalytic$ will denote the set of analytic elements for the (one-parameter) modular automorphism group.
	\end{enumerate}
	\end{notation}
	
	\begin{theorem}
		\label{TR0}
		Let $\nalgebra$ be a von Neumann algebra, $\phi$ a faithful state in $\nalgebra$ and $n\in\mathbb{N}$. Let also $\left(\hilbert_\phi,\Phi, \pi_\phi\right)$ be the GNS representation throughout $\phi$, $\tau$ a normal faithful semifinite trace on $B(\hilbert_\phi)$, $Q_i, J_\phi Q_i J_\phi\in L_{2mq}\left(B(\hilbert_\phi),\tau\right)$ such that $\|J_\phi Q_i J_\phi\|_{2mq}=\|Q_i\|_{2mq}$ for all $1\leq i,m\leq n$ and suppose $\phi=\tau_H$ is $\|\cdot\|_p$-continuous. Then, if $\Phi \in\Dom{Q_1}$ and $\Delta_\Phi^{\iu z_{j-1}}Q_{j-1}\ldots \Delta_\Phi^{\iu z_1}Q_1\Phi \in \Dom{Q_j}$ for every $-\frac{1}{2}\leq \Im{z_j} \leq 0$ and for every $2\leq j\leq n$,
		
		$$Q_n\Delta_\Phi^{\iu z_{n-1}}Q_{n-1}\ldots \Delta_\Phi^{\iu z_1}Q_1\Phi \in \Dom{\Delta_\Phi^{\iu z}} \textrm{ for } -\frac{1}{2}\leq\Im{z}\leq0 \textrm{ and }$$
		$$A^n(z_1,\ldots, z_n)\Phi \doteq \Delta_\Phi^{\iu z_n} Q_n \Delta_\Phi^{\iu z_{n-1}}Q_{n-1}\ldots \Delta_\Phi^{\iu z_1}Q_1\Phi$$ is analytic on $T\left(S^n_\frac{1}{2}\right)$
		and bounded on its closure by
		$$ \left\|A^n(z_1,\ldots, z_n)\Phi\right\|\leq \|H\|_p^\frac{1}{2} \prod_{j=1}^{n}\|Q_i\|_{2nq}.$$
	\end{theorem}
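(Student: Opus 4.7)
The plan is to use the $\|\cdot\|_p$-continuity of $\phi$ to realize everything in the density-matrix picture, where all estimates reduce to an application of the generalized H\"older inequality (Corollary \ref{g2holder}).

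\textbf{Step 1 (explicit formula).} By Proposition \ref{defpcont}, $\phi = \tau_H$ with $H \in L_q(B(\hilbert_\phi),\tau)$; in the standard form realization the GNS vector $\Phi$ is identified with $H^{1/2}$ (as an $L_2$-element), and the modular operator acts via $\Delta_\Phi^{\iu z}\xi = H^{\iu z}\xi H^{-\iu z}$. Iterating this, the hypothesis $\Delta_\Phi^{\iu z_{j-1}} Q_{j-1} \cdots Q_1 \Phi \in \Dom{Q_j}$ yields, at least for real $z_j$, the explicit identity
$$A^n(z_1,\ldots,z_n)\Phi \;=\; H^{\iu z_n} Q_n H^{\iu z_{n-1}} Q_{n-1} \cdots H^{\iu z_1} Q_1 H^{\frac{1}{2} - \iu \sum_j z_j},$$
which I then extend via the spectral calculus for $H$ to the whole of $\overline{T(S^n_{1/2})}$, once the right-hand side is shown to be well-defined there.

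\textbf{Step 2 (H\"older bound).} Writing $z_j = t_j + \iu s_j$, on $\overline{T(S^n_{1/2})}$ one has $s_j \leq 0$ and $-\frac{1}{2} \leq \sum_j s_j \leq 0$, so the real parts of all $H$-exponents are nonnegative: $|H^{\iu z_j}| = H^{-s_j}$ and $|H^{\frac{1}{2} - \iu\sum z_j}| = H^{\frac{1}{2} + \sum s_j}$. I apply Corollary \ref{g2holder} with $L_r$-exponents $r_j = -p/s_j$ for each $H^{\iu z_j}$, $2nq$ for each $Q_i$, and $r_0 = p/(\frac{1}{2} + \sum_j s_j)$ for the last $H$-factor. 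The reciprocal exponents telescope:
$$\sum_j \frac{-s_j}{p} \,+\, \frac{n}{2nq} \,+\, \frac{\frac{1}{2} + \sum_j s_j}{p} \;=\; \frac{1}{2p} + \frac{1}{2q} \;=\; \frac{1}{2},$$
and the product of the $L_r$-norms equals exactly $\|H\|_p^{1/2}\prod_{i=1}^n\|Q_i\|_{2nq}$. Boundary faces, where some $s_j = 0$ or $\sum_j s_j = -\frac{1}{2}$, are absorbed by continuity, dropping the corresponding unitary factor.

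\textbf{Step 3 (analyticity and domain).} The explicit formula exhibits $A^n(z_1,\ldots,z_n)\Phi$ as an $L_2$-valued function jointly analytic on $T(S^n_{1/2})$ (via the spectral calculus for $H$ together with dominated convergence) and continuous on the closure, with norm controlled by the estimate of Step 2. The missing domain conclusion $Q_n\Delta_\Phi^{\iu z_{n-1}}Q_{n-1}\cdots\Phi \in \Dom{\Delta_\Phi^{\iu z}}$ for $-\frac{1}{2}\leq \Im{z}\leq 0$ is then immediate: the bounded analytic extension realizes this vector inside the graph of the unbounded closed operator $\Delta_\Phi^{\iu z}$.

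The main obstacle is the rigorous justification of the density-matrix formula and the rearrangement underlying the H\"older estimate, since the $Q_i$ are noncommutative $L_{2nq}$-operators that need not belong to $\nalgebra$ and several of the $H^{\iu z_j}$ are unbounded. The symmetry hypothesis $\|J_\phi Q_i J_\phi\|_{2mq} = \|Q_i\|_{2mq}$ is what enables us to handle the boundary face $\sum_j s_j = -\frac{1}{2}$: there the rightmost $H$-factor disappears and one must reorder the product through the modular identity $\Delta_\Phi^{1/2} = J_\phi S$, which interchanges left and right multiplications by $H$ via the modular conjugation.
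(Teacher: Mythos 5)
Your Step 2 contains a genuinely nice observation --- the telescoping H\"older exponents $r_j=-p/s_j$ reproduce the bound $\|H\|_p^{1/2}\prod_j\|Q_j\|_{2nq}$ pointwise on the tube, with no appeal to the maximum modulus principle --- but the proof as a whole has a real gap, and it sits exactly where you flag ``the main obstacle'': Step 1. In the setting of the theorem, $\tau$ is a trace on $B(\hilbert_\phi)$ while $\Phi$ is the GNS vector of the state $\phi$ of $\nalgebra$ and lives in $\hilbert_\phi$ itself; the element $H^{1/2}$ lives in $L_2\left(B(\hilbert_\phi),\tau\right)$, a different Hilbert space, and $\Delta_\Phi$ is the modular operator of the pair $(\nalgebra,\Phi)$ acting on $\hilbert_\phi$, not of $(B(\hilbert_\phi),H^{1/2})$ acting on Hilbert--Schmidt operators. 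The identifications $\Phi\leftrightarrow H^{1/2}$ and $\Delta_\Phi^{\iu z}\xi=H^{\iu z}\xi H^{-\iu z}$ are therefore not available here; the only consequence of $\phi=\tau_H$ that can legitimately be used is the scalar identity $\|X\Phi\|^2=\phi(X^\ast X)=\tau\left(H^{1/2}X^\ast XH^{1/2}\right)$. Worse, even granting a standard-form realization, asserting that the product $H^{\iu z_n}Q_n\cdots Q_1H^{\frac{1}{2}-\iu\sum_j z_j}$ \emph{equals} $\Delta_\Phi^{\iu z_n}Q_n\cdots Q_1\Phi$ for non-real $z_n$ presupposes that this vector lies in the domain of the unbounded operator $\Delta_\Phi^{\iu z_n}$ --- which is the first conclusion of the theorem. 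Your Step 3 declares this ``immediate'', but it is precisely the step that requires an argument.

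The paper closes this gap with a duality argument your proposal omits: for each $A$ in the set $\nanalytic$ of analytic elements it studies the scalar function $z\mapsto\ip{Q_n\Delta_\Phi^{\iu z_{n-1}}\cdots Q_1\Phi}{\Delta_\Phi^{-\iu\bar z}A\Phi}$ (with the $Q_i$ replaced by spectral truncations $Q_{i,k}$), shows it is analytic on the tube and bounded on the distinguished boundary using only the trace identity above, the invariance of the $L_{2nq}$-norms under $\tau^\phi_t$ and under $Q\mapsto J_\phi QJ_\phi$, and Tomita's relation $\Delta_\Phi^{1/2}X\Phi=J_\phi X^\ast\Phi$; the maximum modulus principle then bounds it on the whole tube, and Hahn--Banach together with the Riesz representation theorem produce the vector $\Delta_\Phi^{\iu z}Q_n\cdots Q_1\Phi$, hence the domain membership, with weak (and therefore strong) analyticity following from the density of $\nanalytic\Phi$. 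If you wish to keep your computational route, you must first carry out this functional step (or an equivalent core/closed-graph argument) to justify the formula of Step 1; as written, Steps 1 and 3 assume what is to be proved.
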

	
	\begin{proof}
		Let's proceed by induction on $n$.
		
		For $n=1$, let $Q_1=U|Q_1|$ be the polar decomposition of $Q_1$ and $\displaystyle|Q_1|=\int_0^\infty \lambda dE^{|Q_1|}_\lambda$ the spectral decomposition of $|Q_1|$. Since $\Phi\in\Dom{Q_1}$, $\displaystyle Q_1\Phi=U\lim_{k\to \infty} Q_{1,k}\Phi$, where $\displaystyle Q_{1,k}=\int_0^k \lambda dE^{|Q_1|}_\lambda$.
		Define the following \hyphenation{func-tion-als}functionals on $\nanalytic\Phi$:
		$$\begin{aligned}
		f_k^z(A\Phi) &\doteq\ip{UQ_{1,k}\Phi}{\Delta_\Phi^{-\iu \bar{z}}A\Phi}_\phi,\\
		f^z(A\Phi)&\doteq\lim_{k\to\infty}f_k^z(A\Phi)=\ip{Q_1\Phi}{\Delta_\Phi^{-\iu \bar{z}}A\Phi}_\phi.\\	
		\end{aligned}$$
		
		Of course, for fixed $A\Phi$, $\bar{f_k}(z)=\overline{f_k^z(A\Phi)}$ is entire analytic and
		\begin{equation}
		\label{calculationX17}
		\begin{aligned}
		\left|\bar{f}(t)\right|&=\lim_{k\to\infty}\left|\bar{f}_k(t)\right|\\
		&=\lim_{k\to\infty}\left|\ip{\Delta_\Phi^{-\iu t}A\Phi}{UQ_{1,k}\Phi}_\phi\right|\\
		&\leq \left\|\Delta_\Phi^{-\iu t}A\Phi\right\| \lim_{k\to\infty} \|UQ_{1,k}\Phi\|\\
		& \leq\left\|A\Phi\right\|\lim_{k\to\infty}\phi\left(Q_{1,k} U^\ast U Q_{1,k} \right)^\frac{1}{2}\\
		&\leq \left\|A\Phi\right\|\lim_{k\to\infty}\tau\left(H^\frac{1}{2}Q_{1,k} U^\ast U Q_{1,k} H^\frac{1}{2}\right)^\frac{1}{2}\\
		& \leq\left\|A\Phi\right\|\|H\|_p^\frac{1}{2}\||Q_1|^2\|_q^\frac{1}{2}\\
		&\leq\left\|A\Phi\right\|\|H\|_p^\frac{1}{2}\|Q_1\|_{2q}, \quad \forall t\in \mathbb{R}.\\
		\end{aligned}\end{equation}
		Moreover,	
		\begin{equation}
		\label{calculationX17C}
		\begin{aligned}
		\left|\bar{f}\left(t+\frac{1}{2}\iu\right)\right|&=\lim_{k\to\infty}\left|\bar{f}_k\left(t+\frac{1}{2}\iu\right)\right|\\
		&=\lim_{k\to\infty}\ip{\Delta_\Phi^{\frac{1}{2}}\Delta_\Phi^{-\iu t}A\Phi}{UQ_{1,k}\Phi}_\phi\\
		&\leq \left\|\Delta_\Phi^{-\iu t}A\Phi\right\| \lim_{k\to\infty} \|J_\Phi Q_{1,k} U^\ast\Phi\|\\
		&= \left\|A\Phi\right\| \lim_{k\to\infty} \| Q_{1,k} U^\ast\Phi\|\\
		&\leq \left\|A\Phi\right\|\lim_{k\to\infty}\phi\left(U Q_{1,k} Q_{1,k} U^\ast \right)^\frac{1}{2}\\
		&\leq \left\|A\Phi\right\|\lim_{k\to\infty}\tau\left(H^\frac{1}{2}U Q_{1,k}^2 U^\ast H^\frac{1}{2}\right)^\frac{1}{2}\\
		&\leq \left\|A\Phi\right\|\|H\|_p^\frac{1}{2}\||Q_1^\ast|^2\|_q^\frac{1}{2}\\
		&\leq \left\|A\Phi\right\|\|H\|_p^\frac{1}{2}\||Q_1|^2\|_q^\frac{1}{2}\\
		&\leq\left\|A\Phi\right\|\|H\|_p^\frac{1}{2}\|Q_1\|_{2q}, \quad \forall t\in \mathbb{R};\\	
		\end{aligned}
		\end{equation}
		which proves that the functional concerned is bounded for $-\frac{1}{2}\leq\Im{z}\leq0 $ due to the Maximum Modulus Principle. This bound also proves that if $\bar{f}_k\to \bar{f}$ uniformly for $-\frac{1}{2}\leq\Im{z}\leq0$, then $\bar{f}$ is analytic for $-\frac{1}{2}<\Im{z}<0$ and bounded for $-\frac{1}{2}\leq\Im{z}\leq0 $.
		
		Using first the Hahn-Banach Theorem to obtain an extension (also denoted by $f_z$) to the whole Hilbert space in such a way that $\|f_z\|\leq \|H\|_p^\frac{1}{2}\|Q_1\|_{2q}$, we know by the Riesz Representation Theorem, that there exists a $\Omega(z)\in \hilbert_\phi$ such that $f_z(\cdot)=\ip{\Omega(z)}{\cdot}_\phi$. Since $\nanalytic\Phi$ is dense, $\Omega(z)$ is unique. 
		
		So far we have that $Q_1\Phi\in \Dom{\left(\Delta_\Phi^{-\iu \bar{z}}\right)^\ast}=\Dom{\Delta_\Phi^{\iu z}}$ and $\bar{f}(z)=\ip{A\Phi}{\Delta_\Phi^{\iu z}Q_1\Phi}_\phi$ is analytic on $\left \{z\in\mathbb{C} \ \middle| \ -\frac{1}{2}< \Im{z}< 0\right\}$ and continuous on its closure, for every ${A\in \nanalytic}$.
		
		Since $\overline{\nanalytic\Phi}^{\|\cdot\|}=\overline{\nalgebra\Phi}^{\|\cdot\|}=\hilbert_\phi$, the vector-valued function $A(z)\Phi\doteq\Delta_\Phi^{\iu z}Q_1\Phi$ is weak analytic, hence, strong analytic on $\left \{z\in\mathbb{C} \ \middle| \ -\frac{1}{2}< \Im{z}< 0\right\}$ and
		$$\|A(z)\Phi\|\leq \|H\|_p^\frac{1}{2}\|Q_1\|_{2q} \quad \forall z\in \left \{z\in\mathbb{C} \ \middle| \ -\frac{1}{2}\leq \Im{z}\leq 0\right\}.$$
		
		Suppose now the hypothesis hold for $n\in \mathbb{N}$. We will use the same ideas: let $Q_{n+1}=U|Q_{n+1}|$ be the polar decomposition of $Q_{n+1}$ and $\displaystyle|Q_{n+1}|=\int_0^\infty \lambda dE^{|Q_{n+1}|}_\lambda$ the spectral decomposition of $|Q_{n+1}|$. Since $\Phi\in\Dom{Q_{n+1}}$, $\displaystyle Q_{n+1}\Phi=U\lim_{k\to \infty} Q_{n+1,k}\Phi$.
		
		$$
		f^{(z_1,\ldots, z_{n+1})}(A\Phi) =\ip{Q_{n+1}\Delta_\Phi^{\iu z_n}Q_n\ldots\Delta_\Phi^{\iu z_1}Q_1\Phi}{\Delta_\Phi^{-\iu \bar{z}_{n+1}}A\Phi}_\phi.$$
		
		Since $\bar{f}_k(z_1,\ldots,z_{n+1})\doteq\overline{f_k^{(z_1,\ldots, z_{n+1})}(A\Phi)}$ is an analytic function, it attains its maximum at an extremal point of $S^{n+1}_\frac{1}{2}$ (see \cite{Araki73} Corollary 2.2). Denoting $z_j=x_j+\iu y_j$, $x_j,\, y_j \in \mathbb{R}$ for all $1\leq j\leq n+1$, and repeating the calculations in equations \eqref{calculationX17} and \eqref{calculationX17C}, first for the extremal points with $\Im{z_j}=0$ for all $1\leq j\leq n+1$, we get
		
		$(i)$ if $\Im{z_i}=0, \ 1\leq i\leq n$,
		$$\begin{aligned}
		\left|\bar{f}_k(z_1,\ldots, z_{n+1})\right|	&=\left|\ip{Q_{n+1}\Delta_\Phi^{\iu x_n}Q_n\ldots\Delta_\Phi^{\iu x_1}Q_1\Phi}{\Delta_\Phi^{-\iu x_{n+1}}A\Phi}_\phi\right|\\
		&\leq\left\|Q_{n+1}\tau^\phi_{x_n}(Q_n)\ldots\tau^\phi_{x_n+\cdots+x_1}(Q_1)\Phi\right\| \left\|\Delta_\Phi^{-\iu x_{n+1}}A\Phi\right\|\\
		&\leq\tau\left(\left| Q_{n+1}\tau^\phi_{x_n}(Q_n)\ldots\tau^\phi_{x_n+\cdots+x_1}(Q_1)H^\frac{1}{2}\right|^2\right)^\frac{1}{2} \|A\Phi\|\\
		&\leq \left\|H^\frac{1}{2}\right\|_{2p}\ \prod_{i=1}^{n+1}\|Q_i\|_{2nq}.
		\end{aligned}$$
		
		$(ii)$ if $\Im{z_i}=0, \ 1\leq i\leq n$, $i\neq k$ and $\Im{z_k}=-\frac{1}{2}$, where $x_i=\Re{z_i}$
		
		$$\begin{aligned}
		&\left|\bar{f}_k(z_1,\ldots, z_{n+1})\right|\\
		&=\left|\ip{Q_{n+1}\Delta_\Phi^{\iu x_n}Q_n\ldots\Delta_\Phi^{\iu x_{k-1}}Q_{k-1}\Delta_\Phi^{\frac{1}{2}}Q_k\Delta_\Phi^{\iu x_1}Q_1\Phi}{\Delta_\Phi^{-\iu x_{n+1}}A\Phi}_\phi\right|\\
		&\leq \|A\Phi\| \left\|H\right\|_p^\frac{1}{2} \prod_{i=1}^{k-1}\left\|\tau^\phi_{x_n+\cdots+x_i}(Q_i)\right\|_{2nq} \ \prod_{i=k}^{n+1}\left\|J_\phi\tau^\phi_{x_n+\cdots+x_i}(Q_i)J_\phi\right\|_{2nq}\\
		&= \|A\Phi\| \|H\|_p^\frac{1}{2} \prod_{i=1}^{n+1}\left\|Q_i\right\|_{2nq}\\
		\end{aligned}$$
	\end{proof}
	
	The previous result depends a lot on the possibility of ``extending'' the trace, that is originally defined only in the algebra, to the algebra generated by $\nalgebra\cup\nalgebra^\prime$. One may try to define
	$$\tau(J_\Phi |A| J_\Phi B)=\tau(|A|)\tau(|B|),$$
	but it immediately fails, in general, since the application of this formula with either $A=\mathbbm{1}$ or $B=\mathbbm{1}$, due to $\tau(\mathbbm{1})=\infty$.
	
	In order to relax the condition on the possibility of having a trace in all the $GNS$-represented algebra, we have to demand more regularity on the perturbation. The next theorem shows almost the same result as the previous one, with a little more restricted perturbation.
	
	\begin{theorem}
		\label{TR1}
		Let $\nalgebra\subset B(H)$ be a von Neumann algebra, $\tau$ a normal faithful semifinite trace on $\nalgebra$, $\phi(\cdot)=\ip{\Phi}{\cdot\Phi}$ a state on $\nalgebra$ and $n\in\mathbb{N}$. Let also $n\in\mathbb{N}$, $p, \, q\geq1$ with $\frac{1}{p}+\frac{1}{q}=1$, \mbox{$Q_i\in L_{4mq}\left(\nalgebra,\tau\right)$} for all $1\leq i,m\leq n$ and suppose $\phi=\tau_H$ is $\|\cdot\|_p$-continuous.
		
		Then, if $\Phi \in\Dom{Q_1}$ and $\Delta_\Phi^{\iu z_{j-1}}Q_{j-1}\ldots \Delta_\Phi^{\iu z_1}Q_1\Phi \in \Dom{Q_j}$ for every $-\frac{1}{2}\leq \Im{z_j} \leq 0$ and for every $2\leq j\leq n$,
		$Q_n\Delta_\Phi^{\iu z_n}Q_{n-1}\ldots \Delta_\Phi^{\iu z_1}Q_1\Phi \in \Dom{\Delta_\Phi^{\iu z}}$ for $-\frac{1}{2}\leq\Im{z}\leq 0$ and $$A^n(z_1,\ldots, z_n)\Phi \doteq Q\Delta_\Phi^{\iu z_n}Q_n\ldots \Delta_\Phi^{\iu z_1}Q_1\Phi$$ is analytic on $T\left(S^n_\frac{1}{2}\right)$
		and bounded on its closure by
		$$ \left\|A^n(z_1,\ldots, z_n)\Phi\right\| \leq \|H\|_p^\frac{1}{2}\max_{0\leq l\leq n-1}\left\{\underbrace{\left(\prod_{j=1}^{l}\|Q_j\|_{4lq}\right)}_{=1 \textrm{ if } l=0}\left(\prod_{j=l+1}^{n}\|Q_j\|_{4(n-l)q}\right)\right\}\\.$$
	\end{theorem}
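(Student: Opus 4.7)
The proof plan is to proceed by induction on $n$, following the structure of the proof of Theorem \ref{TR0}. The crucial new ingredient will be the estimate at extremal points of the tube $T(S^{n+1}_\frac{1}{2})$ where some $\Im z_l = -\frac{1}{2}$: since Theorem \ref{TR1} does not assume that $J_\Phi Q_i J_\Phi$ lies in any $L_p$-space, the estimate used in Theorem \ref{TR0} is unavailable and must be replaced by a splitting argument that works entirely through the trace $\tau$ on $\nalgebra$.

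As in Theorem \ref{TR0}, I would fix $A\in\nanalytic$, define the truncated scalar function $\bar f_k(z_1,\ldots,z_{n+1})$ via spectral cut-offs $Q_{i,k}$ of each $Q_i$, verify that $\bar f_k$ is entire and converges uniformly on compacta to $\bar f$, and then invoke the Maximum Modulus Principle on the tube (\cite[Corollary 2.2]{Araki73}) to reduce the estimate to the extremal points of $S^{n+1}_\frac{1}{2}$, namely the origin and the points $-\frac{1}{2}e_l$ for $l\in\{1,\ldots,n+1\}$. Strong analyticity of $A^{n+1}\Phi$ would then follow from the resulting uniform bound on $\bar f_k$ together with Hahn--Banach, the Riesz Representation Theorem, and density of $\nanalytic\Phi$ in $\hilbert_\phi$, exactly as at the end of the proof of Theorem \ref{TR0}.

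The key step is the estimate at an extremal point with $\Im z_l = -\frac{1}{2}$. Using Tomita--Takesaki ($\Delta_\Phi^{\iu x_j}Q_j = \tau^\phi_{x_j}(Q_j)\Delta_\Phi^{\iu x_j}$) to commute the modular factors to the left and absorb them into $\tilde Q_j := \tau^\phi_{y_j}(Q_j)$---which has the same $L_r$-norm as $Q_j$, since $\tau^\phi$ is unitarily implemented---the vector factors as $v = X_R\,\Delta_\Phi^\frac{1}{2}\,X_L\,\Phi$, with $X_L = \tilde Q_l\cdots\tilde Q_1$ and $X_R = Q_{n+1}\tilde Q_n\cdots\tilde Q_{l+1}$. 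By the domain hypothesis, $\Delta_\Phi^\frac{1}{2}X_L\Phi = J_\Phi X_L^*\Phi$, and since $J_\Phi X_L^*J_\Phi\in\nalgebra'$ commutes with $X_R\in\nalgebra$, expanding $\|v\|^2 = \|X_R J_\Phi X_L^*\Phi\|^2$ and applying the antiunitarity of $J_\Phi$ together with $J_\Phi\Phi = \Phi$ gives
\[
\|v\|^2 = \langle |X_L^*|^2\Phi,\, J_\Phi X_R^*X_R\Phi\rangle.
\]
Cauchy--Schwarz then produces $\|v\|^2\leq \||X_L^*|^2\Phi\|\cdot\|X_R^*X_R\Phi\|$, and by $\|\cdot\|_p$-continuity of $\phi=\tau_H$ (Proposition \ref{defpcont}) and Lemma \ref{normtraceinequality} each factor satisfies $\|Y\Phi\|^2 = \tau(H Y^*Y)\leq \|H\|_p\|Y^*Y\|_q$, yielding $\||X_L^*|^2\Phi\|\leq \|H\|_p^\frac{1}{2}\|X_L\|_{4q}^2$ and $\|X_R^*X_R\Phi\|\leq \|H\|_p^\frac{1}{2}\|X_R\|_{4q}^2$. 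The generalized H\"older inequality (Corollary \ref{g2holder}) distributes these bounds over the $l$ and $n+1-l$ factors, giving $\|X_L\|_{4q}\leq \prod_{j=1}^{l}\|Q_j\|_{4lq}$ and $\|X_R\|_{4q}\leq \prod_{j=l+1}^{n+1}\|Q_j\|_{4(n+1-l)q}$. Multiplying yields the $l$-indexed bound, and taking the maximum over $l$ (with the origin and $\Im z_{n+1}=-\frac{1}{2}$ cases corresponding to $X_L$ or $X_R$ empty) gives the asserted estimate.

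The hardest part will be the rigorous justification of each unbounded-operator manipulation: that $X_L\Phi\in\Dom{\Delta_\Phi^\frac{1}{2}}$, that $\Delta_\Phi^\frac{1}{2}X_L\Phi = J_\Phi X_L^*\Phi$ extends from $\nalgebra\Phi$ to the unbounded $X_L$ in question, and that the commutation between $J_\Phi X_L^*J_\Phi\in\nalgebra'$ and $X_R\in\nalgebra$ is valid on the vectors that appear. The domain hypotheses of the theorem, together with the inductive hypothesis providing analyticity (and hence boundedness on the tube) of the inner chain $\Delta_\Phi^{\iu z_{n-1}}Q_{n-1}\cdots\Delta_\Phi^{\iu z_1}Q_1\Phi$, are designed precisely to supply what is needed for all of these operations to be well-defined.
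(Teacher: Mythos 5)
Your proposal follows essentially the same route as the paper's proof: induction on $n$, spectral truncation of the $Q_i$, the Maximum Modulus Principle on the tube to reduce to the extremal points of $S^{n+1}_{\frac{1}{2}}$, and Hahn--Banach, Riesz representation and density of $\nanalytic\Phi$ to upgrade the uniform scalar bounds to strong analyticity of the vector-valued function. Your detailed derivation of the bound at an extremal point with $\Im{z_l}=-\frac{1}{2}$ --- rewriting the vector as $X_R J_\Phi X_L^{\ast}\Phi$, commuting $J_\Phi X_L^{\ast}J_\Phi\in\nalgebra^{\prime}$ past $X_R$, then applying Cauchy--Schwarz, the $\|\cdot\|_p$-continuity of $\phi=\tau_H$ and H\"older --- is precisely the computation the paper's inequality in step $(ii)$ asserts without writing out, and it correctly explains the appearance of the index $4q$ in place of the $2q$ of Theorem \ref{TR0}.
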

	\begin{proof}
		Let's proceed by induction on $n$.
		
		For $n=1$, let $Q_1=U|Q_1|$ be the polar \hspace{.1pt} decomposition of $Q_1$ and \mbox{$\displaystyle|Q_1|=\int_0^\infty \lambda dE^{|Q_1|}_\lambda$} the spectral decomposition of $|Q_1|$. Since $\Phi\in\Dom{Q_1}$, $\displaystyle Q_1\Phi=U\lim_{k\to \infty} Q_{1,k}\Phi$, where $\displaystyle Q_{1,k}=\int_0^k \lambda dE^{|Q_1|}_\lambda$.
		Define the following functionals on $\nanalytic\Phi$
		$$\begin{aligned}
		f_k^z(A\Phi) &\doteq\ip{UQ_{1,k}\Phi}{\Delta_\Phi^{-\iu \bar{z}}A\Phi}_\phi,\\
		f^z(A\Phi)&\doteq\lim_{k\to\infty}f_k^z(A\Phi)=\ip{Q_1\Phi}{\Delta_\Phi^{-\iu \bar{z}}A\Phi}_\phi.\\	
		\end{aligned}$$
		
		Of course, for fixed $A\Phi$, $\bar{f_k}(z)=\overline{f_k^z(A\Phi)}$ is entire analytic and, in the two lines of extremal points, we have
		\begin{equation}
		\label{calculationX17.1}
		\begin{aligned}
		\left|\bar{f}(t)\right|&=\lim_{k\to\infty}\left|\bar{f}_k(t)\right|\\
		&\leq\left\|A\Phi\right\|\|H\|_p^\frac{1}{2}\|Q_1\|_{2q} \quad \forall t\in \mathbb{R};\\	
		\left|\bar{f}\left(t+\frac{1}{2}\iu\right)\right|&=\lim_{k\to\infty}\left|\bar{f}_k\left(t+\frac{1}{2}\iu\right)\right|\\
		&=\lim_{k\to\infty}\ip{\Delta_\Phi^{\frac{1}{2}}\Delta_\Phi^{-\iu t}A\Phi}{UQ_{1,k}\Phi}_\phi\\
		&\leq\left\|A\Phi\right\|\|H\|_p^\frac{1}{2}\|Q_1\|_{2q} \quad \forall t\in \mathbb{R};\\	
		\end{aligned}
		\end{equation}
		which proves that the functional concerned is bounded for $-\frac{1}{2}\leq\Im{z}\leq0 $ due to the Maximum Modulus Principle. This bound also proves that if $\bar{f}_k\to \bar{f}$ uniformly for $-\frac{1}{2}\leq\Im{z}\leq0$, then $\bar{f}$ is analytic for $-\frac{1}{2}<\Im{z}<0$ and bounded for $-\frac{1}{2}\leq\Im{z}\leq0 $.
		
		Using first Hahn-Banach Theorem to obtain an extension, also denoted by $f_z$, to the whole Hilbert space in such a way that $\|f_z\|\leq \|H\|_p^\frac{1}{2}\|Q_1\|_{2q}$, we know by the Riesz's Representation Theorem, that there exists a $\Omega(z)\in \hilbert_\phi$ such that $f_z(\cdot)=\ip{\Omega(z)}{\cdot}_\phi$. Since $\nanalytic\Phi$ is dense, $\Omega(z)$ is unique. 
		
		So far we have that $Q_1\Phi\in \Dom{\left(\Delta_\Phi^{-\iu \bar{z}}\right)^\ast}=\Dom{\Delta_\Phi^{\iu z}}$ and $\bar{f}(z)=\ip{A\Phi}{\Delta_\Phi^{\iu z}Q_1\Phi}_\phi$ is analytic on $\left \{z\in\mathbb{C} \ \middle| \ -\frac{1}{2}< \Im{z}< 0\right\}$ and continuous on its closure, for every ${A\in \nanalytic}$.
		
		Since $\overline{\nanalytic\Phi}^{\|\cdot\|}=\overline{\nalgebra\Phi}^{\|\cdot\|}=\hilbert_\phi$, the vector-valued function $A(z)\Phi\doteq\Delta_\Phi^{\iu z}Q_1\Phi$ is weakly analytic, hence, strongly analytic on $\left \{z\in\mathbb{C} \ \middle| \ -\frac{1}{2}< \Im{z}< 0\right\}$ and
		$$\|A(z)\Phi\|\leq \|H\|_p^\frac{1}{2}\|Q_1\|_{2q} \quad \forall z\in \left \{z\in\mathbb{C} \ \middle| \ -\frac{1}{2}\leq \Im{z}\leq 0\right\}.$$
		
		Suppose now the hypothesis hold for $n\in \mathbb{N}$. We will use the same ideas: we can define the sequence $\displaystyle  Q^{k_{i}}_i=U_i \int_0^{k_i} \lambda dE^{|Q_{i}|}_\lambda$, where $Q_{i}=U_i|Q_i|$ is the polar decomposition of $Q_{i}$ for ever $i\leq i\leq n+1$. Define
		
		$$
		f^{(z_1,\ldots, z_{n+1})}_{k_1,\ldots,k_{n+1}}(A\Phi) =\ip{Q_{n+1}\Delta_\Phi^{\iu z_n}Q_n\ldots\Delta_\Phi^{\iu z_1}Q_1\Phi}{\Delta_\Phi^{-\iu \bar{z}_{n+1}}A\Phi}_\phi.
		$$
		
		For now, we will omit the superscript index on the operators to not overload the notation.
		
		Since $\bar{f}(z_1,\ldots,z_n)\doteq\overline{f^{(z_1,\ldots, z_{n+1})}_{k_1,\ldots,k_{n+1}}(A\Phi)}$ is an analytic function, it attains its maximum at an extremal point of $S$ (see \cite{Araki73} Corollary 2.2). Denoting $z_j=x_j+\iu y_j$ and using now the Tomita-Takesaki Theorem in the similar calculation we made in equation \eqref{calculationX17.1}, we get
		
		\noindent $(i)$ for the extremal points with $\Im{z_j}=0$ for all $1\leq j\leq n+1$ we get
		\begin{equation}
		\label{calculationX18}
		\begin{aligned}
		&\left|\bar{f}(x_1,\ldots, x_{n+1})\right|\leq\\
		&\left\|A\Phi\right\| \ip{Q_{n+1}\tau^\phi_{x_n}(Q_n)\ldots\tau^\phi_{x_n+\cdots+x_1}(Q_1)\Phi}{Q_{n+1}\tau^\phi_{x_n}(Q_n)\ldots\tau^\phi_{x_n+\cdots+x_1}(Q_1)\Phi}^\frac{1}{2}\\
		& \leq\left\|A\Phi\right\|\|H\|_p^\frac{1}{2} \prod_{j=1}^{n+1}\|Q_j\|_{4(n+1)q},  \quad \forall (x_1,\ldots,x_{n+1})\in \mathbb{R}^{n+1}. \\
		\end{aligned}
		\end{equation}
		
		\noindent$(ii)$ now for $\Im{z_j}=0$ for all $i\neq l$ and $\Im{z_l}=-\frac{1}{2}$ where $l\neq n+1$
		\begin{equation}
		\label{calculationX19}
		\begin{aligned}
		&\left|\bar{f}\left(x_1,\ldots, x_l-\frac{1}{2}\iu ,\ldots, x_{n+1}\right)\right|\leq\left\|A\Phi\right\|\times\\ &\left\|\tau^\phi_{x_{n+1}}(Q_{n+1})\ldots \tau^\phi_{x_{n+1}+\cdots+x_{l+1}}(Q_{l+1})J_\Phi\tau^\phi_{x_{n+1}+\cdots+x_1}(Q_1^\ast)\ldots\tau^\phi_{x_{n+1}+\cdots+x_l}(Q_l^\ast)\Phi\right\|\\
		&\leq\left\|A\Phi\right\|\|H\|_p^\frac{1}{2}\left(\prod_{j=1}^{l}\|Q_j\|_{4lq}\right)\left(\prod_{j=l+1}^{n+1}\|Q_j\|_{4(n+1-l)q}\right) \ \forall (x_1,\ldots,x_{n+1})\in \mathbb{R}^{n+1}.\\
		\end{aligned}
		\end{equation}
		
		\noindent$(iii)$ finally, for $\Im{z_j}=0$ for all $i\neq n+1$ and $\Im{z_{n+1}}=-\frac{1}{2},$
		\begin{equation}\begin{aligned}
		\label{calculationX20}
		\left|\bar{f}\left(x_1,\ldots, x_l,\ldots,x_n, x_{n+1}-\frac{1}{2}\iu \right)\right|\\
		&\hspace{-3.5cm}=\left|\ip{Q_{n+1}\Delta_\Phi^{\iu x_n}Q_n\ldots\Delta_\Phi^{\iu x_1}Q_1\Phi}{\Delta_\Phi^{-\iu x_{n+1}}\Delta_\Phi^{\frac{1}{2}}A\Phi}_\phi\right|\\
		&\hspace{-3.5cm}=\left|\ip{\tau^\phi_{x_{n+1}}(Q_{n+1})\ldots\tau^\phi_{x_{n+1}+\cdots+x_1}(Q_1)\Phi}{J_\Phi A^\ast\Phi}_\phi\right|\\
		&\hspace{-3.5cm}\leq\left\|A\Phi\right\|\|H\|_p^\frac{1}{2} \prod_{j=1}^{n+1}\|Q_j\|_{4(n+1)q}, \forall (x_1,\ldots,x_{n+1})\in \mathbb{R}^{n+1},\\
		\end{aligned}\end{equation}
		where the last line follows by the same calculation done in equation \eqref{calculationX18}.
		
		The last step is to remember that we omitted the superscripts and notice that
		$$\begin{aligned}
		\lim_{k_{n+1}\to \infty}\ldots\lim_{k_1\to \infty} f^{(z_1,\ldots, z_{n+1})}_{k_1,\ldots,k_{n+1}}(A\Phi)& =\ip{Q_{n+1}^{k_{n+1}}\Delta_\Phi^{\iu z_n}Q_n\ldots\Delta_\Phi^{\iu z_1}Q_1^{k_1}\Phi}{\Delta_\Phi^{-\iu \bar{z}_{n+1}}A\Phi}_\phi\\
		&=\ip{Q_{n+1}\Delta_\Phi^{\iu z_n}Q_n\ldots\Delta_\Phi^{\iu z_1}Q_1\Phi}{\Delta_\Phi^{-\iu \bar{z}_{n+1}}A\Phi}_\phi,\\
		\end{aligned}$$
		so $\bar{f}(z_1,\ldots, z_{n+1})=\overline{\ip{Q_{n+1}\Delta_\Phi^{\iu z_n}Q_n\ldots\Delta_\Phi^{\iu z_1}Q_1\Phi}{\Delta_\Phi^{-\iu \bar{z}_{n+1}}A\Phi}_\phi}$ is the limit of a sequence of analytic functions uniformly bounded, thus, analytic on $S^{n+1}_\frac{1}{2}$  and bounded on its closure, as desired.
		
		As we saw in equation \eqref{calculationX20} the term $\Delta_\Phi^{z_{n+1}}$ does not interfere with the conclusion for $-\frac{1}{2}<\Im{z_{n+1}}<0$ and, by the very same argument used above to obtain a continuous linear extension of $\bar{f}(z_1,\ldots, z_{n+1})$, it follows that $Q_n\Delta_\Phi^{\iu z_n}Q_{n-1}\ldots \Delta_\Phi^{\iu z_1}Q_1\Phi \in \Dom{\Delta_\Phi^{\iu z}}$ for $-\frac{1}{2}<\Im{z}<0$.
		
	\end{proof}
	\begin{remark}
		In contrast to what we did in equation \eqref{calculationX18}, for $\Im{z_i}=0$ for all $1\leq i\leq n+1$, it holds that
		\begin{equation}
		\begin{aligned}
		\left|\bar{f}_k(z_1,\ldots, z_{n+1})\right|&=\lim_{k\to\infty}\left|\bar{f}_k(z_1,\ldots, z_{n+1})(t)\right|\\
		&=\lim_{k\to\infty}\left|\ip{Q_{n+1,k}\Delta_\Phi^{\iu x_n}Q_n\ldots\Delta_\Phi^{\iu x_1}Q_1\Phi}{\Delta_\Phi^{-\iu x_{n+1}}A\Phi}_\phi\right|\\
		&\leq \left\|\Delta_\Phi^{-\iu t}A\Phi\right\| \lim_{k\to\infty} \left\|UQ_{n+1,k}\Delta_\Phi^{\iu x_n}Q_n\ldots\Delta_\Phi^{\iu x_1}Q_1\Phi\right\|\\
		&\leq \left\|\Delta_\Phi^{-\iu t}A\Phi\right\| \lim_{k\to\infty} \left\|UQ_{n+1,k}\tau^\phi_{x_n}(Q_n)\ldots\tau^\phi_{x_1+\cdots+x_n}(Q_1)\Phi\right\|\\
		& \leq\left\|A\Phi\right\|\lim_{k\to\infty}\phi\left(UQ_{n+1,k}\tau^\phi_{x_n}(Q_n)\ldots\tau^\phi_{x_1+\cdots+x_n}(Q_1)\right)^\frac{1}{2}\\
		&\leq \left\|A\Phi\right\|\lim_{k\to\infty}\tau\left(H^\frac{1}{2} \tau^\phi_{x_1+\cdots+x_n}(Q_1^\ast)\ldots \tau^\phi_{x_n}(Q_n^\ast)Q_{n+1,k} U^\ast \right. \\ 
		&\hspace{2.8cm} \left. UQ_{n+1,k}\tau^\phi_{x_n}(Q_n)\ldots\tau^\phi_{x_1+\cdots+x_n}(Q_1) H^\frac{1}{2}\right)^\frac{1}{2}\\
		& \leq\left\|A\Phi\right\|\|H\|_p^\frac{1}{2} \lim_{k\to\infty}\left(\tau\left(|Q_{n+1,k}|^{2nq}\right)^\frac{1}{nq}\prod_{j=1}^{n}\tau\left(|Q_i|^{2nq}\right)^{\frac{1}{nq}}\right)^\frac{1}{2}\\
		& \leq\left\|A\Phi\right\|\|H\|_p^\frac{1}{2} \prod_{j=1}^{n+1}\|Q_i\|_{2nq} \quad \forall t\in \mathbb{R}.\\
		\end{aligned}
		\end{equation}
	\end{remark}
	
	\begin{corollary}
		\label{CR1}
		Let $\nalgebra\subset B(\hilbert)$ be a von Neumann algebra, $\tau$ a normal faithful semifinite trace on $\nalgebra$ and $\phi(\cdot)=\ip{\Phi}{\cdot\Phi}$ a state on $\nalgebra$. Let also $\frac{1}{p}+\frac{1}{q}=1$, $\lambda\in\overline{\mathbb{R}}_+$, $Q\in \ex^\tau_{4q,\lambda}$ and suppose $\phi=\tau_H$ is $\|\cdot\|_p$-continuous. Then, if $\Delta_\Phi^{\iu z_{j-1}}Q\ldots \Delta_\Phi^{\iu z_1}Q\Phi \in \Dom{Q}$ for every $-\frac{1}{2}\leq \Im{z_j} \leq 0$ and for every $j\in\mathbb{N}$, $$\Phi(Q)\doteq\sum_{n=0}^\infty \int_{S^n_\frac{1}{2}}dt_1\ldots dt_{n} \Delta_\Psi^{t_n}Q\Delta_\Psi^{t_{n-1}}Q\ldots \Delta_\Psi^{t_1}Q\Phi,$$ is absolutely and uniformly convergent.
	\end{corollary}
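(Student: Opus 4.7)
The plan is to apply Theorem \ref{TR1} uniformly to each summand and to control the resulting series of bounds via a Cauchy product argument.

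First I would identify each integrand with a boundary value of the vector-valued analytic function supplied by Theorem \ref{TR1}: for each $n$ and $(t_1,\ldots,t_n)\in S^n_{1/2}$, take $Q_1=\cdots=Q_n=Q$ and set $z_j=\iu t_j$, so that $\Im z_j=t_j\in(-1/2,0)$ and $(z_1,\ldots,z_n)\in\overline{T(S^n_{1/2})}$. After matching sign conventions, the integrand becomes a boundary value of $A^n(z_1,\ldots,z_n)\Phi$, and Theorem \ref{TR1} yields the pointwise estimate
\[
\bigl\|\Delta_\Psi^{t_n}Q\,\Delta_\Psi^{t_{n-1}}Q\cdots\Delta_\Psi^{t_1}Q\,\Phi\bigr\|\le \|H\|_p^{1/2}\max_{0\le l\le n-1}\|Q\|_{4lq}^{l}\,\|Q\|_{4(n-l)q}^{n-l},
\]
uniformly on $\overline{S^n_{1/2}}$.

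Next I would integrate this uniform bound and sum. Using $\|Q\|_{4kq}^{k}=\bigl\||Q|^k\bigr\|_{4q}$ (with the convention $\bigl\||Q|^0\bigr\|_{4q}:=1$) and the Lebesgue volume $\mathrm{vol}(S^n_{1/2})=(1/2)^n/n!$, the $n$-th summand of $\Phi(Q)$ has norm at most
\[
\frac{(1/2)^n}{n!}\,\|H\|_p^{1/2}\max_{0\le l\le n-1}\bigl\||Q|^l\bigr\|_{4q}\bigl\||Q|^{n-l}\bigr\|_{4q}.
\]
Replacing the maximum by the sum over $l$ and using $\tfrac{1}{n!}\le\tfrac{1}{l!(n-l)!}$ produces a Cauchy product,
\[
\|H\|_p^{1/2}\sum_{n=0}^{\infty}\sum_{l=0}^{n}\frac{(1/2)^l\bigl\||Q|^l\bigr\|_{4q}}{l!}\cdot\frac{(1/2)^{n-l}\bigl\||Q|^{n-l}\bigr\|_{4q}}{(n-l)!} = \|H\|_p^{1/2}\left(\sum_{k=0}^{\infty}\frac{(1/2)^k\bigl\||Q|^k\bigr\|_{4q}}{k!}\right)^{2},
\]
which is finite by the hypothesis $Q\in\ex^\tau_{4q,\lambda}$ in the relevant regime (in particular when $\lambda\ge 1/2$, and certainly for $\lambda=\infty$). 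Since the pointwise estimate is uniform in $(t_1,\ldots,t_n)$, this establishes both absolute and uniform convergence of the series defining $\Phi(Q)$.

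The principal obstacle is the very first step: one must correctly align the sign convention linking the exponents $t_j<0$ of $\Delta_\Psi^{t_j}$ with the strip $\Im z_j\in[-1/2,0]$ of Theorem \ref{TR1} (via $z_j=\iu t_j$ and, if needed, a substitution $t_j\mapsto -t_j$ to reconcile the expression above with Araki's formula for the relative-Hamiltonian vector), so that the desired integrand is genuinely a boundary value of $A^n(z_1,\ldots,z_n)\Phi$ on $\overline{T(S^n_{1/2})}$. Once that identification is in place, the uniform bound from Theorem \ref{TR1} together with the Cauchy product estimate above completes the argument in routine fashion.
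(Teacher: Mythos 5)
Your argument is correct and shares the skeleton of the paper's proof --- apply Theorem \ref{TR1} with $Q_1=\cdots=Q_n=Q$ to each integrand, integrate the resulting uniform bound over the simplex of volume $(1/2)^n/n!$, and sum using the exponentiability hypothesis --- but the way you tame the factor $\max_{0\le l\le n-1}\|Q\|_{4lq}^{l}\|Q\|_{4(n-l)q}^{n-l}$ is genuinely different. The paper first proves the interpolation inequality \eqref{Riesz-Thorin} (a Maximum Modulus argument applied to $z\mapsto\tau\bigl(Q_m^{4zq}\bigr)^{\frac{1}{4q}}\tau\bigl(Q_m^{4(n-z)q}\bigr)^{\frac{1}{4q}}$), which collapses the maximum to $\max\bigl\{\|Q\|_{4q}\|Q\|_{4(n-1)q}^{n-1},\,\|Q\|_{4nq}^{n}\bigr\}$, and then sums the two resulting series separately, absorbing the first through the extra factor $\|Q\|_{4q}/n$; you instead bound the maximum by the sum over $l$, use $l!\,(n-l)!\le n!$ and a Cauchy product, arriving at the explicit majorant $\|H\|_p^{\frac{1}{2}}\bigl(\sum_{k}(1/2)^k\||Q|^k\|_{4q}/k!\bigr)^{2}$. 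Your route is more elementary and yields a closed-form bound; the paper's route has the side benefit that the inequality \eqref{Riesz-Thorin} is reused later, in the proof of Proposition \ref{convergence_regularity}. Two remarks: your restriction to $\lambda\ge\frac{1}{2}$ is not a defect relative to the paper, since the published proof silently replaces the integral over $S^n_{\frac{1}{2}}$ by iterated integrals over $[0,t]$ with $t<\lambda$, so it likewise only controls regions of ``size'' less than $\lambda$, and the case of interest is $\lambda=\frac{1}{2}$; and the sign-convention mismatch you flag between the real exponents $t_j<0$ of $\Delta_\Psi^{t_j}$ and the tube $T\bigl(S^n_{\frac{1}{2}}\bigr)$ of Theorem \ref{TR1} is present in the paper's own proof, which makes the same identification without comment, so your explicit acknowledgement of it is if anything more careful.
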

	\begin{proof}
		By Theorem \ref{TR1}, $\Delta_\Psi^{\iu z_n}Q\Delta_\Psi^{\iu z_{n-1}}Q\ldots \Delta_\Psi^{\iu z_1}Q A\Phi$ is well defined and $$\begin{aligned}
		\big\|\Delta_\Psi^{\iu z_n}Q\Delta_\Psi^{\iu z_{n-1}}&Q\ldots \Delta_\Psi^{\iu z_1}QA\Phi\big\|\\
		&\leq \left\|A\Phi\right\|\|H\|_p^\frac{1}{2}\max_{0\leq l\leq n}\left\{\underbrace{\left(\prod_{j=1}^{l}\|Q_j\|_{4lq}\right)}_{=1 \textrm{ if } l=0}\left(\prod_{j=l+1}^{n}\|Q_j\|_{4(n-l)q}\right)\right\}\\
		&=  \left\|A\Phi\right\|\|H\|_p^\frac{1}{2}\max_{0\leq l\leq n}\left\{\underbrace{\|Q\|_{4lq}^{l}}_{=1 \textrm{ if } l=0}\|Q\|_{4(n-l)q}^{n-l}\right\}\\
		&=  \left\|A\Phi\right\|\|H\|_p^\frac{1}{2}\max_{0\leq l\leq \lfloor \frac{n}{2}\rfloor}\left\{\underbrace{\|Q\|_{4lq}^{l-1}}_{=1 \textrm{ if } l=0}\|Q\|_{4(n-l)q}^{n-l}\right\}\\
		&= \left\|A\Phi\right\|\|H\|_p^\frac{1}{2}\max_{0\leq l\leq \lfloor \frac{n}{2}\rfloor}\left\{\underbrace{\tau\left(|Q|^{4lq}\right)^\frac{l}{4lq}}_{=1 \textrm{ if } l=0}\tau\left(|Q|^{4(n-l)q}\right)^\frac{n-l}{4(n-l)q}\right\}.
		\end{aligned}$$
		
		Notice that, for $\displaystyle Q_m=\int_0^m \lambda dE_\lambda^{|Q|}$, we have that $$f_m(z)\doteq\tau\left(Q_m^{4 z q}\right)^\frac{1}{4q}\tau\left(Q_m^{4(n-z)q}\right)^\frac{1}{4q}$$
		is an analytic function in the region $\{z\in \mathbb{C} \ | \ 1\leq\Re{z}\leq n-1\}$, since this region is a strip that does not intercept the negative real line. Hence its modulus in the region mentioned is assumed when $\Re{z}=1$ or $\Re{z}=n-1$ by the Maximum Modulus Principle. In these cases, we have
		$$	|f_m(1+\iu t)|= |f_m(n-1+\iu t)|=\|Q_m\|_{4q}\|Q_m\|_{4(n-1)q}^{n-1}.$$
		
		As usual, taking the limit $m\to \infty$ we obtain, for all $z\in\{w\in\mathbb{C} \ | \ 1\leq \Re{w} \leq n-1 \}$,
		\begin{equation}
		\label{Riesz-Thorin}
		\tau\left(Q^{4 z q}\right)^\frac{1}{4q}\tau\left(Q^{4(n-z)q}\right)^\frac{1}{4q}\leq \|Q\|_{4q}\|Q\|_{4(n-1)q}^{n-1}.
		\end{equation}
		
		Finally, the series
		$$\sum_{n=0}^\infty \int_{0}^{t} dt_1\ldots \int_{0}^{t_n}dt_{n} \Delta_\Psi^{t_n}Q\Delta_\Psi^{t_{n-1}}Q\ldots \Delta_\Psi^{t_1}QA\Phi$$ is $\|\cdot\|$-convergent. In fact, considering first the case $Q\in \ex^\tau_{4q,\lambda}$ with $0<\lambda<\infty$, there exists $N\in\mathbb{N}$ such that, for all $ k,l >N$, $\displaystyle \lambda \sum_{n=k}^l\frac{\lambda^{n-1}\|Q\|_{4(n-1)q}^{n-1}}{(n-1)!}<\frac{\epsilon}{2}$ and $\displaystyle \frac{\|Q\|_{4q}}{N}< 1$, thus
		
		$$\begin{aligned}
		\Bigg \|\sum_{n=k}^l \int_{0}^{t} dt_1\ldots &\int_{0}^{t_{n-1}}dt_{n} \Delta_\Psi^{ t_n}Q\Delta_\Psi^{t_{n-1}}Q\ldots \Delta_\Psi^{ t_1}QA\Phi\Bigg\|\\
		&\hspace{0.8cm}\leq \sum_{n=k}^l \int_{0}^{t} dt_1\ldots \int_{0}^{t_{n-1}}dt_{n} \left \|\Delta_\Psi^{t_n}Q\Delta_\Psi^{ t_{n-1}}Q\ldots \Delta_\Psi^{ t_1}QA\Phi\right\|\\
		&\hspace{0.8cm}\leq \sum_{n=k}^l\frac{t^n\max\left\{\|Q\|_{4q}\|Q\|_{4(n-1)q}^{n-1},\|Q\|_{4nq}^{n}\right\}}{n!}\\
		&\hspace{0.8cm}\leq \sum_{n=k}^l\frac{t^n\left(\|Q\|_{4q}\|Q\|_{4(n-1)q}^{n-1}+\|Q\|_{4nq}^{n}\right)}{n!}\\
		&\hspace{0.8cm}\leq  \sum_{n=k}^l \frac{\|Q\|_{4q}}{n}\frac{t^n\|Q\|_{4(n-1)q}^{n-1}}{(n-1)!}+\frac{t^n\|Q\|_{4nq}^{n}}{n!}\\
		&\hspace{0.8cm}<\epsilon.
		\end{aligned}$$
		
		For the case $\lambda=\infty$, just remember that $\displaystyle \ex^\tau_{p,\infty}=\bigcap_{\lambda\in\overline{ \mathbb{R}}_+}\ex^\tau_{p,\lambda}$.
	\end{proof}
	
	\begin{proposition}
		\label{convergence_regularity}
		Let $(Q_n)_{n\in\mathbb{N}} \subset\ex_{4q,\lambda}^\tau$ be a sequence such that $Q_n\xrightarrow{\|\cdot\|_{4mq}}Q\in\ex_{4q,\lambda}^\tau$, $\|Q_n\|_{4mq}\leq \|Q\|_{4mq}$ and $\|Q-Q_n\|_{4mq} \leq M$ for all $m\in\mathbb{N}$. In addition, suppose that, for each fixed $n\in\mathbb{N}$,
		$\Phi \in\Dom{Q_1}$ and $\Delta_\Phi^{\iu z_{j-1}}Q_{j-1}\ldots \Delta_\Phi^{\iu z_1}Q_1\Phi \in \Dom{Q_j}$ for every $-\frac{1}{2}\leq \Im{z_j} \leq 0$ and for every $2\leq j\leq n$, where $Q_j$ can be either $Q_n$ or $Q$. Then
		$$Exp_{l,r}\left(\int_0^t;Q_n(s) ds\right)\Phi\xrightarrow{n\to\infty} Exp_{l,r}\left(\int_0^t;Q(s) ds\right)\Phi \ , \quad t<\lambda.$$
	\end{proposition}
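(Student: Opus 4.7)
The plan is to prove convergence term-by-term in the Dyson expansion of the expansional and then apply a Weierstrass-type argument, leveraging a telescoping identity together with the norm bounds supplied by Theorem \ref{TR1}.

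First, I would expand both $Exp_l\left(\int_0^t;Q_n(s)\,ds\right)\Phi$ and $Exp_l\left(\int_0^t;Q(s)\,ds\right)\Phi$ according to Definition \ref{DysonSeries} as series of $k$-fold simplex integrals. Denoting the $k$-th term by $E_k[A]\Phi\doteq \int_0^t\!dt_1\cdots\int_0^{t_{k-1}}\!dt_k\,A(t_1)\cdots A(t_k)\Phi$, the standard telescoping identity
$$A_1\cdots A_k - B_1\cdots B_k = \sum_{j=1}^k A_1\cdots A_{j-1}(A_j-B_j)B_{j+1}\cdots B_k$$
decomposes $E_k[Q_n]\Phi - E_k[Q]\Phi$ into $k$ summands, each featuring exactly one difference factor $(Q_n-Q)(t_j)$ sandwiched between $j-1$ modular-flowed copies of $Q_n$ on the left and $k-j$ modular-flowed copies of $Q$ on the right.

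Next, I would estimate each telescoped integrand via Theorem \ref{TR1} (evaluated on the real axis, so the modular factors act unitarily). Since the inserted operator $Q_n-Q$ lies in $L_{4kq}(\nalgebra,\tau)$ with $\|Q_n-Q\|_{4kq}\leq M$, and since $\|Q_n\|_{4mq}\leq\|Q\|_{4mq}$ by hypothesis, Theorem \ref{TR1} applied to the operator list $(Q_n,\dots,Q_n,Q_n-Q,Q,\dots,Q)$ yields, after integration over the simplex of volume $t^k/k!$ and summation over $j\in\{1,\ldots,k\}$, the uniform majorant
$$\bigl\|E_k[Q_n]\Phi - E_k[Q]\Phi\bigr\| \leq \frac{k\,t^k}{k!}\,M\,\|H\|_p^{1/2}\,\max_{0\leq l\leq k-1}\|Q\|_{4lq}^{l}\,\|Q\|_{4(k-l)q}^{k-l-1},$$
independent of $n$. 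Just as in the proof of Corollary \ref{CR1}, the Riesz-Thorin-type bound \eqref{Riesz-Thorin} controls the maximum by a sum of two $\|Q\|$-products; absorbing the extra factor $k$ via $k/k!=1/(k-1)!$ shifts the summation index, leaving the majorant dominated by the exponentiable series $\sum_m t^m\||Q|^m\|_{4q}/m!$, which converges for $t<\lambda$ by the $(\tau,4q,\lambda)$-exponentiability of $Q$.

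Finally, I would establish termwise convergence: for each fixed $k$, replacing the crude bound $\|Q_n-Q\|_{4kq}\leq M$ with the actual convergent quantity $\|Q_n-Q\|_{4kq}\to 0$ in the same Theorem \ref{TR1}-type estimate gives pointwise convergence of each integrand to zero, so dominated convergence against the $M$-bound yields $\|E_k[Q_n]\Phi - E_k[Q]\Phi\|\to 0$. Combined with the uniform summable majorant, the Weierstrass $M$-test permits exchanging the limit with the series, giving the desired norm convergence; reversing the order of the operator products handles $Exp_r$ identically. The principal technical obstacle is preserving summability of the majorant after telescoping, since the extra factor $k$ introduced by summing the $k$ telescoped pieces threatens to destroy convergence of $\sum_k (t^k/k!)(\dots)$; this is precisely resolved by the Riesz-Thorin trick from Corollary \ref{CR1}, which allows the $k$ to be absorbed into the shifted factorial $1/(k-1)!$ without breaking the exponentiability estimate.
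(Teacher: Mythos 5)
Your proposal is correct and follows essentially the same route as the paper's proof: a telescoping decomposition with exactly one factor $(Q-Q_n)$ per summand, the Theorem \ref{TR1}-type bound combined with the hypotheses $\|Q_n\|_{4mq}\leq\|Q\|_{4mq}$ and $\|Q-Q_n\|_{4mq}\leq M$, the Riesz--Thorin estimate \eqref{Riesz-Thorin} to collapse the maximum, and absorption of the factor $k$ into $1/(k-1)!$ so that exponentiability of $Q$ for $t<\lambda$ keeps the majorant summable. The only (cosmetic) difference is the final limiting step, where you invoke dominated convergence of the series against the uniform majorant rather than the paper's explicit splitting of the sum into a finite part and a small tail; this is, if anything, slightly cleaner.
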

	\begin{proof}
		First, notice that $Q$ and $Q_n$, $n\in\mathbb{N}$, are densely defined closed operators. Furthermore, they have a common core due to the increasing hypothesis and the properties of $\tau$-dense subsets.
		
		Define $$A^m_j(z_1,\ldots, z_n)=\Delta_\Phi^{\iu z_m} Q \ldots\Delta_\Phi^{\iu z_{j-1}}Q\Delta_\Phi^{\iu z_{j}}(Q-Q_n)\Delta_\Phi^{\iu z_{j+1}}Q_n\ldots \Delta_\Phi^{\iu z_1}Q_n\Phi.$$
		
		Using a telescopic sum argument, we have, for $m>1$,
		$$\begin{aligned}
		&\left\|\Delta_\Phi^{\iu z_m} Q\ldots \Delta_\Phi^{\iu z_1}Q\Phi-\Delta_\Phi^{\iu z_n} Q_n\ldots \Delta_\Phi^{\iu z_1}Q_n\Phi\right\| &\\
		&\hspace{2.57cm}=\left\|\sum_{j=1}^m A^m_j(z_1,\ldots, z_n) \right\|\\
		&\hspace{2.57cm}\leq \sum_{j=1}^m \left\| A^m_j(z_1,\ldots, z_n) \right\|\\
		&\makebox[\textwidth]{$\displaystyle \hfill\leq \sum_{j=1}^m \|H\|_p^\frac{1}{2}\max_{0\leq l\leq m-1}\left\{\|Q\|_{4lq}^l\|Q-Q_n\|_{4(m-l)q}\|Q_n\|_{4(m-l)q}^{m-l-1}\right\}$}\\
		&\hspace{2.57cm}\leq m \, \|H\|_p^\frac{1}{2}\max_{0\leq l\leq m-1}\left\{\|Q\|_{4lq}^l\|Q\|_{4(m-l)q}^{m-l-1}\|Q-Q_n\|_{4(m-l)q}\right\}\\
		&\hspace{2.57cm}= m \, \|H\|_p^\frac{1}{2}\max_{0\leq l\leq m-1}\left\{\|Q\|_{4lq}^l\|Q\|_{4(m-l)q}^{m-l}\frac{\|Q-Q_n\|_{4(m-l)q}}{\|Q\|_{4(m-l)q}}\right\}.\\
		\end{aligned}$$
		
		Applying Equation \eqref{Riesz-Thorin} to the inequality above we get
		
		$$\begin{aligned}
		&\left\|\Delta_\Phi^{\iu z_m} Q\ldots \Delta_\Phi^{\iu z_1}Q\Phi-\Delta_\Phi^{\iu z_n} Q_n\ldots \Delta_\Phi^{\iu z_1}Q_n\Phi\right\| & \\
		&\makebox[\textwidth]{$\hfill\displaystyle \leq m \, \|H\|_p^\frac{1}{2}\|Q\|_{4q}\|Q\|_{4(m-1)q}^{m-1}\max_{0\leq l\leq m-1}\left\{\frac{\|Q-Q_n\|_{4(m-l)q}}{\|Q\|_{4(m-l)q}}\right\}.$}\\
		\end{aligned}$$
		Hence,
		\begin{equation}
		\label{calcx34}
		\begin{aligned}
		& \left\|Exp_{l,r}\left(\int_0^t;Q(s) ds\right)\Phi-Exp_{l,r}\left(\int_0^t;Q_n(s) ds\right)\Phi\right\| & \\
		&\hspace{0.3cm}\leq \|H\|_p^\frac{1}{2}\|Q-Q_n\|_{4q}+\\
		&\makebox[\textwidth]{$\displaystyle \hfill +\sum_{m=2}^{\infty}\frac{t^m}{m!} m \, \|H\|_p^\frac{1}{2}\|Q\|_{4q}\|Q\|_{4(m-1)q}^{m-1}\max_{0\leq l\leq m-1}\left\{\frac{\|Q-Q_n\|_{4(m-l)q}}{\|Q\|_{4(m-l)q}}\right\}$}\\
		&\hspace{0.3cm} = \|H\|_p^\frac{1}{2}\|Q-Q_n\|_{4q}+ \\
		& \makebox[\textwidth]{$\displaystyle \hfill + \|H\|_p^\frac{1}{2}\|Q\|_{4q} \, t \sum_{m=2}^{\infty}\frac{t^{m-1}}{(m-1)!} \|Q\|_{4(m-1)q}^{m-1}\max_{0\leq l\leq m-1}\left\{\frac{\|Q-Q_n\|_{4(m-l)q}}{\|Q\|_{4(m-l)q}}\right\}.$}\\
		\end{aligned}
		\end{equation}
		
		Finally, let $\epsilon>0$ be given. Since $Q\in \ex_{4q,\lambda}^\tau$, there exists $m_0\in\mathbb{N}$ such that, for all $m\leq m_0$, 
		$$\sum_{m=M}^{\infty}\frac{t^{m-1}}{(m-1)!} \|Q\|_{4(m-1)q}^{m-1}<\frac{\epsilon}{3M}.$$
		By hypothesis, there also exists $n_0\in\mathbb{N}$ such that 
		$$\frac{\|Q-Q_n\|_{4mq}}{\max\left\{\|Q\|_{4nq},1\right\}}<\epsilon \left[3 \lambda \|H\|_p^\frac{1}{2}  \sum_{m=2}^{\infty}\frac{t^{m-1}}{(m-1)!} \|Q\|_{4(m-1)q}^{m-1}\right]^{-1} \, , \quad \forall n\geq n_0. $$
		
		It follows from Equation \eqref{calcx34} that
		$$\left\|Exp_{l,r}\left(\int_0^t;Q(s) ds\right)\Phi-Exp_{l,r}\left(\int_0^t;Q_n(s) ds\right)\Phi\right\|<\epsilon \, , \quad \forall n\geq n_0.$$
	\end{proof}
	
	One of the consequences of Proposition \ref{convergence_regularity} is that the sequence of Araki's perturbations obtained by the upper cut in the spectral decomposition of the modulus of a $\ex_{4q,\lambda}^\tau$-perturbation converges to the perturbation described in Corollary \ref{CR1}.
	
	Moreover, Proposition \ref{convergence_regularity}  gives us an interpretation for the parameter $\lambda$ in $\ex^\tau_{p,\lambda}$.
	
	It is important to mention the notorious similarity between our approach and Sakai's geometric vectors (which were mentioned right after equation \eqref{expansional_exponential}). In the direction of equation \eqref{expansional_exponential} we can obtain the following result:
	
	\begin{corollary}
		Let $\nalgebra\subset B(\hilbert)$ be a von Neumann algebra, $\tau$ a normal faithful semifinite trace on $\nalgebra$ and $\phi(\cdot)=\ip{\Phi}{\cdot\Phi}$ a state on $\nalgebra$. Let also $\frac{1}{p}+\frac{1}{q}=1$, $\lambda\in\overline{\mathbb{R}}_+$, $Q=Q^\ast\in \ex^\tau_{4q,\lambda}$ and suppose $\phi$ is $\|\cdot\|_p$-continuous. In addition, suppose that $\Delta_\Phi^{\iu z_{j-1}}Q\ldots \Delta_\Phi^{\iu z_1}Q\Phi \in \Dom{Q}$ for every $-\frac{1}{2}\leq \Im{z_j} \leq 0$ and for every $j\in\mathbb{N}$. Then, for every $z\in \mathbb{C}$ with $0<\Re{z}<\frac{1}{2}$,  $\Phi\in \Dom{e^{z(H+Q)}}$ and
		$$e^{z(H_\Phi+Q)}\Phi=\sum_{n=0}^\infty \left(\frac{z}{2}\right)^n\int_{S^n_\frac{1}{2}}dt_1\ldots dt_{n} \Delta_\Psi^{t_n} Q \Delta_\Psi^{t_{n-1}}Q\ldots \Delta_\Psi^{t_1}Q\Phi,$$
		where $H_\Phi=\log\Delta_\Phi$.
	\end{corollary}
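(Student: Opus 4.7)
The strategy is to approximate the unbounded perturbation $Q$ by bounded self-adjoint truncations, apply the classical Araki--Sakai expansional identity to each approximation, and pass to the limit in norm. Convergence on the right-hand side is provided by Proposition \ref{convergence_regularity} combined with the dominated-summability bounds from Corollary \ref{CR1}; the left-hand side is then taken to be the resulting norm limit, which establishes $\Phi \in \Dom{e^{z(H_\Phi + Q)}}$ and yields the claimed identity.

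Concretely, the spectral decomposition $Q = \int \lambda\, dE^Q_\lambda$ produces the bounded self-adjoint truncations $Q_k = \int_{[-k,k]} \lambda\, dE^Q_\lambda \in \nalgebra$. By the hypothesis $Q \in \ex^\tau_{4q,\lambda}$ and dominated convergence applied to $\tau(|Q|^{4mq} E^{|Q|}_{(k,\infty)})$, we have $\|Q_k\|_{4mq} \leq \|Q\|_{4mq}$ and $\|Q - Q_k\|_{4mq} \to 0$ for every $m \in \mathbb{N}$, placing us in the setting of Proposition \ref{convergence_regularity}. For each bounded $Q_k$, the operator $H_\Phi + Q_k$ is self-adjoint on $\Dom{H_\Phi}$, and the classical Araki--Sakai expansional identity (equation \eqref{expansional_exponential}, together with the complex-time extension from \cite[Sections 1.15--1.17]{sakai91}), combined with $\Delta_\Phi \Phi = \Phi$ (so that $e^{-zH_\Phi}\Phi = \Phi$) and a change of variables that maps the standard Araki simplex onto $S^n_{1/2}$, yields for $0 < \Re{z} < \frac{1}{2}$
\begin{equation*}
e^{z(H_\Phi + Q_k)} \Phi = \sum_{n=0}^\infty \left(\frac{z}{2}\right)^n \int_{S^n_{1/2}} dt_1 \cdots dt_n\, \Delta_\Phi^{t_n} Q_k \Delta_\Phi^{t_{n-1}} Q_k \cdots \Delta_\Phi^{t_1} Q_k \Phi.
\end{equation*}
Proposition \ref{convergence_regularity} delivers term-by-term norm convergence of the integrands as $k \to \infty$, and the bound in Corollary \ref{CR1} supplies a $k$-uniform summable majorant in $n$, so dominated summability identifies the limit of the right-hand side with the series in the statement. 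Hence $\bigl(e^{z(H_\Phi + Q_k)} \Phi\bigr)_k$ is norm-Cauchy and its common limit is $e^{z(H_\Phi + Q)}\Phi$.

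The main obstacle is precisely this last step: since $Q$ need not be $H_\Phi$-bounded in any usual sense, the formal sum $H_\Phi + Q$ may fail to have an obvious essentially self-adjoint realization on a common core, so a Kato--Rellich or direct strong-resolvent-convergence argument is not available. The iterated-domain hypothesis of the Corollary, together with the absolute convergence of the Dyson series furnished by Corollary \ref{CR1}, is precisely what circumvents this difficulty: it forces the norm limit on the left-hand side to exist, so that the identity itself serves as the definition of $e^{z(H_\Phi + Q)}\Phi$ and establishes the domain inclusion. Independence of the limit from the choice of truncating sequence follows from Proposition \ref{convergence_regularity}, completing the argument.
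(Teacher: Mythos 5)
Your truncation strategy has a genuine gap at exactly the point you flag yourself: the identification of the limit with $e^{z(H_\Phi+Q)}\Phi$. The corollary asserts membership $\Phi\in \Dom{e^{z(H_\Phi+Q)}}$ for an operator $e^{z(H_\Phi+Q)}$ that must exist independently of the series; your proposal concedes that no Kato--Rellich or strong-resolvent-convergence argument is available for $H_\Phi+Q_k\to H_\Phi+Q$, and then declares that ``the identity itself serves as the definition'' of $e^{z(H_\Phi+Q)}\Phi$. That is circular: showing that $e^{z(H_\Phi+Q_k)}\Phi$ is norm-Cauchy (which your bounds do give, via the bounded-perturbation expansional formula and the $k$-uniform majorant from Corollary \ref{CR1}) does not show that the limit is the action of the functional calculus of $H_\Phi+Q$ on $\Phi$, nor even that a suitable self-adjoint realization of $H_\Phi+Q$ exists to which the exponential can be applied. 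This identification is the whole content of the statement, and it is missing. A secondary, smaller issue: Proposition \ref{convergence_regularity} as stated controls the expansionals at real times $t<\lambda$, so invoking it for complex $z$ in the strip requires extending it using the strip bounds of Theorem \ref{TR1}; this is plausible but not automatic as written.

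The paper closes precisely this gap by a different route, with no truncation of $Q$ at all: for purely imaginary $z=\iu t$ the right-hand side is recognized as the expansional $Exp_l\left(\int_0^t;\tau^\Phi_s(Q)\,ds\right)$ applied to $\Phi$, which gives the equality on the imaginary axis; Corollary \ref{CR1} gives absolute and uniform convergence, hence analyticity of the series, on $0<\Re{z}<\frac{1}{2}$; and then Araki's Proposition 4.12 of \cite{Araki73.2} is invoked, which is exactly the imported result converting analyticity of the Dyson/expansional series into the domain statement $\Phi\in\Dom{e^{z(H_\Phi+Q)}}$ together with the equality. If you want to salvage your truncation argument, you would need to supply the analogue of that step explicitly, e.g.\ prove that $H_\Phi+Q$ (or a canonical self-adjoint extension) exists and that $e^{z(H_\Phi+Q_k)}\Phi\to e^{z(H_\Phi+Q)}\Phi$, or else fall back on the analytic-continuation result as the paper does.
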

	\begin{proof}
		Notice that for $z=\iu t$, $t\in\mathbb{R}$, one identify the right-hand side with $Exp_l\left(\int_0^t;\tau^\Phi_t(Q) ds\right)$. Thus, for purely imaginary $z$, we have the equality.
		
		Now, since we know by Theorem \ref{CR1} that the left-hand side is analytic in the region $0< \Re{z}<\frac{1}{2}$, \cite[Proposition 4.12]{Araki73.2} guarantees the thesis.

	\end{proof}

	We can use all previous results to conclude with a general theorem. Notice that, for the KMS condition, the interesting case is the case $\lambda=\frac{1}{2}$ as one can see in the definition of the expansional, in \cite{Araki73.2}.
	
	\begin{theorem}
		Let $\nalgebra\subset B(\hilbert)$ be a von Neumann algebra, $\tau$ a normal faithful semifinite trace on $\nalgebra$, $(\nalgebra,\alpha)$ a $W^\ast$-dynamical system, $H_\Phi$ the Hamiltonian of $\alpha$  and $\phi(\cdot)=\ip{\Phi}{\cdot\Phi}$ a ($\tau$, $\beta$)-KMS state on $\nalgebra$. Let also $\frac{1}{p}+\frac{1}{q}=1$, $\lambda\in\overline{\mathbb{R}}_+$, $Q=Q^\ast\in \ex^\tau_{4q,\lambda}$ and suppose $\phi$ is $\|\cdot\|_p$-continuous. In addition, suppose that $\Delta_\Phi^{\iu z_{j-1}}Q\ldots \Delta_\Phi^{\iu z_1}Q\Phi \in \Dom{Q}$ for every $-\frac{1}{2}\leq \Im{z_j} \leq 0$ and for every $j\in\mathbb{N}$.
		Then, $\Phi\in \Dom{e^{-\frac{\beta}{2}(H_\Phi+Q)}}$ and $\phi^Q(A)=\ip{\Psi^Q}{A\Psi^Q}$ is as $(\tau^Q,\beta)$-KMS state for 	the perturbed dynamics defined by
		$\alpha_t^Q(A)=e^{\iu t(H_\omega+Q)}Ae^{-\iu t(H_\omega+Q)}$,  where $\Psi^Q=\frac{e^{-\frac{\beta}{2}(H_\omega+Q)}\Omega_\omega}{\|e^{-\frac{\beta}{2}(H_\omega+Q)}\Omega_\omega\|}$.
	\end{theorem}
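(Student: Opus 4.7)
The plan is to reduce the theorem to Araki's classical perturbation result by approximating $Q$ with bounded truncations, using the Dyson series control provided by Corollary \ref{CR1} and Proposition \ref{convergence_regularity} to pass to the limit. The starting point is the preceding corollary, which guarantees that $\Phi\in\Dom{e^{z(H_\Phi+Q)}}$ on the open strip $0<\Re z<\frac{1}{2}$ (where $H_\Phi$ is rescaled by $\beta$ to match the Hamiltonian of $\alpha$ via the modular identity $\Delta_\Phi^{\iu t}=e^{-\iu\beta t H_\Phi}$ that is forced by the $(\alpha,\beta)$-KMS hypothesis). To reach the boundary value $z=\frac{1}{2}$, hence $\Phi\in\Dom{e^{-\frac{\beta}{2}(H_\Phi+Q)}}$, I would invoke the uniform bound in Theorem \ref{TR1} together with the closedness of $e^{z(H_\Phi+Q)}$: the Dyson series on the closed strip is absolutely convergent in norm, so the boundary limit exists and lies in the closure of the graph.

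The approximation step uses the spectral truncation $Q_m=U\int_0^m\lambda\,dE^{|Q|}_\lambda$ from the polar decomposition $Q=U|Q|$; each $Q_m$ is self-adjoint, bounded, and satisfies the hypotheses of Proposition \ref{convergence_regularity}. For bounded $Q_m$, the classical Araki result (cited after Proposition \ref{exponentials}) gives that
\[
\Psi^{Q_m}=\frac{e^{-\frac{\beta}{2}(H_\Phi+Q_m)}\Phi}{\bigl\|e^{-\frac{\beta}{2}(H_\Phi+Q_m)}\Phi\bigr\|}
\]
represents a $(\alpha^{Q_m},\beta)$-KMS state $\phi^{Q_m}$, with cocycle intertwiner $u_t^{Q_m}=Exp_r\bigl(\int_0^t;-\iu\tau^\Phi_s(Q_m)\,ds\bigr)$ relating $\alpha_t^{Q_m}$ and $\alpha_t$. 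Proposition \ref{convergence_regularity} then delivers $\Psi^{Q_m}\to\Psi^Q$ in norm and, applied to the expansional defining $u_t^{Q_m}$, also $u_t^{Q_m}\to u_t^Q$ strongly on $\Phi$, hence strong convergence $\alpha_t^{Q_m}(A)\to\alpha_t^Q(A)$ for each $A\in\nalgebra$, uniformly on compact subsets of $\mathbb{R}$.

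Finally, the KMS condition for $\phi^Q$ is obtained by passage to the limit in the defining analytic functions $F^m_{A,B}(z)=\phi^{Q_m}(A\,\alpha^{Q_m}_z(B))$: these are bounded holomorphic on $\strip{\beta}$ with the prescribed boundary identities \eqref{eqKMS3}, and the uniform-on-compacta convergence on $\mathbb{R}$ together with the Phragm\'en--Lindel\"of-type bound extracted from Theorem \ref{TR1} lets one apply Vitali's theorem to get a holomorphic limit $F_{A,B}$ on $\strip{\beta}$ satisfying the KMS boundary conditions for the pair $(\phi^Q,\alpha^Q)$. The main obstacle I anticipate is the joint convergence of the perturbed dynamics on the full strip: while real-$t$ convergence follows cleanly from Proposition \ref{convergence_regularity}, controlling $\alpha^{Q_m}_z(B)$ for complex $z\in\strip{\beta}$ acting on $\Psi^{Q_m}$ requires reusing the domain hypothesis $\Delta_\Phi^{\iu z_{j-1}}Q\cdots\Delta_\Phi^{\iu z_1}Q\Phi\in\Dom{Q}$ through the expansional identity \eqref{expansional_exponential} and verifying that the bounds from Theorem \ref{TR1} pass uniformly through the truncation, so that the holomorphic-extension argument is genuinely applicable.
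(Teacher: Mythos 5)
The paper offers no proof of this theorem at all --- it is stated as a summary (``We can use all previous results to conclude with a general theorem'') immediately after Corollary \ref{CR1} and Proposition \ref{convergence_regularity}, and the only trace of an intended argument is a commented-out fragment in the preceding corollary that uses exactly your strategy: truncate via spectral projections, invoke Araki's bounded-perturbation theorem, and pass to the limit with Proposition \ref{convergence_regularity}. So your overall route is the one the paper evidently has in mind, and your handling of the boundary value $z=\tfrac{1}{2}$ via the uniform bound of Theorem \ref{TR1} on the closed tube plus closedness of $e^{z(H_\Phi+Q)}$ is a sensible way to fill a step the paper leaves implicit.

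There are, however, two genuine gaps in your sketch (which the paper shares). First, Proposition \ref{convergence_regularity} only controls the expansionals \emph{applied to the single vector} $\Phi$; it gives $Exp_{l,r}(\int_0^t;Q_m(s)\,ds)\Phi\to Exp_{l,r}(\int_0^t;Q(s)\,ds)\Phi$ and hence $\Psi^{Q_m}\to\Psi^Q$, but it does not give strong convergence of the cocycles $u_t^{Q_m}$ on $\hilbert$, and therefore your claimed strong convergence $\alpha_t^{Q_m}(A)\to\alpha_t^Q(A)$ does not follow from the results you cite. Since the KMS boundary identities involve $\phi^{Q_m}(A\,\alpha^{Q_m}_t(B))=\ip{\Psi^{Q_m}}{A\,u_t^{Q_m}\alpha_t(B)(u_t^{Q_m})^\ast\Psi^{Q_m}}$, you need convergence of $(u_t^{Q_m})^\ast$ on vectors of the form $\alpha_t(B)^\ast A^\ast\Psi^{Q_m}$, not just on $\Phi$; this requires either an operator-norm (or at least uniformly bounded strong) version of Proposition \ref{convergence_regularity}, which is not available in the paper. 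Second, the perturbed dynamics $\alpha^Q_t(A)=e^{\iu t(H_\Phi+Q)}Ae^{-\iu t(H_\Phi+Q)}$ presupposes that $H_\Phi+Q$ is (essentially) self-adjoint on a suitable domain; for an unbounded $Q\in\ex^\tau_{4q,\lambda}$ this is not automatic and is nowhere established, so before any limiting argument one must say what $e^{\iu t(H_\Phi+Q)}$ means --- e.g.\ define it as the strong limit of $e^{\iu t(H_\Phi+Q_m)}$ and verify this limit exists and is a one-parameter unitary group, which is again an operator-level convergence statement beyond Proposition \ref{convergence_regularity}. Until these two points are supplied, the reduction to Araki's bounded case does not close.
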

	
	We would add at this point that, in contrast to Araki's treatment in \cite{Araki73}, we do not believe our results prove any kind of stability of KMS states. This belief is based on the necessity to add a ``dual'' continuity property on the state, \ie, we need an additional $\|\cdot\|_p$-continuity hypothesis with index $p$ H\"older-conjugated with the one used to control the perturbation. Hence, all the stability we proved seems to be a consequence of that continuity. An important exception is the stability of the domain of the Modular Operator, which allows us to extend the multiple-time KMS condition to unbounded operators, proved in Theorems \ref{TR0} and \ref{TR1}.
	
	\if\typeiii1
	We can naturally extend the results presented hitherto to general von Neumann algebras using Haagerup's construction. In fact, Lemma \ref{identificationofweights} states that we can consider a state $\phi$ on $\nalgebra$ as a state $\tilde{\phi}$ on $\nalgebra_\alpha\rtimes \mathbb{R}$ satisfying
	$$\psi \circ \theta_t=\psi \quad \forall t\in \mathbb{R}. $$. Since we identify $\nalgebra$ with $\pi_\alpha(\nalgebra)\subset\nalgebra_\alpha\rtimes \mathbb{R}$, the spectral decomposition guarantees that any operator affiliated with $\nalgebra$ can be seen as an operator affiliated with $\nalgebra_\alpha\rtimes \mathbb{R}$.
	
	Now, there exists a trace $\tau$ on $\nalgebra_\alpha\rtimes \mathbb{R}$
	
	\fi
	
	\if\typeiii0
	
	Although all the applications of Noncommutative $L_p$-Spaces to Physics know by the author are restricted to semifinite von Neumann algebras, see, for example \cite{germinet2005}, \cite{Bru2017}, \cite{germinet2011}, and \cite{nittis2016}, we finish saying that the author is aware of the limitations imposed by the existence of a faithful normal semifinite trace on the algebra.
	
	Several results have been proved about the type of the algebras in relativistic AQFT in the past decades, showing that, under some physical reasonably assumptions, the algebra of observables of a diamond has to be of type III, see \cite[Section V.6]{haag2012} and \cite[Proposition 3.2]{Buchholz87}.
	
	Fortunately for general von Neumann algebras, either in the Haagerup or in the Araki-Masuda construction, there is a natural trace related with the noncommutative $L_p$-space. This suggests that our ideas can be generalized. In fact, roughly speaking, Haagerup's generalization of noncommutative $L_p$-spaces for general von Neumann algebras uses several identifications between a von Neumann algebra (and other objects related to it) and the crossed product $\nalgebra\rtimes_{\tau^\varphi} \mathbb{R}$, where $\tau^\varphi=\{\tau^{\varphi}_t\}_{t\in\mathbb{R}}$ is the modular automorphism group obtained throughout the faithful normal and semifinite weight $\varphi$. Among these identifications we highlight: (i) the one due to \cite[II, Lemma 1]{terp81}, which says that the mapping $\phi\mapsto \tilde{\phi}$, where $\tilde{\phi}=\hat{\phi}\circ T$ and $\hat{\phi}$ is a natural extension of $\phi$ to $\widehat{\nalgebra}_+$ (the extended positive part of $\nalgebra$) as described in \cite[Proposition 1.10]{Haagerup79I}, is a bijection from the set of all normal semifinite weights on $\nalgebra$ onto the set of all normal semifinite weights $\psi$ on $\nalgebra\rtimes_{\tau^\varphi}\mathbb{R}$ satisfying $\psi \circ \theta_t=\psi \quad \forall t\in \mathbb{R}$; (ii) the one due to \cite[Theorem 1.2]{haagerup79}, which says that
	$$\begin{aligned}
	&\{H_\phi \in \left(\nalgebra\rtimes_{\tau^\varphi} \mathbb{R}\right)_\eta \, | \, \phi \textrm{ is normal and semifinite} \}\\
	&\makebox[\textwidth]{$\displaystyle\hfill =\{H_\phi \in \left(\nalgebra\rtimes_{\tau^\varphi} \mathbb{R}\right)_\eta \, | \, \theta_t H_\phi=e^{-t}H_\phi \},$}\\
	\end{aligned}$$
	where $H_\phi\eta \left(\nalgebra\rtimes_{\tau^\varphi} \mathbb{R}\right)$ is the Radon-Nikodym derivative for the normal semifinite weight $\phi$ on $\nalgebra$ with respect to the trace $\tau$, \ie, $H_\phi$ is the operator affiliated with $\nalgebra\rtimes_{\tau^\varphi} \mathbb{R}$ such that $\tilde{\phi}=\tau_{H_\phi}$. That means that we can consider the state $\phi$ as a state in $\nalgebra\rtimes_{\tau^\varphi}\mathbb{R}$.
	
	It is important to notice that $\nalgebra\rtimes_{\tau^\varphi}\mathbb{R}$ has a natural trace $\tau$, \cite[Lemma 5.2]{Haagerup79II}, which is used in the definition of the noncommutative $L_p$-space, namely, 
	$$\begin{aligned}
	L_p(\nalgebra)&\doteq\left\{H\in\left(\nalgebra\rtimes_{\tau^\varphi} \mathbb{R}\right)_\tau \ \middle| \ \theta_t H=e^{-\frac{t}{p}}H, \ \forall t \in \mathbb{R} \right\}\\
	L_\infty(\nalgebra)&\doteq\left\{H\in\left(\nalgebra\rtimes_{\tau^\varphi} \mathbb{R}\right)_\tau \ \middle| \ \theta_t H=H, \ \forall t \in \mathbb{R} \right\}.\\\end{aligned}$$
	
	The trace $\tau$ is not used to define the norm on these spaces, instead a positive linear function, $tr(\cdot)$ is defined on $L_1(\nalgebra)$. This linear function satisfies a trace-like property $tr(AB)=tr(BA)$, where $A\in L_p(\nalgebra)$ and $B\in L_q(\nalgebra)$ and ${\frac{1}{p}+\frac{1}{q}=1}$.
	
	In addition, it also easy to see that there is a natural inclusion of the operators affiliated with $\nalgebra$ in the operators affiliated with $\nalgebra\rtimes_{\tau^\varphi}\mathbb{R}$, because of the spectral decomposition.
	
	Again, we stress that, if in one hand the development presented here is not appropriated to deal with perturbations of KMS states in Relativist Algebraic Quantum Field Theory, it is a natural framework to deal with Statistical Mechanics, Linear Response and Information Theory.

	\fi
	

\printbibliography
	
\end{document}